\newcommand{\sr}{rule \textbf{(SR)}\xspace}
\newcommand{\rnext}{rule \textbf{(Next)}\xspace}
\newcommand{\eun}{Rule \textbf{(ER1)}\xspace}
\newcommand{\edeux}{Rule \textbf{(ER2)}\xspace}
\newcommand{\atl}{\textbf{\textit{ATL}}\xspace}
\newcommand{\dlangle}{\langle\!\langle}
\newcommand{\drangle}{\rangle\!\rangle}
\newcommand{\rond}{\bigcirc}
\newcommand{\diaA}{\dlangle A\drangle}
\newcommand{\ddiaA}{\dlangle A\drangle\!\bigcirc\!\dlangle A\drangle}
\newcommand{\diaAn}{\dlangle A\drangle\!\bigcirc\!}
\newcommand{\crochetA}{[\![A]\!]}
\newcommand{\dcrochetA}{[\![A]\!]\!\bigcirc\![\![A]\!]}
\newcommand{\crochetAn}{[\![A]\!]\!\bigcirc\!}
\newcommand{\diams}[1]{\dlangle #1\drangle}
\newcommand{\diamsn}[1]{\dlangle #1\drangle\!\bigcirc\!}
\newcommand{\crochet}[1]{[\![#1]\!]}
\newcommand{\crochetn}[1]{[\![#1]\!]\!\bigcirc\!}
\newcommand{\ddiams}[1]{\dlangle #1\drangle\!\bigcirc\!\dlangle #1\drangle}
\newcommand{\dcrochet}[1]{[\![#1]\!]\!\bigcirc\![\![#1]\!]}
\newcommand{\until}{\,\mathcal{U}}
\newcommand{\fonctA}[2]{Real(#1,#2)}
\newcommand{\fg}{\mathsf{dec}}
\newcommand{\g}[1]{\mathsf{dec}(#1)}
\newcommand{\initf}{\eta}
\newcommand{\pretab}{\mathcal{P}^{\initf}}
\newcommand{\tabb}[1]{\mathcal{T}^{\initf}_{#1}}
\newcommand{\taba}[2]{\mathcal{T}^{#1}_{#2}}
\newcommand{\tabfin}[1]{\mathcal{T}^{#1}}
\newcommand{\settab}[1]{S^{\initf}_{#1}}
\newcommand{\m}{\mathcal{M}}
\newcommand{\mr}{\mathcal{R}}
\newcommand{\hint}{\mathcal{H}_{\initf}}
\newcommand{\treeW}{\mathcal{W}}
\newcommand{\grid}{\mathcal{F}}
\newcommand{\vectorsatstate}[1]{\dd_\mathbb{A}(#1)}
\newcommand{\vectorsCoalatstate}[2]{\dd_{#1}(#2)}
\newcommand{\Acovectsatstate}[1]{\dd^c_A(#1)}
\newcommand{\ensCoVect}[2]{\dd^c_{#1}(#2)}
\newcommand{\covect}[1]{\sigma^c_{#1}}
\newcommand{\Hmodel}{\ensuremath{\mathcal{H}}\xspace}
\newtheorem{corol}{Corollary}
\newcommand{\agents}{\ensuremath{\mathbb{A}}\xspace}
\newcommand{\Actions}{\ensuremath{\mathsf{Act}}\xspace}
\newcommand{\out}{\ensuremath{\mathsf{out}\xspace}}
\newcommand{\outm}{\ensuremath{OutPath\xspace}}
\newcommand{\Out}{\ensuremath{\mathsf{Out}\xspace}}
\newcommand{\stat}{\ensuremath{\mathsf{St}\xspace}}
\newcommand{\move}[1]{\ensuremath{\sigma_{\hspace{-2.0pt}#1}}}
\newcommand{\moveAll}{\move{\agents}}
\newcommand{\strat}{\ensuremath{\sigma}\xspace}
\newcommand{\Plays}{\ensuremath{\mathsf{Plays}}\xspace}
\newcommand{\vect}{\sigma}
\newcommand{\coop}[2][]{\langle\!\langle{#2}\rangle\!\rangle_{_{\!\mathit{#1}}}}
\newcommand{\atlall}[1][]{\mathsf{A}}
\newcommand{\atlexists}[1][]{\mathsf{E}}
\newcommand{\Sometm}[1][]{\mathbf{F}}
\newcommand{\Always}[1][]{\mathbf{G}}
\newcommand{\Until}[1][]{\mathbf{U}}
\newcommand{\Next}[1][]{\mathbf{X}}
\newcommand{\atla}{\varphi}
\newcommand{\atlsa}{\Phi}
\newcommand{\atlsb}{\Psi}
\newcommand{\De}{\ensuremath{\Delta}}
\newcommand{\Ga}{\ensuremath{\Gamma}}
\newcommand{\Logicname}[1]{\ensuremath{\mathsf{#1}}}
\newcommand{\LTL}{\Logicname{LTL}\xspace}
\newcommand{\CTLs}{\Logicname{CTL^*}\xspace}
\newcommand{\ATL}{\Logicname{ATL}\xspace}
\newcommand{\ATLs}{\Logicname{ATL^*}\xspace}
\newcommand{\CTLp}{\Logicname{CTL^{+}}\xspace}
\newcommand{\ATLp}{\Logicname{ATL^{+}}\xspace}
\newcommand{\dd}{\ensuremath{\mathsf{act}\xspace}}
\newcommand{\agi}{\ensuremath{\mathsf{i}\xspace}}
\newcommand{\aga}{\ensuremath{\mathsf{a}\xspace}}
\newcommand{\acta}{\ensuremath{\alpha\xspace}}
\newcommand{\one}{\ensuremath{\mathsf{1}\xspace}}
\newcommand{\agk}{\ensuremath{\mathsf{k}\xspace}}
\newcommand{\cgs}{\ensuremath{\mathcal{S}\xspace}}
\newcommand{\cgm}{\ensuremath{\mathcal{M}\xspace}}
\newcommand{\keyterm}[1]{\emph{#1}}
\newcommand{\set}[1]{\{{#1}\}}
\newcommand{\powerset}[1]{\mathcal{P}({#1})}
\newcommand{\lab}{\mathsf{L}} 
\newcommand{\prop}{\ensuremath{\mathsf{Prop}}\xspace}   
\renewcommand{\vec}[1]{\mathbf{#1}}
\newcommand{\bigpaper}[1]{}
\newcommand{\exptime}{\ensuremath{\mathrm{EXPTIME}\xspace}}
\newcommand{\twoexptime}{\ensuremath{\mathrm{2EXPTIME}\xspace}}
\renewcommand{\atl}{\ATL}
\newcommand{\ctls}{\Logicname{CTL^{*}}\xspace}
\newcommand{\R}{\,\mathcal{R}}
\newcommand{\st}[1]{\ensuremath{\mathbf{states}(#1)}}
\newcommand{\plays}{\ensuremath{\mathsf{Plays}\xspace}}
\newcommand{\hists}{\ensuremath{\mathsf{Hist}\xspace}}
\newcommand{\strats}{\ensuremath{\mathsf{Strat}\xspace}}
\newcommand{\last}{\ensuremath{\mathrm{last}\xspace}}
\renewcommand{\strat}{\ensuremath{F}\xspace}
\renewcommand{\acta}{\ensuremath{\sigma\xspace}}
\newcommand{\ATLps}{\Logicname{ATL^{+}_{s}}\xspace}
\newcommand{\ATLpp}{\Logicname{ATL^{+}_{p}}\xspace}
\newcommand{\gcomp}{\gamma}
\newcommand{\co}{\textsf{co}}
  \theoremstyle{plain}
\newtheorem{theorem}{Theorem}[section]
\theoremstyle{plain}
\newtheorem{lemma}{Lemma}[section]
\theoremstyle{plain}
\theoremstyle{definition}
\newtheorem{definition}{Definition}[section]
\theoremstyle{remark}
\theoremstyle{remark}
\newtheorem{remark}{Remark}[section]
\theoremstyle{plain}
\newtheorem{corollary}{Corollary}[section]
\theoremstyle{plain}
\newtheorem{proposition}{Proposition}[section]
\theoremstyle{plain}
\theoremstyle{definition}
\newtheorem{example}{Example}[section]
\title{Optimal Tableau Method for Constructive Satisfiability Testing and Model Synthesis in the Alternating-time Temporal Logic \ATLp}
\author{SERENELLA CERRITO - AMELIE DAVID \\
	Laboratoire IBISC - Universit\'e Evry Val-d'Essonne, France \\
	VALENTIN GORANKO \\
	Department of Philosophy, Stockholm University \\ and Department of
	Mathematics, University of Johannesburg \\ (visiting professor)
}
\begin{document}
\maketitle


\begin{abstract} 
We develop a sound, complete and practically implementable tableau-based decision method for constructive satisfiability testing and model synthesis for the fragment \ATLp of the full Alternating time
 temporal logic \ATLs. The method extends in an essential way a previously  developed tableau-based decision method for \ATL and works in 2EXPTIME, which is the optimal worst-case complexity of the satisfiability problem for \ATLp. We also discuss how suitable parameterizations and syntactic restrictions on the class of input \ATLp formulae can reduce the complexity of the satisfiability problem.
\end{abstract}



keywords: alternating-time temporal logics, ATL+, decision procedure, model synthesis, satisfiability, tableaux



\section{Introduction}
\label{sec:intro}

The Alternating-time temporal logic \ATLs was introduced and studied in~\cite{AHK02} as a multi-agent extension of the branching time temporal logic \ctls, where the path quantifiers are generalized to ``strategic quantifiers'', indexed with coalitions of agents $A$ and ranging existentially over collective strategies of $A$ and then universally over all paths (computations) enabled by the selected collective strategy of $A$. \ATLs was proposed as
 logical framework for specification and verification of properties of open systems modelled as concurrent game models, in which all agents effect state transitions collectively, by taking simultaneous actions
 at each state. The language of \ATLs allows expressing statements of the type 
\textit{``Coalition $A$ has a collective strategy to guarantee the satisfaction of the objective $\atlsa$ on every play enabled by that strategy''}.
The syntactic fragment \atl of \ATLs allows only state formulae, where all occurrences of temporal operators must be immediately preceded by strategic quantifiers. The fragment \ATLp of \ATLs extends \atl by allowing any Boolean combinations of \ATL objectives in the scope of a strategic quantifier. It is considerably more expressive than \ATL, which is reflected in the high -- \twoexptime\ -- worst-case
 complexity lower bound of the satisfiability problem for \ATLp  (inherited from the lower bound for \CTLp, see \cite{Johannsen&Lange2003}) as opposed to the \exptime-completeness of the satisfiability problem for \ATL  \cite{vanDrimmelen03,WLWW06}. 
The matching 
\twoexptime\ upper bound is provided by the automata-based method for deciding satisfiability in the full \ATLs, developed in \cite{Schewe08}.

\medskip
The contribution of this paper is the development of a sound, complete and terminating tableau-based decision method for constructive satisfiability testing of \ATLp formulae. 
We also claim that our approach is intuitive and conceptually simple, as well as practically implementable and even manually usable, despite the inherently high worst-case complexity of the problem. 
The tableau method presented here is based on the general methodology going back to \cite{Pratt80} (for PDL), \cite{Wolper85} (for LTL) and \cite{BPM83,EmHal85} (for CTL), further adapted for \ATL in~\cite{GorankoShkatov09ToCL} to which the reader is referred for more details.
A recent implementation of such a method is reported in \cite{DBLP:conf/tableaux/David13}. 
The tableau method for \ATLp is an essential extension of the one for \ATL, as it has to deal with much more complex (and computationally expensive) path objectives that can be assigned to the agents. 
It is also rather different from the above mentioned automata-based method in \cite{Schewe08}.

\medskip
The paper is structured as follows. 
In Section \ref{sec:prelim} we offer brief technical preliminaries on concurrent game models, syntax and semantics of \ATLs and \ATLp. 
Section \ref{sec:decomp} develops the technical machinery needed for the presentation of the tableau method itself in Section \ref{sec:tab}. 
In Section \ref{sec:soundness} we prove the soundness of the tableau method, whereas in Section \ref{sec:compl} we prove its completeness and demonstrate with examples how satisfying models can be extracted from the final open tableau. 
We also estimate the worst-case complexity of the procedure. In Section \ref{sec:concluding} we offer a brief comparison with the automata-based method in \cite{Schewe08}.

\section{Preliminaries}
\label{sec:prelim}

We assume that the reader has basic familiarity with the branching time logic \ctls, 
see e.g. \cite{Emerson90}. 
Also, basic knowledge on \ATLs ~\cite{AHK02} and the tableaux-based decision procedure for \ATL in \cite{GorankoShkatov09ToCL}, on which this paper builds, would be beneficial. 

\subsection{Concurrent game models, strategies and co-strategies}
\label{subsec:CGM}

For technical reasons that will become clearer later in the soundness and the completeness proofs, we define a more general, \emph{non-deterministic} version of the concurrent game structure with respect to~\cite{AHK02}. For the moment, we can say that the basic idea is avoiding several definitions of the notion of Realization Witness Trees for very similar structures (models, tableaux and Hintikka structures). Note that the very notion of tableau will be defined as a non-deterministic labelled CGS (see the beginning of Section \ref{sec:tab}). 

\medskip
Notation: given a set $X$, we denote the power set of $X$ by $\powerset{X}$.

\begin{definition}
A \keyterm{(non-deterministic) concurrent game structure} (CGS) is a tuple 
\[\cgs = (\agents, \stat, \{\Actions_\aga\}_{\aga\in\agents}, \{\dd_{\aga}\}_{\aga\in\agents}, \out)\]
comprising: 

\begin{itemize}
\itemsep = 0pt
\item a finite, non-empty set of  \keyterm{players (agents)} $\agents=\{1,\dots, k\}$ 
\item a non-empty set of \keyterm{states} $\stat$,
\item a set of actions $\Actions_{\aga}\neq \emptyset$ for each $\aga\in\agents$. 

For any $A\subseteq\agents$ we denote $\Actions_A:=\prod_{\aga\in A}\Actions_{\aga}$ and use $\move{A}$ to denote a tuple from $\Actions_A$.
In particular,  $\Actions_\agents$ is the set of all possible  \keyterm{action profiles} in 
$\cgs$.
\item for each $\aga\in\agents$, a map $\dd_{\aga} : \stat \rightarrow \powerset{\Actions_{\aga}} \setminus \{ \emptyset \}$ defining for each state $s$ the actions available to $\aga$ at $s$, 
\item
a \keyterm{transition relation} $\out \subseteq \stat\times\Actions_\agents \times \stat $. 

Whenever $\langle s, \sigma_\agents, s' \rangle \in \out$, for $\moveAll = \langle\acta_{\one}, \dots, \acta_{\agk}\rangle$, then $\acta_{\aga} \in \dd_{\aga}(s)$ for every $\aga \in \agents$. 
Given a pair  $\langle s, \moveAll \rangle $, 
the set of states $s' \in \stat$ such that  $\langle s, \sigma_\agents, s' \rangle \in \out$ is denoted $\out(s,\vec{\acta}_{\agents})$ and called
the set of \keyterm{successor (outcome) states of 
$\vec{\acta}_{\agents }$ at $s$}. 

When $\out(s,\vec{\acta}_{\agents})$ is a singleton, the CGS is said to be  \keyterm{deterministic}. 
In such cases, by a slight abuse of notation we will use $\out(s,\vec{\acta}_{\agents})$ to denote a state $s'$ rather than the singleton $\{s' \}$. 
\end{itemize}
\end{definition}

\begin{definition}
\label{def:CGS}
\begin{enumerate}
\item Given a set of formulae (of some language) $\Theta$, a CGS $\cgs$ with a state space $\stat$ is \keyterm{state-labelled by $\Theta$}  if there is a mapping $l: \stat \to  \powerset{\Theta}$ assigning to every state in $\cgs$ a set of formulae from $\Theta$, called the \keyterm{label} of that state. 

\item  A \keyterm{concurrent game model}  (CGM) 
is a \emph{deterministic} CGS state-labelled by a fixed set of atomic propositions $\prop$, i.e., a tuple \\$\cgm = (\agents, \stat, \{\Actions_\aga\}_{\aga\in\agents}, \{\dd_{\aga}\}_{\aga\in\agents}, \out,\prop,\lab)$ where
\begin{itemize}
\item $(\agents, \stat, \{\Actions_\aga\}_{\aga\in\agents}, \{\dd_{\aga}\}_{\aga\in\agents}, \out)$ is a deterministic CGS, 
\item  $\prop$ is a set of \emph{atomic propositions}, and
\item $\lab:\stat\rightarrow \powerset{\prop}$ is a \emph{(state-)labelling function}.
\end{itemize}
\end{enumerate}
\end{definition}

Concurrent game models represent multi-agent discrete transition systems that function as follows. At any moment the system is in a given state, where each agent selects an action from those available to him at that state. 
All agents execute their actions synchronously and the combination of these actions together with the current state determines a transition to a unique successor state in the model. 
A \emph{play} in a CGM is an infinite sequence of subsequent successor states, i.e., an infinite sequence $s_0 s_1 ... \in \stat^\omega$ of states such that for each $i \ge 0$ there exists an action profile $\vec{\acta}_{\agents} = \langle\acta_{\one}, \dots, \acta_{\agk}\rangle$  such that $\out(s_i,\vec{\acta}_{\agents}) = s_{i+1}$. A \emph{history} is a finite prefix of a play.  
We denote by $\plays_\cgm$ and $\hists_\cgm$ respectively the set of plays and set of histories in a CGM $\cgm$. 
For a state $s \in \stat$ we define $\plays_\cgm(s)$ and $\hists_\cgm(s)$ as the set of plays and set of histories with initial state $s$. 
Given a sequence of states  $\lambda$, we denote by $\lambda_0$ its initial state, by $\lambda_i$ its $(i+1)$th state, by $\lambda_{\leq i}$ the prefix $\lambda_0 ... \lambda_i$ of $\lambda$ and by $\lambda_{\ge i}$ the suffix $\lambda_i \lambda_{i+1} ...$ of $\lambda$. When $\lambda = \lambda_0 ... \lambda_\ell$ is finite, we say that it has length $\ell$ and write $|\lambda| = \ell$. 
Further, we put $\last(\lambda) = \lambda_\ell$. 

For any coalition $A \subseteq \agents$, a given CGM $\cgm$ and state $s\in \stat$, an $A$-\emph{co-action} 
 at $s$ in $\cgm$  is a mapping $ \Actions^{c}_{A}:  \Actions_{A} \to \Actions_{\agents\setminus A}$ that assigns to every collective action of $A$ at the state $s$ a collective action at $s$ for the
 complementary coalition $\agents\setminus A$. 

We use  $\vectorsCoalatstate{A}{s}$ 
to denote the set of all A-actions that can be played by the coalition $A$ at state $s$, i.e. $\vectorsCoalatstate{A}{s} = \Pi_{\aga \in A} \vectorsCoalatstate{a}{s}$. 
We also use $\ensCoVect{A}{s}$ to denote the set of all $A$-co-actions available at state $s$ 
and $\covect{A}$ for an element of this set.

A \emph{(perfect recall) strategy} for an agent $\aga$ in $\cgm$ is a mapping $\strat_\aga: \hists_\cgm \rightarrow \Actions_{\aga}$ such that for all $h \in \hists_\cgm$ we have $\strat_\aga(h) \in \dd_{\aga}(\last(h))$. 
Intuitively, it assigns an admissible action for agent $\aga$ after any history $h$ of the game. We denote by $\strats_\cgm(\aga)$ the set of all strategies of agent $\aga$. 
A (collective) strategy of a set (\emph{coalition}) of agents $A \subseteq \agents$ is a  tuple 
$(\strat_\aga)_{\aga \in A}$ of strategies,
 one for each agent in $A$. When $A = \agents$ this is called a \emph{strategy profile}. We denote by $\strats_\cgm(A)$ the set of collective strategies of coalition $A$. 
A play $\lambda \in \plays_\cgm$ is \emph{consistent with a collective strategy $\strat_A \in \strats_\cgm(A)$} if for every $i \ge 0$ there exists an action profile 
$\vec{\acta}_{\agents} = \langle\acta_{\one}, \dots, \acta_{\agk}\rangle$ such that $\out(\lambda_i,\vec{\acta}_{\agents}) = \lambda_{i+1}$ and $\acta_\aga = \strat_\aga(\lambda_{\leq i})$ for all $\aga \in A$. 
The set of plays with initial state $s$ that are consistent with $\strat_A$ is denoted $\plays_\cgm(s,\strat_A)$.

  Likewise, a \emph{(perfect-recall)} $A$-\emph{co-strategy} in $\cgm$ for a coalition of agents $A$ (possibly reduced to just one agent $\aga$) is a mapping $\strat_{\agents\setminus A}:\hists_\cgm \times \strats_\cgm(A) \to \Actions_{\agents\setminus A}$  that assigns to each $h \in \hists_\cgm$ and every collective strategy $\strat_A \in \strats_\cgm(A)$ an $A$-co-action $\strat_{\agents\setminus A}(h,\strat_A) \in \ensCoVect{A}{last(h)}$.

\subsection{The logic ATL* and fragments}
\label{subsec:ATL}
  
The logic \ATLs is a multi-agent extension of \CTLs with \emph{strategic quantifiers} $\coop{A}$ indexed with coalitions $A$ of agents. There are two types of formulae in \ATLs: \emph{state formulae}, that
 are evaluated at states, and \emph{path formulae}, that are evaluated on plays. To simplify the presentation we will work with formulae in negation normal form over a fixed set  $\prop$ of atomic
 propositions and primitive temporal operators \textit{Always} $ \Box$ and \textit{Until} $ \until$. The syntax of the full language \ATLs and its fragments \ATLp and \ATL can then be defined as follows, where 
$ l \in 
\prop \cup \{ \lnot p \mid p \in \prop\}$ is a literal,   
$\agents$ is a fixed finite set of agents and $A\subseteq \agents$:
\begin{eqnarray}
\keyterm{State formulae:} \ \varphi := & l  \mid &
 (\varphi \vee \varphi) \mid  (\varphi \wedge \varphi) \mid \diaA\Phi \mid \crochetA\Phi\\
\keyterm{\ATLs-path formulae:} \ \Phi := & \varphi \mid & \rond\Phi \mid \Box\Phi \mid 
(\Phi \until \Phi) \mid (\Phi\vee\Phi) \mid  (\Phi \wedge \Phi) \\
\keyterm{\ATLp-path formulae:}\ \Phi := & \varphi \mid & \rond\varphi \mid \Box\varphi \mid 
(\varphi \until \varphi)  \mid (\Phi\vee\Phi) \mid  (\Phi \wedge \Phi) \\ 
\keyterm{\ATL-path formulae:} \ \Phi := & & \rond\varphi \mid \Box\varphi \mid 
(\varphi \until \varphi) 
\end{eqnarray}

Note that the state formulae have the same definition but define different sets in all 3 cases. 
To keep the notation lighter, we will list the members of the set $A$  in $\coop{A}$  without using $\{ \}$. When the length of a formula is measured, $A$ will be assumed given by a bit vector.  Parentheses will be omitted whenever safe, but they will be important when conjunctions and disjunctions are composed.

Hereafter, we use $\varphi$, $\psi$, $\initf$ to denote arbitrary state formulae and $\Phi$, $\Psi$ to denote path formulae. By an \ATLp formula we will mean by default a  \emph{state} formula of \ATLp; likewise for \atl.  
We define $\top := p \lor \lnot p$, $\bot :=\lnot \top$ and the temporal operators \textit{Sometime} $\Diamond$ by 
$\Diamond\varphi := \top\until\varphi$ and \textit{Release} $\R$ by $\psi \R \varphi := \Box \varphi \lor \varphi \until (\varphi \land \psi)$. Note, that 
$\diaA\psi \R \varphi$ and $\crochetA\psi \R \varphi$ are  \ATLp state formulae. 

\CTLs can be regarded as the fragment of \ATLs where $\coop{\emptyset}$ represents the path quantifier $\forall$ 
and $\coop{\agents}$ represents $\exists$.
The semantics of \ATLs (inherited by \ATLp) is defined in a given CGM $\m$, state $s \in \m$ and a path  $\lambda$ in $\m$ just like the semantics of \ctls, with the added clauses for the strategic quantifiers: 
 \begin{itemize}
 \item $\m,s \models p$, for any proposition $p \in \prop$, iff $p \in L(s)$.
 \item $\m,s \models \neg p$ iff $\m,s \not\models p$.
 \item $\m,s \models \varphi \wedge \psi$ iff $\m,s \models \varphi$ and $\m,s \models \psi$.
 \item $\m,s \models \varphi \vee \psi$ iff $\m,s \models \varphi$ or $\m,s \models \psi$.
 \item $\m,s \models \diaA \Phi$ iff there exists an $A$-strategy $F_A$ such that, for all computations $\lambda$ consistent with $\strat_A$, $\m, \lambda \models \Phi$.
 \item $\m,s \models \crochetA \Phi$ iff there exists an $A$-co-strategy $\strat^c_A$ such that, for all computations $\lambda$ consistent with $\strat^c_A$,  $\m, \lambda \models \Phi$
 \item $\m, \lambda \models \varphi$ iff $\m, \lambda_0 \models \varphi$. 
 \item $\m, \lambda \models \rond \varphi$ iff $\m, \lambda_{\geq 1} \models \varphi$. 
 \item $\m, \lambda \models \Box\varphi$ iff for all positions $i \geqslant 0$, $\m, \lambda_{\geq i} \models \varphi$.
 \item $\m, \lambda \models \varphi \until \psi$ iff there exists a position $i \geqslant 0$ where $\m, \lambda_{\geq i} \models \psi$ and for all positions $0 \leqslant j < i$, $\m, \lambda_{\geq j} \models \varphi$.
 \item $\m, \lambda \models \Phi \wedge \Psi$ iff $\m, \lambda \models \Phi$ and $\m, \lambda \models \Psi$.
 \item $\m, \lambda \models \Phi \vee \Psi$ iff $\m, \lambda \models \Phi$ or $\m, \lambda \models \Psi$.
\end{itemize}

Valid, satisfiable and equivalent formulae in \ATLs are defined as usual. 
Here are some important equivalences in \LTL \cite{Emerson90} and in \ATLs \cite{AHK02,GorDrim06}, used further:  
\begin{itemize}
\item\label{eqLTL} $ \Box \atlsb    \equiv \atlsb \land \rond \Box \atlsb$; \ 
$\atlsa \until  \atlsb \equiv \atlsb \lor ( \atlsa \land \rond (\atlsa \until  \atlsb))$;  

\item\label{eqATL-<>} $\diaA \Box \atlsb    \equiv    \atlsb \land \ddiaA\Box \atlsb$; \ \label{eqATL-U} $\diaA \atlsa \until  \atlsb    \equiv    \atlsb \lor ( \atlsa \land \ddiaA\atlsa \until  \atlsb)$; \ 

\item\label{eqATL-[]} 
$\crochetA \Box \atlsb    \equiv    \atlsb \land \dcrochetA\Box \atlsb$; 
$\crochetA \atlsa \until  \atlsb    \equiv    \atlsb \lor ( \atlsa \land \dcrochetA \atlsa \until  \atlsb)$;

\item\label{eqATLs} 
$\crochetn{\agents}\varphi \equiv 
\neg \diamsn{\agents}\neg \varphi 
 \equiv
\diamsn{\emptyset}\varphi$; \ 
$\diams{A} \diams{B}  \atlsa \equiv  \diams{B}  \atlsa$; 

\item\label{eqATL-st} For every state formula $\atla$: $\diams{A} (\atla \land \atlsb) \equiv  \atla \land \diams{A} \atlsb$,   $\diams{A} (\atla \lor \atlsb) \equiv  \atla \lor \diams{A} \atlsb$. 
\end{itemize}

\begin{remark} It is known \cite{AHK02} that, when restricted to  \atl formulae,  the semantics above (based on perfect-recall strategies) is equivalent to the semantics based on positional (or memoryless)
 strategies, where the prescribed actions only depend on the current state, not on the whole history. This is no longer the case for \ATLp. For example, the formula  $\diams{1}\Diamond (p \wedge \diams{1} \Diamond q) \to \diams{1}(\Diamond p \wedge \Diamond q)$ in a 2-agents language is valid in the semantics with perfect-recall strategies (which can be freely composed) but not in the semantics with positional strategies (which cannot be freely composed). 
Indeed, in the concurrent game model of Figure \ref{CGS_strategy}, the antecedent of the above implication, namely $\diams{1}\Diamond (p \wedge \diams{1} \Diamond q)$,  is true at state $s_0$ no matter what strategy --  perfect-recall or positional -- is considered, whereas the consequent, namely  $\diams{1}(\Diamond p \wedge \Diamond q)$,
is true at  $s_0$ only with respect to perfect-recall strategies. 
To be more precise, with respect to the state $S_0$ only two cases of memoryless 
strategy $F$ for player 1 are possible: $F(S_0)= a$ and $F(S_0)= b$. Since the strategy is positional, these actions would be applied every time the play reaches $S_0$, and neither of them guarantees that the play will eventually visit both a state satisfying $p$ and a state satisfying $q$. On the other hand, a perfect-recall strategy  $F$  such that $F(S_0) = a$ and $F(S_0S_1S_0)=b$ guarantees the satisfaction of the objective $\Diamond p \wedge \Diamond q$.

\begin{figure}[ht]
 \centering

\noindent\scalebox{0.8}{
\begin{tikzpicture}[node distance=3cm]

  \tikzstyle{etat}=[text badly centered,circle,draw]
	\tikzstyle{lalabel}=[node distance=0.7cm]
  \tikzstyle{fleche}=[->,>=latex]

  \node [xshift=-2cm] (M) {$\m$};
  \node[etat] (S0) {$S_0$}; \node  [lalabel] (l0) [above of=S0] {$\emptyset$} ;
  \node[etat] (S1) [right of=S0] {$S_1$} ; \node [lalabel] (l1) [above of=S1] {$\{p\}$};,
	\node[etat] (S2) [below of=S1] {$S_2$} ; \node [lalabel] (l2) [right of=S2] {$\{p\}$};
	\node[etat] (S3) [below of=S0] {$S_3$} ; \node [lalabel] (l3) [left of=S3] {$\{q\}$};
	

\path[fleche]
    (S0) edge [bend left] node [above] {$a,a$} (S1)
	       edge node [left,text width=0.5cm,yshift=0.2cm] {$b,a$ $b,b$}(S3)
				edge node [right] {$a,b$} (S2)
		(S1) edge [bend left] node [below] {$a,a$} (S0)
		(S2) edge node [above] {$a,a$} (S3)
		(S3) edge [loop below] node [below] {$a,a$} (S3);
\end{tikzpicture}
}
 \caption{A CGM}
 \label{CGS_strategy}
\end{figure}
\end{remark}
Here we assume that the semantics is based on  perfect-recall strategies.

The \emph{(constructive) satisfiability decision problem} for \ATLp is defined as follows:  

\textsf{Given a state formula $\atla$ in $\ATLp$, does there exist a CGM $\cgm$ and a state $s$ in $\cgm$ such that $\cgm, s \models \atla$? If so, construct such a satisfying pair $(\cgm,s)$.} 

\begin{remark} There are three variants of the satisfiability problem: \emph{tight}, where it is assumed that all agents in the model are mentioned in the formula, \emph{loose} where just one additional agent, not mentioned in the formula is allowed in the model, and \emph{general}, where any number of additional agents, not mentioned in the formula, are allowed in the model.  
These variants are really different, but the general satisfiability is immediately reducible to the loose satisfiability, by adding just one extra agent $\aga$ to the language.
 Furthermore, this extra agent can be easily added superfluously to the formula, e.g., by adding a conjunct $\diamsn{\aga}\top$, thus reducing loose to tight satisfiability. So, hereafter we only consider the tight satisfiability version. 
For further details and discussion on this issue, see e.g., \cite{WLWW06,GorankoShkatov09ToCL}.
\end{remark}

\section{Decomposition and closure of \ATLp formulae}
\label{sec:decomp}

We partition the set of \ATLp formulae into \keyterm{primitive} and \keyterm{non-primitive} formulae. The primitive formulae are $\top,\bot$, the literals
and all  \ATLp \emph{successor formulae}, of the form $\diaA\rond\psi$ or $\crochet{A'}\rond\psi$, where $A \subseteq\agents$ and $A'\subset \agents$, 
each with \emph{successor component} $\psi$. 
The non-primitive formulae are classified as $\alpha$-, $\beta$- and $\gamma$-formulae.
An $\alpha$-formula in our syntax is a conjunction $\varphi \land \psi$ with (conjunctive) \emph{$\alpha$-components} $\varphi$ and $\psi$, plus the formulae of the form $\crochetn{\agents}\psi$ whose $\alpha$-components are both $\diamsn{\emptyset}\psi$; 
a $\beta$-formula is a disjunction $\varphi \lor \psi$ with (disjunctive) \emph{$\beta$-components} $\varphi$ and $\psi$. 
The rest of the non-primitive formulae are classified as $\gamma$-formulae. 
That is,  a $\gamma$-formula is one of the form $\crochet{A} \Phi$ or 
$\diaA \Phi$, where $\Phi$ is an \ATLp path formula whose main operator is not $\rond$ and $A\neq\agents$.

The need of introducing the new category of $\gamma$-formulae, w.r.t. the partition of non-primitive formulae into $\alpha$- and  $\beta$- classes done in  
\cite{GorankoShkatov09ToCL} is the following. In ATL each strategic quantifier is necessarily followed by a temporal operator, and, for instance $\diaA \Box \varphi$ can be seen as an $\alpha$-formula while $\diaA \varphi \until \psi$ can be  seen as  a $\beta$-formula. However, typical state formulae in \ATLp have the form $\diaA (\Phi_1 \vee \Phi_2)$, $\diaA (\Phi_1 \wedge \Phi_2)$, $\crochetA (\Phi_1 \wedge \Phi_2)$,$\crochetA (\Phi_1 \vee \Phi_2)$. Now, these four types of formulae cannot reasonably be classified as $\alpha$- or  $\beta$- formulae. Note, in particular, that the strategic quantifier $\diaA$ in general distributes neither on $\vee$ nor on $\wedge$ and the same applies to $ \crochetA$.   Thus, a new
category of $\gamma$-formulae is created, containing also $\diaA \Box \varphi$ and  $\diaA \varphi \until \psi$ as special cases, and needing a special analysis.

Thus $\alpha$- and $\beta$-formulae will be decomposed in the tableau as usual, while the case of $\gamma$-formulae $\diaA \Phi$ and $\crochetA \Phi$ is special and needs extra work, because their tableau decomposition will depend on the structure of $\Phi$. 

\subsection{$\gamma$-decomposition and $\gamma$-components of $\gamma$-formulae}

We denote the set of \ATLp state formulae by \ATLps and the set of \ATLp path formulae by \ATLpp. We will define a \emph{$\gamma$-decomposition function}  
$\fg: \ATLpp \to {\cal P}(\ATLps \times \ATLpp)$ with the following intuitive meaning: 
for any $\Phi\in  \ATLpp$ and pair $\langle \psi, \Psi \rangle \in \fg(\Phi)$, $\psi$ is a state formula true at the current state and $\Psi$ is a path formula expressing what must be true at the next state of a possible play starting at the current state.
Thus, the set $\fg(\Phi)$ is interpreted as a disjunction describing all possible `types of paths' starting from the current state and satisfying $\Phi$.

We emphasize that, although the domain of $\fg$ is the whole set
$\ATLpp$, $\fg$ will only be used to analyse $\Phi$ in the contexts $\diaA \Phi$ and $\crochetA \Phi$ (where $\Phi$ does not have $\rond$ as main connective), and, as we will see, its role is just auxiliary to the rewriting of the (always quantified) $\gamma$-formulae in a special form useful to obtain a key 
ty of our tableau calculus (see further Lemma \ref{lem:gamma}). 

\smallskip
\noindent \textit{Base cases:}

$\star$ \ $\g{\varphi} = \{\langle \varphi, \top \rangle\}$, $\g{\rond\varphi} = \{\langle \top,\varphi \rangle\}$ for any \ATLp state formula $\varphi$. 

The other base cases derive from the well-known LTL equivalences listed in \ref{eqLTL}: 

$\star$ \ $\g{\Box\varphi} = \{\langle \varphi,\Box\varphi\rangle \}$ 

$\star$ \ $\g{\varphi\until\psi} = \{\langle \varphi,\varphi\until\psi\rangle,\langle \psi,\top \rangle \}$.

\smallskip
\noindent\textit{Recursive steps:}

$\star$ \ 
 $\fg(\Phi_1 \wedge\Phi_2)= \fg(\Phi_1) \otimes \fg(\Phi_2)$, where \\
$\fg(\Phi_1) \otimes \fg(\Phi_2)$ :=
$\{ \langle \psi_i \wedge \psi_ j, \Psi_i \wedge \Psi_j \rangle  \; \mid \; \langle \psi_i, \Psi_i \rangle \in \fg(\Phi_1),  \langle \psi_j, \Psi_j \rangle \in \fg(\Phi_2)
\}$. 

\smallskip
$\star$ \ 
 $\fg(\Phi_1 \vee \Phi_2) = \fg(\Phi_1) \cup  \fg(\Phi_2) \cup (\fg(\Phi_1) \oplus \fg(\Phi_2))$, where \\
$\fg(\Phi_1) \oplus \fg(\Phi_2) :=$ \\
$\{ \langle \psi_i \wedge \psi_ j, \Psi_i \vee \Psi_j \rangle  \; \mid \; \langle \psi_i, \Psi_i \rangle \in \fg(\Phi_1), \;  \langle \psi_j, \Psi_j \rangle \in \fg(\Phi_2 ), \; \Psi_i \not = \top,  \Psi_j \not = \top 
\}$.

Note that the operations $\otimes$ and $\oplus$ are associative, up to logical equivalence.

\smallskip
The conjunctive case should be clear: every path satisfying $\Phi_1 \wedge\Phi_2$ combines  
a type of path satisfying $\Phi_1$ with a type of path satisfying $\Phi_2$.  
To understand the disjunctive case, note that, as it will be seen in Section  \ref{sec:tab}, the construction of the tableau is step-by-step.
Therefore, for a given prestate under construction, when we have a formula of the form $\diaA(\Phi_1\lor\Phi_2)$, where, for instance $\Phi_1=\Box\varphi_1$ and $\Phi_2 =  \Box\varphi_2$, we do not know in advance which of $\Box\varphi_1$ or $\Box\Phi_2$ would be completed; so it is important to keep both possibilities at the current state, if possible. 
This idea is expressed by the use of $\fg(\Phi_1) \oplus \fg(\Phi_2)$ in the above union, where we keep both disjuncts true at the present state and delay the choice. 
This is why the state formulae $\psi_i$ and $\psi_j$ are connected by $\wedge$ but the path formulae $\Psi_i$ and $\Psi_j$ are connected by $\vee$.  
 Moreover, the $\oplus$ operation avoids the construction of 
a pair $ \langle \psi_i \wedge \psi_ j, \Psi_i \vee \Psi_j \rangle$
where either $\Psi_i$ or $\Psi_j$ is $\top$, because that case  would already be included in $\g{\Phi_1}$ or in $\g{\Phi_2}$.
The three cases for paths satisfying the disjunction $\Phi_1 \vee \Phi_2$ can be illustrated by the picture in Figure \ref{fig:disj}.

\begin{figure}
\centering
\begin{tikzpicture}
  \begin{scope}
    \draw (1.8,1.6) node{\underline{$\fg(\Phi_1)$}};
  \draw (0,0) node {$\bullet$} node [below] {$\varphi_1$};
  \draw [dashed] (0,0) -- (1.5,1); \draw (1.5,1) node [right] {$\Phi_1$}; \draw [dashed] (2.1,1) -- (3,1);
  \draw [dashed] (0,0) -- (1.5,0.5); \draw (1.5,0.5) node [right] {$\Phi_1$}; \draw [dashed] (2.1,0.5) -- (3,0.5);
  \draw [dashed] (0,0) -- (1.5,0); \draw (1.5,0) node [right] {$\Phi_1$}; \draw [dashed] (2.1,0) -- (3,0);
  \draw [dashed] (0,0) -- (1.5,-0.5); \draw (1.5,-0.5) node [right] {$\Phi_1$}; \draw [dashed] (2.1,-0.5) -- (3,-0.5);
  \draw [dashed] (0,0) -- (1.5,-1); \draw (1.5,-1) node [right] {$\Phi_1$}; \draw [dashed] (2.1,-1) -- (3,-1);
  \end{scope}

  \begin{scope}[xshift=4cm]
    \draw (1.8,1.6) node{\underline{$\fg(\Phi_2)$}};
  \draw (0,0) node {$\bullet$} node [below] {$\varphi_2$};
  \draw [dashed] (0,0) -- (1.5,1); \draw (1.5,1) node [right] {$\Phi_2$}; \draw [dashed] (2.1,1) -- (3,1);
  \draw [dashed] (0,0) -- (1.5,0.5); \draw (1.5,0.5) node [right] {$\Phi_2$}; \draw [dashed] (2.1,0.5) -- (3,0.5);
  \draw [dashed] (0,0) -- (1.5,0); \draw (1.5,0) node [right] {$\Phi_2$}; \draw [dashed] (2.1,0) -- (3,0);
  \draw [dashed] (0,0) -- (1.5,-0.5); \draw (1.5,-0.5) node [right] {$\Phi_2$}; \draw [dashed] (2.1,-0.5) -- (3,-0.5);
  \draw [dashed] (0,0) -- (1.5,-1); \draw (1.5,-1) node [right] {$\Phi_2$}; \draw [dashed] (2.1,-1) -- (3,-1);
  \end{scope}

  \begin{scope}[xshift=8cm]
    \draw (1.8,1.6) node{\underline{$\fg(\Phi_1) \oplus \fg(\Phi_2)$}};
  \draw (0,0) node {$\bullet$} node [below] {$\varphi_2$} node [above] {$\varphi_1$};
  \draw [dashed] (0,0) -- (1.5,1); \draw (1.5,1) node [right] {$\Phi_1$}; \draw [dashed] (2.1,1) -- (3,1);
  \draw [dashed] (0,0) -- (1.5,0.5); \draw (1.5,0.5) node [right] {$\Phi_2$}; \draw [dashed] (2.1,0.5) -- (3,0.5);
  \draw [dashed] (0,0) -- (1.5,0); \draw (1.5,0) node [right] {$\Phi_2$}; \draw [dashed] (2.1,0) -- (3,0);
  \draw [dashed] (0,0) -- (1.5,-0.5); \draw (1.5,-0.5) node [right] {$\Phi_1$}; \draw [dashed] (2.1,-0.5) -- (3,-0.5);
  \draw [dashed] (0,0) -- (1.5,-1); \draw (1.5,-1) node [right] {$\Phi_1$}; \draw [dashed] (2.1,-1) -- (3,-1);
  \end{scope}
 \end{tikzpicture}
\caption{The three cases for disjunctive path objectives in a $\gamma$-formula.}
\label{fig:disj} 
\vspace{-2mm}
\end{figure}

Now, let $\zeta= \diaA \Phi$ or $\zeta= \crochetA \Phi$ be a $\gamma$-formula to be decomposed.
Each pair  $\langle \psi, \Psi\rangle \in \fg(\Phi)$ is then converted to a \emph{$\gamma$-component} $\gcomp(\psi, \Psi)$ as follows: 
\begin{eqnarray}                          
    \gcomp(\psi, \Psi) = \psi & &  \text{ if } \Psi = \top \\                                                                                   
    \gcomp(\psi, \Psi) = \psi \wedge \ddiaA\Psi  &  &\text{ if } \zeta \text{ is of the form } \diaA\Phi,    \\                                                                                                                                                                                                   
  \gcomp(\psi, \Psi) = \psi \wedge \dcrochetA\Psi  & &\text{ if } \zeta \text{ is of the form } \crochetA\Phi                                                                                           
\end{eqnarray}

Thus, the role of $\fg$ is to associate with any $\gamma$-formula $\zeta$
 a set of formulae that are simpler in some precise sense, viz. its \emph{$\gamma$-components}, so that $\zeta$ is equivalent to the disjunction of its $\gamma$-components.
This key property is item 3 of the next lemma (the first two items being just auxiliary claims), and it is the core distinction between the proposed calculus for \ATLp in this work and the tableau calculus for \ATL in \cite{GorankoShkatov09ToCL}.

\begin{lemma}
\label{lem:gamma}
For any $\gamma$-formula $\Theta = \diaA \Phi$ or $\Theta = \crochetA \Phi$ of \ATLp, the following properties hold: 

\begin{enumerate}
\item $ \Phi \equiv \bigvee \{ \psi \land \rond \Psi \mid \langle \psi,\Psi \rangle \in \g{\Phi} \}$.

\item $\diaA\Phi \equiv \bigvee \{\diaA (\psi \land \rond \Psi) \mid \langle \psi,\Psi \rangle \in \g{\Phi} \}$, and respectively, 

 $\crochetA\Phi \equiv \bigvee \{\crochetA (\psi \land \rond \Psi) \mid \langle \psi,\Psi \rangle \in \g{\Phi} \}$.

\item $\Theta \equiv \bigvee \{\gamma(\psi,\Psi) \mid \langle \psi,\Psi \rangle \in \g{\Phi} \}$.\end{enumerate}

\end{lemma}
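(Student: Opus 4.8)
### Proof proposal

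The plan is to prove the three items in order, since item~3 builds directly on item~2, which in turn builds on item~1. Item~1 is a statement purely about \LTL-style path formulae (no strategic quantifiers are involved), so I would prove it by structural induction on the \ATLp path formula $\Phi$, following exactly the recursive definition of $\fg$. The base cases are the four clauses defining $\fg$ on $\varphi$, $\rond\varphi$, $\Box\varphi$ and $\varphi\until\psi$: for $\varphi$ the claimed equivalence is $\varphi \equiv \varphi \land \rond\top$, for $\rond\varphi$ it is $\rond\varphi \equiv \top \land \rond\varphi$, and for $\Box\varphi$ and $\varphi\until\psi$ it is precisely the two \LTL fixpoint equivalences listed in item~\ref{eqLTL} of the preliminaries (namely $\Box\atlsb \equiv \atlsb \land \rond\Box\atlsb$ and $\atlsa\until\atlsb \equiv \atlsb \lor(\atlsa\land\rond(\atlsa\until\atlsb))$). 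For the inductive step on $\Phi_1\wedge\Phi_2$ I would use the induction hypothesis on each $\Phi_i$, rewrite $\Phi_1\wedge\Phi_2$ as the conjunction of the two disjunctions, distribute $\wedge$ over $\vee$, and then simplify using $(\psi_i\land\rond\Psi_i)\land(\psi_j\land\rond\Psi_j)\equiv (\psi_i\land\psi_j)\land\rond(\Psi_i\land\Psi_j)$ — which is exactly the definition of $\otimes$. For $\Phi_1\vee\Phi_2$ I would similarly expand using the hypothesis; the disjuncts coming from $\fg(\Phi_1)$ alone and $\fg(\Phi_2)$ alone account for the $\fg(\Phi_1)\cup\fg(\Phi_2)$ part, and the remaining combined disjuncts are absorbed by $\fg(\Phi_1)\oplus\fg(\Phi_2)$; here I would note that dropping the pairs with $\Psi_i = \top$ or $\Psi_j = \top$ (as the definition of $\oplus$ does) loses nothing, because those pairs are logically subsumed by pairs already present in $\fg(\Phi_1)$ or $\fg(\Phi_2)$ (if $\Psi_i=\top$ then $\psi_i\land\psi_j\land\rond(\top\lor\Psi_j)$ is implied by $\psi_j\land\rond\Psi_j \in \fg(\Phi_2)$, after also using that $\psi_i$ is a consequence of $\Phi_1$ holding). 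I should be slightly careful that the equivalences are over \emph{plays} (evaluated at $\lambda$), but all the manipulations are propositional plus the $\rond$-fixpoint laws, so this is routine.

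For item~2, I would derive it from item~1 by applying $\diaA(\cdot)$ or $\crochetA(\cdot)$ to both sides of the equivalence in item~1. The only nontrivial point is that the strategic quantifiers commute with disjunction in this particular situation. In general $\diaA$ does not distribute over $\vee$, but here the disjunction ranges over formulae of the form $\psi\land\rond\Psi$, and the key observation is that $\fg(\Phi)$ enumerates \emph{mutually exclusive or at least jointly exhaustive ``types'' of initial step}: at any given play the current state determines which $\psi_i$ hold and the one-step successor determines $\Psi_i$, so a strategy forcing $\Phi$ along all its plays is the same thing as a strategy forcing, along each of its plays, \emph{some} pair $\langle\psi,\Psi\rangle \in \fg(\Phi)$. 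Concretely I would argue both inclusions: ($\supseteq$) is immediate since each $\diaA(\psi\land\rond\Psi)$ implies $\diaA\Phi$ by item~1 and monotonicity of $\diaA$; for ($\subseteq$), given an $A$-strategy $F_A$ witnessing $\diaA\Phi$ at $s$, I use item~1 to see that for \emph{each} play $\lambda$ consistent with $F_A$ there is a pair $\langle\psi,\Psi\rangle\in\fg(\Phi)$ with $\lambda_0\models\psi$ and $\lambda_{\ge 1}\models\Psi$; since $\psi$ is a state formula, it depends only on $\lambda_0 = s$, so the same $\psi$ (and corresponding $\Psi$) works for all plays — hence $F_A$ already witnesses $\diaA(\psi\land\rond\Psi)$ for that fixed pair. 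Wait — that over-claims: different plays consistent with $F_A$ may realize different pairs $\langle\psi,\Psi\rangle$ with different $\Psi$'s even though they agree on $s$. The correct move is subtler: I would instead note that $\fg$ was designed (via the $\oplus$ clause) precisely so that $\bigvee\{\psi\land\rond\Psi\}$ already collects, at the single-step level, \emph{a disjunction that is forced step-wise}, and then invoke the fixpoint-style/unraveling argument — equivalently, prove ($\subseteq$) by observing $\Phi \equiv \bigvee_i(\psi_i\land\rond\Psi_i)$ and that $\diaA$ of a ``state-formula-guarded'' one-step disjunction distributes, using equivalence \ref{eqATL-st} ($\diams{A}(\atla\lor\atlsb)\equiv \atla\lor\diams{A}\atlsb$ for state $\atla$) together with the observation that in the relevant disjunction each $\psi_i$ is a state formula and the only non-state part is $\rond\Psi_i$, which can be merged across disjuncts via the $\oplus$-closure. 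This distributivity step is the main obstacle and the place I'd need to be most careful; I expect the paper handles it by a direct semantic argument on strategies rather than by equivalence-chasing.

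For item~3, I would combine item~2 with the definition of the $\gamma$-components. By item~2, $\Theta \equiv \bigvee\{\diaA(\psi\land\rond\Psi)\mid\langle\psi,\Psi\rangle\in\fg(\Phi)\}$ (or the $\crochetA$ analogue). It then suffices to check, pair by pair, that $\diaA(\psi\land\rond\Psi) \equiv \gamma(\psi,\Psi)$. Splitting on the definition of $\gamma$: if $\Psi = \top$, then $\diaA(\psi\land\rond\top) \equiv \diaA\psi \equiv \psi$ because $\psi$ is a state formula and $\diams{A}$ of a state formula is that formula (by equivalence \ref{eqATL-st}, $\diams{A}(\psi\land\top)\equiv\psi$, noting $\rond\top\equiv\top$) — actually $\diams{A}\psi\equiv\psi$ follows since $\psi$ holds at $\lambda_0$ iff it holds at $s$, independent of strategy. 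If $\Psi\neq\top$ and $\zeta = \diaA\Phi$, then since $\psi$ is a state formula, equivalence \ref{eqATL-st} gives $\diaA(\psi\land\rond\Psi)\equiv\psi\land\diaA\rond\Psi$, and then the \ATL fixpoint-type unfolding $\diaA\rond\Psi\equiv\ddiaA\Psi$ — here I'd want to double check that $\diaA\rond\Psi$ is literally $\dlangle A\drangle\rond\dlangle A\drangle\Psi$; this is the standard ``one-step'' rewriting $\diaA\rond\chi \equiv \diamsn{A}\diams{A}\chi$ which holds because a one-step strategic objective is equivalent to first making one $A$-transition and then having a (fresh) $A$-strategy for the remainder, combined with $\diams{A}\diams{B}\chi\equiv\diams{B}\chi$ from \ref{eqATLs} if $\Psi$ itself begins with a quantifier. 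The $\crochetA$ case is symmetric, using the $\crochetA$ fixpoint equivalences in \ref{eqATL-[]} and the co-strategy semantics. Assembling these pairwise equivalences under the big disjunction yields item~3. The overall structure is thus: structural induction for item~1; a semantic strategy argument plus state-formula distributivity for item~2; and a finite case analysis matching $\gamma$-components for item~3, with the distributivity of the strategic quantifier over the $\fg$-generated disjunction being the crux.
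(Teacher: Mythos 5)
Your treatment of Claims 1 and 3 follows essentially the same route as the paper: Claim 1 by structural induction on $\Phi$ using the \LTL fixpoint laws for $\Box$ and $\until$ together with the definitions of $\otimes$ and $\oplus$ (your remark that the pairs discarded by $\oplus$ when some $\Psi_i=\top$ are subsumed by pairs already in $\fg(\Phi_1)\cup\fg(\Phi_2)$ is correct), and Claim 3 by extracting the state formula, $\diaA(\psi\land\rond\Psi)\equiv\psi\land\ddiaA\Psi$, justified by the composability of perfect-recall strategies. The right-to-left half of Claim 2 via monotonicity of $\diaA$ is also as in the paper.

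The gap is the left-to-right direction of Claim 2. You correctly notice that your first argument over-claims (distinct plays consistent with the same strategy may realize distinct pairs), but the repair you sketch is never carried out, and the tool you reach for does not apply: the distributivity $\diams{A}(\varphi\lor\Psi)\equiv\varphi\lor\diams{A}\Psi$ requires one disjunct to be a \emph{state} formula, whereas every disjunct $\psi_i\land\rond\Psi_i$ with $\Psi_i\neq\top$ has a genuine path component, so there is no ``state-formula-guarded'' distributivity over the $\fg$-generated disjunction. What is actually needed, and what the paper does, is an explicit construction of a \emph{single} pair from a fixed witnessing strategy $\strat_A$ at $s$: write $\Phi$ in DNF over its temporal subformulae; let $\Phi_1,\dots,\Phi_n$ be the disjuncts realized by at least one play in $\outm(s,\strat_A)$; for each such $\Phi_i$ select the particular pair $\langle\psi_i,\Psi_i\rangle\in\g{\Phi_i}$ in which, for every until-conjunct $\varphi\until\chi$ of $\Phi_i$, one takes the component $\chi$ precisely when $\chi$ already holds at $s$ (and otherwise $\varphi$ with the until carried into $\Psi_i$); the fixpoint laws then give that \emph{every} play in $\outm(s,\strat_A)$ satisfying $\Phi_i$ satisfies $\psi_i\land\rond\Psi_i$. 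Setting $\psi:=\psi_1\land\dots\land\psi_n$ and $\Psi:=\Psi_1\lor\dots\lor\Psi_n$ yields a pair in $\g{\Phi}$ by the $\oplus$ clause, every outcome play satisfies $\psi\land\rond\Psi$, and the very same $\strat_A$ witnesses $\diaA(\psi\land\rond\Psi)$. This choice of pair, driven by which eventualities are already discharged at $s$, is the missing idea; without it Claim 2, and hence Claim 3, is not established.
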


\begin{proof}
Claim 1. We will prove the claim by induction on the path formula $\Phi$. It is equivalent to 
the following property $P(\Phi)$:

\begin{quote}
For every CGM $\m$ and a play $\lambda$ in it, $ \m,\lambda \models \Phi$ iff there exists $ \langle \psi,\Psi \rangle \in \g{\Phi}$ such that  $\m,\lambda_0 \models \psi$ and $ \m,\lambda_{\geq 1} \models \Psi$. 
\end{quote}

The base cases are $\Phi = \varphi$, $\Phi = \rond\varphi$, $\Phi = \Box\varphi$ and 
 $\Phi = \varphi \until \psi$.  For each of these the property $P(\Phi)$ follows immediately from the definitions of $\fg$ and $\gamma$-components and -- for the latter two cases -- the well-known fixed point \LTL equivalences for the temporal operators, listed at the end of Section \ref{subsec:ATL}. 

\medskip
For the inductive steps there are two cases to consider: 

\smallskip
Case 1:
$\Phi = \Phi_1 \wedge \Phi_2$.  \ We have that: \\ 
$\m,\lambda\models\Phi$ iff \\
$\m,\lambda\models \Phi_1$ and $\m,\lambda\models\Phi_2$, iff (by the induction hypothesis):  
\begin{enumerate}
\item[(i)] 
 there is $\langle \psi_1,\Psi_1 \rangle \in \g{\Phi_1}$, such that $\m,\lambda_0\models \psi_1$ and $\m, \lambda_{\geq 1} \models \Psi_1$, \\
 and 
\item[(ii)] 
 there is $\langle \psi_2,\Psi_2 \rangle \in \g{\Phi_2}$ such that $\m,\lambda_0\models \psi_2$ and $\m, \lambda_{\geq 1} \models \Psi_2$.
\end{enumerate}
These two are the case iff \\ 
$\m,\lambda_0\models \psi_1 \wedge \psi_2$ and $\m, \lambda_{\geq 1} \models \Psi_1 \wedge \Psi_2$, iff \\
$\m,\lambda_0 \models \psi$ and $\m, \lambda_{\geq 1} \models \Psi$ where $\psi = \psi_1\wedge\psi_2$, $\Psi = \Psi_1 \wedge \Psi_2$ and $\langle \psi,\Psi \rangle \in \g{\Phi}$. This completes the proof of $P(\Phi)$ for $\Phi = \Phi_1 \wedge \Phi_2$.

\smallskip
Case 2:  $\Phi = \Phi_1 \vee \Phi_2$. We have that $\m,\lambda \models\Phi$ iff $\m,\lambda\models \Phi_1$ or $\m,\lambda\models\Phi_2$. By inductive hypotheses for  $\Phi_1$ and $\Phi_2$ and from the fact that $\fg(\Phi_1) \cup \fg(\Phi_2) \subseteq \fg(\Phi)$, we obtain the direction from left to right in property $P(\Phi)$. For the converse direction, we only need to consider the case that does not follow directly from the inductive hypotheses for  $\Phi_1$ and $\Phi_2$, viz. when there exists $ \langle \psi,\Psi \rangle \in  (\fg(\Phi_1) \oplus \fg(\Phi_2))$ such that  $\m,\lambda_0 \models \psi$ and $ \m, \lambda_{\geq 1} \models \Psi$. In this case, $\psi = \psi_1 \wedge \psi_ 2$ and $\Psi = \Psi_1 \vee \Psi_ 2$ for some $\langle \psi_1, \Psi_1 \rangle \in \fg(\Phi_1)$ and 
$\langle \psi_2, \Psi_2 \rangle \in \fg(\Phi_2 )$ such that $\Psi_1 \not = \top,  \Psi_2 \not = \top$. Suppose 
$ \m, \lambda_{\geq 1} \models \Psi_{1}$.  Since we also have $\m,\lambda_0 \models \psi_{1}$, by the inductive hypothesis for $\Phi_1$, it follows that $\m,\lambda\models \Phi_1$, hence $\m,\lambda\models \Phi$. Likewise, when $ \m, \lambda_{\geq 1} \models \Psi_{2}$.

\medskip
Claim 2. We will consider the case of $\Theta = \diaA \Phi$;  the case of $\crochetA\Phi$ is analogous. The implication from right to left of the claimed equivalence follows from Claim 1 and the monotonicity of $\diaA$ (in sense that if $\Psi \models \Phi$ then $\diaA \Psi \models \diaA \Phi$). 
For the converse direction, first recall that every \ATLp path formula  $\Xi$ is a positive Boolean combination of sub-formulae of the types $\varphi, \rond\varphi, \Box\varphi, \varphi\until\psi$ where $\varphi, \psi$ are \ATLp state formulae.  
Let the set of these sub-formulae of $\Xi$ be $S(\Xi)$. Now, we introduce some ad hoc notation for special sets of formulae in $S(\Xi)$ and their sub-formulae:  

\begin{itemize}
\item $L(\Xi)$ is the set of all state formulae in $S(\Xi)$; 
\item $N(\Xi) := \{\varphi \mid \rond\varphi \in S(\Xi)\}$; 
\item $B(\Xi) := \{\varphi \mid \Box\varphi \in S(\Xi)\}$; 
\item $U(\Xi) := \{\varphi\until\psi \mid \varphi\until\psi \in S(\Xi)\}$; 
\item $U_{1}(\Xi) := \{\varphi \mid \varphi\until\psi \in S(\Xi)\}$; 
\item $U_{2}(\Xi) := \{\psi \mid \varphi\until\psi \in S(\Xi)\}$; 
\end{itemize}

Without loss of generality we can assume that $\Phi$ is in a DNF over the set of formulae in $S(\Phi)$, i.e. $\Phi = \Phi_{1} \lor \ldots \lor \Phi_{m}$, where each  $\Phi_{i}$ is a conjunction of formulae from $S(\Phi)$.  

Now, to prove the implication from left to right, take any CGM $\m$ and state $s$ in it, such that $\m,s\models \diaA \Phi$.
Take and fix any collective strategy $\strat_A$ of $A$ such that  $\m,\lambda\models\Phi$ for every play $\lambda$ starting at $s$ and consistent with $\strat_A$. We denote that set of plays by  $\outm(s, \strat_A)$. 
Then for every play $\lambda\in \out_{\m}(s,\sigma_{A})$ we have that $\m,\lambda\models \Phi_{i}$ for some $i=1,\ldots,m$. Without restriction of generality we can assume that the set of $\Phi_{i}$'s for which there is a $\lambda\in \out_{\m}(s,\sigma_{A})$ such that 
 $\m,\lambda\models \Phi_{i}$ is $\{\Phi_{1}, \ldots, \Phi_{n}\}$ for some $n \leq m$. 
 
 Let $\Phi_{i}$ be any of these. 
 We will associate with it a pair $\langle \psi_{i},\Psi_{i} \rangle \in \g{\Phi_{i}}$ as follows. First, note that all formulae from $L(\Phi_{i})$ and $B(\Phi_{i})$ are true at $s$. Further, let $E_{i}(s)$ be the subset of those formulae from 
 $U_{2}(\Phi_{i})$ which are true at $s$ in $\m$. Thus, for every play $\lambda\in \outm(s,\strat_{A})$ satisfying $\Phi_{i}$ the following hold: 

\begin{itemize}
\label{list}
\item[i)]  $\m,\lambda\models\varphi$ for each $\varphi \in L(\Phi_{i})$. 
\item[ii)]  $\m,\lambda\models\rond \varphi$ for each $\rond \varphi \in S(\Phi_{i})$. 
\item[iii)]  $\m,\lambda\models \varphi \land \rond \Box \varphi$ for each $\Box \varphi \in S(\Phi)$. 
\item[iv)]  $\m,\lambda\models \psi$ for each $\psi \in E_{i}(s)$. 
\item[v)]  $\m,\lambda\models \varphi \land \rond \varphi\until\psi$ for each $\psi \in U_{2}(\Phi_{i}) - E_{i}(s)$. 
\end{itemize}

Now, suppose $\Phi_{i} = \Psi_{i1} \land \ldots  \land \Psi_{ik}$ for some 
$\Psi_{i1} \land \ldots  \land \Psi_{ik} \in S(\Phi)$. Then 
$\fg(\Phi_{i}) = \fg(\Psi_{i1}) \otimes \ldots \otimes \fg(\Psi_{ik})$. (Recall that the operations $\otimes$ and $\oplus$ are associative, up to logical equivalence, so there is no need to put parentheses.)  Thus, for every $\langle \psi,\Psi \rangle \in \g{\Phi_{i}}$,  $\psi$ is a conjunction of all formulae from $L(\Phi_{i}) \cup  B(\Phi_{i})$ and, for every conjunct of $\Phi_{i}$ of the type $\varphi\until\psi$, at least one of the respective    
formulae coming from $U_{1}(\Phi_{i})$ and $U_{2}(\Phi_{i})$. We now select $\langle \psi_{i},\Psi_{i} \rangle \in \g{\Phi_{i}}$ to be the one where the conjuncts taken from $U_{2}(\Phi_{i})$ are exactly those in $E_{i}(s)$. Then we claim that  for every play $\lambda\in \outm(s,\strat_{A})$ satisfying $\Phi_{i}$, it is the case that 
$\m,\lambda\models \psi_{i} \land \rond \Psi_{i}$. Indeed, this follows from the list of properties  (i - v) above and from the definition of $\fg(\Psi_{i1}) \otimes \ldots \otimes \fg(\Psi_{ik})$. 
Note further, that if $\Psi_{i}$ above is $\top$, then 
$\m,\lambda\models \psi_{i} \land \rond \Psi_{i}$ for all paths $\lambda$ starting at $s$, so we can assume without affecting what follows that no $\Psi_{i}$ above is $\top$. 

After having selected such a pair $\langle \psi_{i},\Psi_{i} \rangle \in \g{\Phi_{i}}$ for each 
$\Phi_{i} \in \{\Phi_{1}, \ldots, \Phi_{n}\}$, we use these $n$ pairs (or, those of them for which $\Psi_{i} \neq \top$) to construct the pair  $\langle \psi,\Psi \rangle \in \g{\Phi_{1}} \oplus \ldots \oplus \g{\Phi_{n}}$ such that 
$ \psi = \psi_{1}\land \ldots \land \psi_{n}$ and $ \Psi =  \Psi_{1}\lor \ldots \lor \Psi_{n}$.  

Finally, we claim that, by virtue of the construction, $\m,\lambda\models \psi \land \rond \Psi$ 
 for every play $\lambda\in \outm(s,\strat_{A})$ satisfying $\Phi$. Therefore, 
the strategy $\sigma_{A}$ is a witness of the truth of  
$\m,s\models \diaA (\psi \land \rond \Psi)$, hence   
$\m,s\models \bigvee \{\diaA (\psi \land \rond \Psi) \mid \langle \psi,\Psi \rangle \in \g{\Phi}\}$.
This completes the proof of the implication left-to-right of Claim 2. 

\medskip
Claim 3. This claim follows easily from Claim 2 by noting that: 

\begin{itemize}
\item 
$\diaA(\psi \land \rond \Box \Psi) \equiv \psi \land \diaAn \Box \Psi \equiv \psi \land \ddiaA \Box \Psi$, because $\psi$ is a state formula. Note that the second equivalence is due to the fact that the semantics of $\coop{}$ is based on perfect recall strategies, that can be composed. More precisely, it essentially assumes that any strategy at $s$ ensuring that every successor satisfies $\diaA \Box \Psi$ can be composed with the family of strategies, one for every such successor $s'$ 
witnessing the truth of $\diaA \Box \Psi$ on all plays starting at $s'$, into one perfect recall strategy that guarantees the truth of $\rond \Box \Psi$ on all plays starting at $s$.  (This, in general, cannot be done if only positional strategies are considered, as those applied at the different successors of $s$ may interfere with each other.)

\item Likewise, $\crochetA(\psi \land \rond \Box \Psi) \equiv \psi \land \crochetAn \Box \Psi \equiv \psi \land \dcrochetA\Box \Psi$. 
\end{itemize}
Therefore, for each $\langle \psi,\Psi \rangle \in \g{\Phi}$ the $\gamma$-component $\gamma(\psi,\Psi)$ is equivalent to its respective disjunct on the right hand side of Claim 2. 
\qed
\end{proof}

\begin{example}
\label{run-ex}
We will use two syntactically similar, yet different, running examples: 
\[\theta=\diams{1}(p \until q \vee \Box q) \wedge \diams{2}(\Diamond p \wedge \Box \neg q)\] 
and 
\[\vartheta = \diams{1}(p \until q \vee \Box q) \wedge \crochet{2}(\Diamond p \wedge \Box \neg q).\]

First, we consider $\theta$. It is an $\alpha$-formula with conjunctive components \\
$\theta_1 = \diams{1}(p \until q \vee \Box q)$ and $\theta_2 = \diams{2}(\Diamond p \wedge \Box \neg q)$. 

Further, $\theta_1$ is a $\gamma$-formula of the form $\diaA\Phi$ where the main connective of $\Phi$ is $\vee$. 
So, $\g{\theta_1} = \g{p \until q} \cup \g{\Box q} \cup (\g{p \until q} \oplus \g{\Box q}) $, where 
$\g{p \until q} = \{\langle p,p \until q\rangle , \langle q, \top \rangle\}$ and 
$\g{\Box q} = \{\langle q, \Box q\rangle\}$. 

Thus, $\g{\theta_1} = \{\langle p,p \until q\rangle, \langle q, \top \rangle, \langle q,\Box q  \rangle, \langle p \wedge q, p \until q \vee \Box q\rangle\}$, hence \\ 
$\theta_1 \equiv (p\wedge\ddiams{1}p \until q) \vee q \vee (q \wedge \ddiams{1}\Box q) \vee (p\wedge q \wedge \ddiams{1}(p \until q \vee \Box q))$.

\smallskip
Likewise, $\theta_2$ is a $\gamma$-formula of the form $\diaA\Phi$ and the main connective of $\Phi$ is $\wedge$. 
So $\g{\theta_2} = \g{\Diamond p} \otimes \g{\Box\neg q}$, with 
$\g{\Diamond p} = \{\langle T, \Diamond p \rangle, \langle p,T \rangle\}$ and
$\g{\Box \neg q} = \{\langle \neg q, \Box\neg q\rangle\}$. 

Thus, 
$\g{\theta_2} = \{\langle \top \wedge \neg q, \Diamond p \wedge \Box\neg q\rangle,\langle p \wedge \neg q, \top \wedge \Box\neg q \rangle\} $

\hspace{2.2cm}$= \{\langle \neg q, \Diamond p \wedge \Box\neg q\rangle,\langle p \wedge \neg q, \Box\neg q \rangle\}$ and
 
$\theta_2 \equiv (\neg q \wedge \ddiams{2}(\Diamond p\wedge\Box\neg q)) \vee (p \wedge \neg q\wedge\ddiams{2}\Box\neg q)$.

For $\vartheta$, the $\gamma$-decomposition is similar, we only replace $\diams{2}$ by $\crochet{2}$. Thus, we obtain \\
$\vartheta_1 \equiv (p\wedge\ddiams{1}p \until q) \vee (q) \vee (q \wedge \ddiams{1}\Box q) \vee (p\wedge q \wedge \ddiams{1}(p \until q \vee \Box q))$ \\and\\ $\vartheta_2 \equiv (\neg q \wedge \dcrochet{2}(\Diamond p\wedge\Box\neg q)) \vee (p \wedge \neg q\wedge\dcrochet{2}\Box\neg q)$.
\end{example}

\bigskip
The \keyterm{closure} $cl(\psi)$ of an \ATLp state formula $\psi$ is the least set of \ATLp formulae such that $\psi,\top,\bot \in cl(\psi)$ and $cl(\psi)$ is closed under taking of successor-, $\alpha$-, $\beta$- and $\gamma$-components. For any set of state formulae $\Ga$ we define 
\[cl(\Ga) := \bigcup \{ cl(\psi) \mid \psi \in \Ga \}.\]
We denote by $|\psi |$ the length of $\psi$ and by $\|\Ga \|$ the  cardinality of $\Ga$.

\begin{example} 
\label{run-ex2}
The construction of the closure of the formula $\theta$ from Example \ref{run-ex} is given in Figure \ref{closureFig}. Each node of the tree represents an element of the closure. 
Children of an interior node are respective components of the parent formula,  according to the definition of closure.

\begin{figure}[h]
\scalebox{0.6}{
\begin{tikzpicture}[->,>=stealth, thick,level 1/.style={sibling distance=10cm},level 2/.style={sibling distance=3cm}, level 3/.style={sibling distance=1.7cm}] 
\node {$\theta$}
    child{ node {$\theta_1 = \diams{1}(p \until q \vee \Box q)$} child{ node {$q$}}		
				child[xshift=-0.5cm]{ node {$p \land \ddiams{1}p\until q$}
						child{ node {$p$}}
						child{ node {$\ddiams{1}p \until q$}
							child {node {$\diams{1}p \until q$}}
						}
				}
				child [xshift=-0.2cm]{ node {$q \land \ddiams{1}\Box q$}
					child{ node {$\ddiams{1}\Box q$}
						child{ node {$\diams{1}\Box q$}}
					}
				}
				child[xshift=0.8cm]{ node {$p \land q \land \ddiams{1}(p \until q \lor \Box q)$}
					child {node {$q \land \ddiams{1}(p \until q \lor \Box q)$}
						child { node {$\ddiams{1}(p \until q \lor \Box q)$}}
					}
				}  
		}
    child{ node {$\theta_2 = \diams{2}(\Diamond p \wedge \Box \neg q)$}
            child[xshift=2cm]{ node  {$\neg q \land \ddiams{2}(\Diamond p \land \Box \neg q)$} 
							child[xshift=-1cm]{ node {$\neg q$}}
							child{ node {$\ddiams{2}(\Diamond p \land \Box \neg q)$}}
            }
            child[xshift=3.5cm]{ node {$p \land \neg q \land \ddiams{2}\Box\neg q$}
							child{ node {$\neg q \land \ddiams{2}\Box\neg q$}
							  child{ node {$\ddiams{2}\Box\neg q$}
									child{ node {$\diams{2}\Box\neg q$}}
								}
							}
						}
		}
; 
\end{tikzpicture}
}
\caption{Closure of the formula $\theta = \diams{1}(p \until q \vee \Box q) \wedge \diams{2}(\Diamond p \wedge \Box \neg q)$}
\label{closureFig}
\end{figure}

The closure of $\vartheta$ is similar to the one of $\theta$ except that every $\crochet{2}$ is replaced by $\diams{2}$.
\end{example}

\begin{lemma}
\label{lem:closure}
For any \ATLp state formula $\varphi$,  
$\| cl(\varphi) \| < 2^{|\varphi|^{2}}$. 
\end{lemma}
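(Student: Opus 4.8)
The plan is to bound $\|cl(\varphi)\|$ by analyzing the tree structure implicit in the closure construction (as illustrated in Figure \ref{closureFig}) and counting how many distinct formulae can appear. First I would make precise the observation that every element of $cl(\varphi)$ is built from the syntactic material of $\varphi$: the atomic propositions, literals, state subformulae and path subformulae occurring in $\varphi$, together with the strategic quantifiers $\diaA, \crochetA$ and temporal operators $\rond, \Box, \until$ appearing in it. The subtle point, specific to \ATLp, is that taking $\gamma$-components introduces \emph{new} formulae that are conjunctions of state formulae (the $\psi_i \wedge \psi_j$ produced by $\otimes$ and $\oplus$) together with ``next'' formulae $\diaAn\Psi$ or $\crochetAn\Psi$ whose successor component $\Psi$ is a positive Boolean combination of members of $S(\Phi)$. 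So the closure is not simply the set of subformulae of $\varphi$; it can contain Boolean combinations of them.

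Next I would pin down exactly which Boolean combinations can occur. The key containment to establish is: for any $\gamma$-subformula $\diaA\Phi$ (or $\crochetA\Phi$) occurring in the closure, the state components $\psi$ of pairs in $\g{\Phi}$ are conjunctions over the set $L(\Phi)\cup B(\Phi)\cup U_1(\Phi)\cup U_2(\Phi)$ — i.e. conjunctions of subformulae of $\Phi$ — and the path components $\Psi$ are from a correspondingly bounded set of positive Boolean combinations of $N(\Phi)\cup\{\Box\chi\}\cup\{\chi\until\chi'\}$-type subformulae of $\Phi$. Hence a $\gamma$-component has the shape (conjunction of $\le |\varphi|$ state-subformulae) $\wedge$ $\diaAn$(positive Boolean combination of $\le |\varphi|$ path-subformulae of $\varphi$). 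Since there are at most $|\varphi|$ literals/state-subformulae and at most $|\varphi|$ path-subformulae, the number of conjunctions of state-subformulae is at most $2^{|\varphi|}$; and I would argue that the ``$\Psi$'' parts that actually arise, being obtained by iterating $\otimes$ (conjunctions) and $\oplus$/$\cup$ (disjunctions of pairs already produced), also range over a set of size bounded by a single exponential in $|\varphi|$ — crucially the nesting of strategic quantifiers does not compound, because $\diaA$ applied to a $\gamma$-component produces a formula whose successor component $\Psi$ is still a Boolean combination of path-subformulae of the \emph{original} $\Phi$, not of the $\gamma$-components. Combining: each closure element is determined by a small constant amount of top-level structure (which of $\diaA$, $\crochetA$, $\rond$, $\Box$, $\until$, $\wedge$, $\vee$, a literal, $\top$, $\bot$) plus a choice of at most two subsets of an $O(|\varphi|)$-sized set of subformulae, giving a bound of the form $c\cdot |\varphi|\cdot 2^{|\varphi|}\cdot 2^{|\varphi|}$, which is $< 2^{|\varphi|^2}$ for all but trivially small $\varphi$ (and the small cases are checked directly, or absorbed by noting $|\varphi|\ge$ some small constant whenever a $\gamma$-formula is present at all).

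For the organization of the write-up I would: (1) define a superset $\widehat{cl}(\varphi)$ of $cl(\varphi)$ given by an explicit syntactic scheme — literals and $\top,\bot$; all state- and path-subformulae of $\varphi$; all formulae $\psi_S := \bigwedge_{\chi\in S}\chi$ for $S$ a subset of the state-subformulae of $\varphi$; all formulae $\diaAn \Psi_T$ and $\crochetAn \Psi_T$ where $T$ ranges over subsets of path-subformulae and $\Psi_T$ is a canonical positive Boolean combination; and their immediate $\alpha/\beta$-subcomponents; (2) verify by a case check over the definition of components (successor-, $\alpha$-, $\beta$-, $\gamma$-) that $\widehat{cl}(\varphi)$ is closed, hence contains $cl(\varphi)$; (3) count $\|\widehat{cl}(\varphi)\|$ as above. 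The main obstacle I anticipate is step (2) in the $\gamma$-case: one must check carefully that forming $\g{\Phi}$ for a path-subformula $\Phi$ of $\varphi$, then converting to $\gamma$-components via $\gamma(\psi,\Psi)$, and then recursing into the $\alpha$- and $\beta$-components of those $\gamma$-components (e.g. peeling $\psi_i\wedge\psi_j\wedge\diaAn\Psi$ down to $\diaAn\Psi$, then $\diaA\Psi$, which is itself a $\gamma$-formula if $\Psi$ is not $\rond$-headed) does not escape the syntactic scheme — in particular that the successor components $\Psi$ stay within positive Boolean combinations of path-subformulae of $\varphi$ and never acquire nested strategic quantifiers beyond a bounded depth. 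Once that invariant is nailed down, the cardinality bound $< 2^{|\varphi|^2}$ is a routine estimate, with perhaps a small separate argument for formulae $\varphi$ short enough that the crude bound $c\,|\varphi|\,4^{|\varphi|}$ does not immediately beat $2^{|\varphi|^2}$.
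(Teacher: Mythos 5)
Your plan is correct in substance but follows a genuinely different route from the paper. The paper's proof is a two-line counting argument: every formula in $cl(\varphi)$ has length less than $2|\varphi|$ and is built only from symbols occurring in $\varphi$, so there are at most $|\varphi|^{2|\varphi|} = 2^{2|\varphi|\log_2|\varphi|} < 2^{|\varphi|^{2}}$ of them. You instead construct an explicit syntactically closed superset $\widehat{cl}(\varphi)$ and count its shapes. What your approach buys is twofold: first, it actually justifies the length invariant that the paper asserts without proof (the reason a $\gamma$-component stays short is precisely your observation that its state part is a conjunction of pairwise disjoint state subformulae of $\Phi$ and its successor part $\diaAn\Psi$ or $\crochetAn\Psi$ has $\Psi$ a positive Boolean combination of disjoint path subformulae of the \emph{original} $\Phi$, with no compounding under recursion); second, carried out carefully it yields a sharper single-exponential bound $2^{O(|\varphi|)}$, which is tight given the family $\phi_k$ exhibited after the lemma, rather than the quasi-polynomial exponent $2|\varphi|\log_2|\varphi|$ of the paper. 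The one imprecision to repair is your parametrization of the successor components by \emph{subsets} $T$ of path subformulae with a ``canonical'' Boolean combination $\Psi_T$: the $\Psi$'s produced by $\fg$ are not determined by their set of leaves (e.g.\ both $\Psi_1\wedge\Psi_2$ and $\Psi_1\vee\Psi_2$ can arise from the same $\Phi$ via $\otimes$ and $\oplus$). The correct invariant is that each such $\Psi$ is obtained from the Boolean skeleton of $\Phi$ by a bounded local choice at each connective (keep, or for $\vee$ drop one side), which still gives at most $3^{|\varphi|}$ candidates and leaves your final estimate, and a fortiori the bound $2^{|\varphi|^{2}}$, intact.
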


\begin{proof}
Every formula in $cl(\varphi)$ has length less than $2|\varphi|$ and is built from symbols in $\varphi$, so there can be at most 
$|\varphi|^{2|\varphi|} = 2^{2|\varphi| \log_{2} |\varphi|} < 2^{|\varphi|^{2}}$ such formulae. \qed
\end{proof}

The estimate above is rather crude, but 
$\| cl(\varphi) \|$ \emph{can} reach size  exponential in $|\varphi|$. Indeed, consider the 
formulae 
$\phi_{k} = \diams{1}(p_{1}\until q_{1} \land (p_{2}\until q_{2} \land (\ldots \land p_{k}\until q_{k})\ldots)$ for $k=1,2,\ldots$ and distinct 
$p_{1}, q_{1}, \ldots, p_{k},q_{k}, \ldots \in \prop$. Then $|\phi_{k}| = O(k)$, while the number of different $\gamma$-components of $\phi_{k}$ is $2^{k}$, hence $\| cl(\phi_{k}) \| > 2^{k}$.

\subsection{Full expansions of sets of \ATLp formulae}

As part of the tableau construction we will need a procedure that, for any given finite set of \ATLp state formulae $\Ga$, produces all ``full expansions'' (called in \cite{GorankoShkatov09ToCL} ``downward saturated extensions''; see Remark \ref{rem1}) defined below.  
\begin{definition}
\label{def:fullexp}
 Let $\Ga$, $\Delta$ be sets of \ATLp state formulae and $\Ga \subseteq \Delta \subseteq cl(\Ga)$. 
 \begin{enumerate}
\item  $\Delta$ is \emph{patently inconsistent} if it contains $\bot$ or a pair of formulae $\varphi$ and $\neg \varphi$.
\item $\Delta$ is a \emph{full expansion of $\Ga$} if it is not patently inconsistent and satisfies the following closure conditions: 
\begin{itemize}
 \item if $\varphi \land \psi \in \Delta$ then $\varphi \in \Delta$ and $\psi \in \Delta$;
  \item if $\varphi \lor \psi \in \Delta$ then $\varphi \in \Delta$ or $\psi \in \Delta$;
  \item if $\varphi \in \Delta$ is a $\gamma$-formula, then at least one $\gamma$-component 
of $\varphi$ is in $\Delta$ and exactly one of these $\gamma$-components  in $\Delta$, denoted $\gamma(\varphi, \Delta)$, is designated as \emph{the  $\gamma$-component in $\Delta$ linked to the $\gamma$-formula  $\varphi$}, as explained below. 
\end{itemize}
\end{enumerate}
\end{definition}

The family of all full expansions of $\Ga$ will be denoted by $FE(\Ga)$.  
It can be constructed by a simple iterative procedure that starts with $\{\Ga\}$ and repeatedly, until saturation, takes a set $X$ from the currently constructed family, selects a formula $\varphi \in X$ and applies the closure rule above corresponding to its type.  
Clearly, this procedure terminates on every finite input set of formulae $\Ga$ and produces a family of at most $2^{\| cl(\Ga) \|}$ sets. Furthermore, due to Lemma \ref{lem:gamma}, we have the following: 

\begin{proposition}
\label{prop:FullExp}
For any finite set of \ATLp state formulae $\Ga$: 
\[\bigwedge \Ga \equiv \bigvee \left\{\bigwedge \De \mid \De \in FE(\Ga)\right\}.\]
\end{proposition}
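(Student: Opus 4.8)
The plan is to prove the stated equivalence $\bigwedge \Ga \equiv \bigvee \{ \bigwedge \De \mid \De \in FE(\Ga)\}$ by induction on the iterative construction of $FE(\Ga)$ described just before the proposition, showing that each step of the saturation procedure preserves an invariant of the form ``the conjunction of the original set is equivalent to the disjunction of the conjunctions of the sets currently in the family''. Concretely, I would set up the invariant $I(\mathcal{F})$: $\bigwedge \Ga \equiv \bigvee \{ \bigwedge X \mid X \in \mathcal{F}\}$, verify it for the initial family $\mathcal{F} = \{\Ga\}$ (where it is trivial), and then check that each application of a closure rule to some $X \in \mathcal{F}$ preserves it. Since the procedure terminates and its output is $FE(\Ga)$ together with the discarded patently inconsistent sets, and patently inconsistent sets have $\bigwedge X \equiv \bot$ so they contribute nothing to the disjunction, this yields the claim.

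The key steps are the case analyses for the three closure rules. For the $\alpha$-rule ($\varphi \land \psi \in X$), replacing $X$ by $X \cup \{\varphi, \psi\}$ changes nothing semantically, since $\bigwedge X \equiv \bigwedge(X \cup \{\varphi,\psi\})$; here one should note the rule also covers $\crochetn{\agents}\psi$, whose $\alpha$-components are both $\diamsn{\emptyset}\psi$, and invoke the equivalence $\crochetn{\agents}\psi \equiv \diamsn{\emptyset}\psi$ listed at the end of Section~\ref{subsec:ATL}. For the $\beta$-rule ($\varphi \lor \psi \in X$), replacing $X$ by the two sets $X \cup \{\varphi\}$ and $X \cup \{\psi\}$ uses the distributivity fact that $\bigwedge X \equiv (\bigwedge X \land \varphi) \lor (\bigwedge X \land \psi)$ because $\varphi \lor \psi$ is (up to the conjunction) already entailed by $X$. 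The crucial case is the $\gamma$-rule: if a $\gamma$-formula $\varphi \in X$ is replaced by the sets $X \cup \{\gamma_i\}$ as $\gamma_i$ ranges over the $\gamma$-components of $\varphi$, then preservation of the invariant is exactly the content of item~3 of Lemma~\ref{lem:gamma}, which says $\varphi \equiv \bigvee_i \gamma_i$; one also needs to record that exactly one such $\gamma_i$ is designated as $\gamma(\varphi,\De)$, but this is a bookkeeping detail that does not affect the semantics.

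The main obstacle is a subtle one of formulation rather than of mathematics: one must be careful that the closure rules may need to be applied to subformulae that only appear \emph{inside} the $\gamma$-components produced at an earlier stage, so the induction must be on the (finite) number of saturation steps remaining, not on the structure of formulae in $\Ga$, and one must confirm the process indeed reaches saturation (which the paper has already asserted, citing Lemma~\ref{lem:closure} to bound the family size by $2^{\|cl(\Ga)\|}$). A second minor point to handle cleanly is that a set discarded as patently inconsistent contributes the disjunct $\bot$ --- or is simply dropped --- so its removal does not change the right-hand side; this must be stated explicitly so that the final family $FE(\Ga)$, rather than the family of all generated sets, appears in the equivalence. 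Once the invariant is set up correctly, each individual verification is short, so I would keep the write-up at the level of ``the $\alpha$- and $\beta$-cases are immediate; the $\gamma$-case is Lemma~\ref{lem:gamma}(3)''.
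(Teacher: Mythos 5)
Your proposal is correct and follows essentially the same route as the paper: the paper's proof is precisely the observation that each set-extension step of the iterative construction preserves the formula $\bigvee\{\bigwedge \De \mid \De \in \mathcal{F}\}$ up to logical equivalence (with the $\gamma$-case supplied by Lemma~\ref{lem:gamma}), starting from $\bigwedge\Ga$. You simply spell out the invariant, the per-rule case analysis, and the disposal of patently inconsistent sets in more detail than the paper does.
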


\begin{proof}
Lemma \ref{lem:gamma} implies that every set extension step, described above, 
applied to a family of sets $\mathcal{F}$ preserves the formula $\bigvee \left\{\bigwedge \De \mid \De \in \mathcal{F}\right\}$ up to logical equivalence. At the beginning, that formula is $\bigwedge \Ga$. \qed
\end{proof}
 
 \begin{remark} 
 \label{rem1}
 Instead of full expansions, the tableau construction in \cite{GorankoShkatov09ToCL} uses 'minimal downward saturated extensions', where 
 'downward saturated extension' corresponds to 'full expansion'. 
The minimality condition means that if one full expansion is contained in another one, then it is omitted. 
This could be problematic, as sometimes non-minimal full expansions may be needed. 
For instance, if $ \Ga  = \{\diams{1}(p \until q), p \land \ddiams{1} (p \until q)\}$ then 
\\
$FE(\Ga ) = \big\{ \{\diams{1}(p \until q), p \land \ddiams{1}(p \until q), p, \ddiams{1}(p \until q)\}, \{q, \diams{1}(p \until q), p \land \ddiams{1}(p \until q), p, \ddiams{1}(p \until q)\}\big\}$. 

Although 
the second full expansion contains the first one, we might have to consider both alternatives in the tableau where $\Ga$ is only part of the label of a state, for the sake of satisfying an eventuality of the type $\varphi \until \psi$. However, we have no concrete example showing that such situation may occur indeed. 
\end{remark}

\section{Tableau-based decision procedure for \ATLp}
\label{sec:tab}

The tableau procedure consists of three major phases: 
\emph{pretableau construction}, \emph{prestate elimination} and \emph{state elimination}. 
Given an input formula $\initf$, it essentially constructs a (non-deterministic) CGS which is state-labelled by the closure set of the input formula $cl(\initf)$, i.e.,  a directed graph $\tabb{}$ (called a \emph{tableau}) where each node is labelled by a subset of $cl(\initf)$ (see Def. \ref{def:CGS}(1)), and directed edges between nodes relating them to successor nodes. 

 The pretableau construction phase produces the so-called \emph{pretableau} $\pretab{}$ for the input formula $\initf$, with two kinds of nodes: \emph{states}
and \emph{prestates}. States are fully expanded sets,  
meant to represent  states of a  CGM, while prestates can be any finite sets of formulae from $cl(\initf)$ and only play a temporary role in the construction of $\pretab{}$.  
States and prestates are labelled uniquely, so they can be identified with their labels. 
The prestate elimination phase creates a smaller graph
$\tabb{0}$ out of $\pretab$, called the
\emph{initial tableau for $\initf$}, by eliminating all the prestates
from $\pretab$ and accordingly redirecting its edges.
Finally, the state elimination phase removes, step-by-step,  
all the states (if any) that cannot be satisfied in a CGM, because they lack necessary successors or because they contain unrealized eventualities. 
Eventually, 
the elimination procedure produces a (possibly empty) subgraph $\tabb{}$ of $\tabb{0}$, called the \emph{final tableau for $\initf$}. If some state $\Delta$ of
$\tabb{}$ contains $\initf$, the tableau procedure declares $\initf$
satisfiable and a partly defined CGM (called \emph{Hintikka game structure}) satisfying  $\initf$ can be extracted from it by another procedure described in Section \ref{subsec:completeness}; otherwise it declares $\initf$ unsatisfiable. 

\subsection{Pretableau construction phase}

The pretableau construction phase for an input formula $\initf$ starts with an initial prestate (with label) $\{\initf\}$ and consists of alternating application of two construction rules, until saturation: \textbf{(SR)}, expanding prestates into states, and \textbf{(Next)}, creating successor prestates from states. 
This phase closely resembles the corresponding one for the \ATL tableaux in \cite{GorankoShkatov09ToCL}, with the only essential difference being the $\gamma$-decomposition of $\gamma$-formulae used here by the \sr, which causes, as we will see, a possibly exponential blow-up of the size of the tableaux, and eventually of the entire worst-case time complexity, as compared to the \ATL tableaux. 
Another (minor) difference with respect to \cite{GorankoShkatov09ToCL} is in the formulation of both rules, because here we work with formulae in negation normal form.

\medskip
\textbf{Rule (SR)} 
Given a prestate $\Gamma$, do the following:
\begin{enumerate}
 \item For each full expansion $\Delta$ of $\Gamma$ add to the pretableau
 a state with label $\Delta$. 
 \item For each of the added states $\Delta$, if $\Delta$ does not contain any formulae of the form $\diaAn\varphi$ or $\crochetAn\varphi$, add the formula $\diamsn{\agents}\top$ to it;
 \item For each state $\Delta$ obtained at steps 1 and 2, link $\Gamma$ to $\Delta$ via a $ \Longrightarrow$ edge;
 \item If, however, the pretableau already contains a state $\Delta'$ with label $\Delta$, do not create another copy of it but  only link $\Gamma$ to $\Delta'$ via a $ \Longrightarrow$ edge.
\end{enumerate}

\begin{example}
\label{run-ex3}
 For the formula $\theta$ from Example \ref{run-ex}  the initial prestate is 
 \[\Gamma^{\theta}_0 = \{\diams{1}(p \until q \vee \Box q) \wedge \diams{2}(\Diamond p \wedge \Box \neg q) \}.\] 
 It has two full expansions: 
 
$\Delta_1 = \{\theta,\theta_1,\theta_2$, $p,\neg q,\ddiams{1}p\until q,\ddiams{2}(\Diamond p \wedge \Box\neg q)\}$, and 

$\Delta_2 = \{\theta,\theta_1,\theta_2,p,p\wedge \neg q,\neg q,\ddiams{1}p\until q,\ddiams{2}\Box\neg q\}$.

\bigskip
\begin{figure}[h]
\centering
\noindent\scalebox{0.65}{
\begin{tikzpicture}[node distance=3cm]

  \tikzstyle{pre_etat}=[text badly centered]
  \tikzstyle{etat}=[text badly centered]
  \tikzstyle{dfleche}=[->,>=latex, double]
  \tikzstyle{fleche}=[->,>=latex]

  \node[pre_etat] (G0) {$\Gamma_0^\theta: \diams{1}(p \until q \vee \Box q) \wedge \diams{2}(\Diamond p \wedge \Box \neg q)$} ;
  \node (p1) [below of=G0, node distance=2cm] {};

  \node[etat, text width=5cm] (D1) [left of=p1] {$\Delta_1:\theta,\theta_1,\theta_2,p,\neg q$, $\ddiams{1}p\until q$, $\ddiams{2}(\Diamond p \wedge \Box\neg q)$} ;
  \node[etat, text width=5cm] (D2) [right of=p1] {$\Delta_2: \theta,\theta_1,\theta_2,p,p\wedge \neg q,\neg q$, $\ddiams{1}p\until q,\ddiams{2}\Box\neg q$} ;

\path[dfleche]
    (G0) edge [double] (D1)
	 edge [double] (D2);
\end{tikzpicture}}
\caption{Application of the rule \textbf{(SR)} on the prestate $\Gamma^{\theta}_0 = \{\diams{1}(p \until q \vee \Box q) \wedge \diams{2}(\Diamond p \wedge \Box \neg q) \}$  }
\end{figure}
\bigskip

Likewise, for the formula $\vartheta$  the initial prestate is 
\[\Gamma^{\vartheta}_0 = \{\diams{1}(p \until q \vee \Box q) \wedge \crochet{2}(\Diamond p \wedge \Box \neg q)\}\] 
and it has 2 full expansions: 

$\Delta_1 = \{\vartheta,\vartheta_1,\vartheta_2,p,\neg q,\ddiams{1}p\until q,\dcrochet{2}(\Diamond p \wedge \Box\neg q)\}$, and 

$\Delta_2 = \{\vartheta,\vartheta_1,\vartheta_2,p,p\wedge \neg q,\neg q,\ddiams{1}p\until q$, $\dcrochet{2}\Box\neg q\}$.

\end{example}

\bigskip

In the following, by \emph{enforceable successor formula} we mean a formula of the form $\diaAn \psi$ and  by \emph{unavoidable successor formula} -- one of the form $\crochetn{A'}\psi$. 

\medskip
\noindent \textbf{Rule (Next)} Given a state $\Delta$, do the following, where $\vect$ 
is a shorthand for $\vec{\acta}_{\agents}$: 

\begin{enumerate}
 \item List all primitive successor formulae of $\Delta$ in such a way that all enforceable successor formulae precede all unavoidable ones where $A\neq\agents$; let the result be the list 
\[\mathbb{L} = \diamsn{A_0}\varphi_0,\dots,\diamsn{A_{m-1}}\varphi_{m-1},\crochetn{A'_0}\psi_0,\dots,\crochetn{A'_{l-1}}\psi_{l-1}\]

Let $r_{\Delta}=m+l$; denote by $D(\Delta)$ the set $\{0,\dots,r_{\Delta}-1\}^{|\agents|}$. 
Then, for every $\vect \in D(\Delta)$, denote $N(\vect):= \{i \mid \vect_i \geqslant m\}$, where $\vect_i$ is the $i$th component of the tuple $\vect$, and let $\co(\vect):= [\Sigma_{i \in N(\vect)}(\vect_i-m)]\mod l$.

  \item For each $\vect \in D(\Delta)$ 
   create a prestate:
    \begin{eqnarray*}
         \Gamma_{\vect} & = & \{\varphi_p \mid \diamsn{A_p}\varphi_p \in \Delta \text{ and } 
         \vect_a = p \text{ for all } a \in A_p\} \\
                         & \cup & \{\psi_q \mid \crochetn{A'_q}\psi_q \in \Delta, \co(\vect) = q, \text{ and } 
                          \agents - A'_q \subseteq N(\vect)\}
    \end{eqnarray*}
    If $\Gamma_{\vect}$ is empty, add $\top$ to it. Then 
   connect $\Delta$ to $\Gamma_{\vect}$ with $\stackrel{\vect}{\longrightarrow}$. 
   
If, however, $\Gamma_{\vect} = \Gamma$ for some prestate $\Gamma$ that has already been added to the pretableau, only connect $\Delta$ to $\Gamma$ with $\stackrel{\vect}{\longrightarrow}$.

\end{enumerate}

\begin{remark}
\label{rmk:rnext}
Rule \textbf{(Next)}  ensures that every  prestate $\Ga$ of 
$\De$, that is every element of the finite set of prestates that are targets of 
$\longrightarrow$ edges outgoing from $\Delta$, satisfies the following:
  \begin{itemize}
  \item if $\set{\diamsn{A_i} \varphi_i, \diamsn{A_j} \varphi_j} \subseteq
    \De$ and $\set{\varphi_i, \varphi_j} \subseteq \Ga$, then $A_i
    \cap A_j = \emptyset$;

    \item $\Ga$ contains at most one formula of the form
      $ \psi$ such that $\crochetn{A} \psi \in \De$, since the
      number $\co(\move{})$ is uniquely determined for every
      $\vect \in D(\Delta)$;

    \item if $\set{\diamsn{A_i} \varphi_i, \crochetn{A'} \psi}
      \subseteq \De$ and $\set{\varphi_i, \psi} \subseteq
      \Ga$, then $A_i \subseteq A'$.
  \end{itemize}
\end{remark}

Here is some intuition on the rule \textbf{(Next)} 
(see also \cite{GorankoShkatov09ToCL}). 
This rule must ensure that for each 
$\diamsn{A} \varphi$ from $\mathbb{L} $ there is a respective $A$-action at $\Delta$ that guarantees $\varphi$  in the label of every successor and that for every $\crochetn{A'} \psi$ from $\mathbb{L} $ there is a  $A'$-co-action at $\Delta$ that ensures $\psi$ in the label of the respective successors. 

Now, the actions at $\Delta$ are defined so that every agent's action represents a choice of that agent of a formula from 
$\mathbb{L} $ for the satisfaction of which the agent chooses to act. When all agents in some $A_{p}$ choose action $p$, then they act together for satisfying $\diamsn{A_{p}} \varphi_{p}$, so this is the required $A_{p}$-action.  
As for the co-actions, the idea is that for any fixed $\crochetn{A'_{q}} \psi_{q}$ in $\mathbb{L}$, all agents in the complement of $A'_{q}$ may choose to act in favour of some $\crochet{.}$-formula by simply selecting an action of the type $\sigma_{i} \geq m$. 
Then, for every $A'_{q}$-action the agents in $\agents - A'_{q}$ can synchronise their actions to ensure that the resulting action profile $\vect$ satisfies $\co(\vect) = q$, thereby ensuring  $\psi_{q}$ in the successor state. 
In fact, any agent who chooses to act \emph{co-strategically}, i.e., in favour of a $\crochet{.}$-formula,  can always synchronise her action with all other agents acting co-strategically to ensure that the resulting action profile $\vect$ satisfies $\co(\vect) = j$, for any value $j = 0, \ldots l-1$. So, every such agent $\agi$ is able, once all other agents have chosen their actions, to unilaterally enforce in the successor state any $\psi_{q}$ such that $\crochetn{A'_{q}} \psi_{q}$ in $\mathbb{L}$ and $\agi \notin A'_{q}$. 
   
The rules \textbf{(SR)} and \textbf{(Next)} are applied alternatively until saturation, which is bound to occur because every label is a subset of $cl(\initf)$. 
Then the construction phase is over. 
The  graph built in this phase is called \keyterm{pretableau} for the input formula $\initf$ and denoted by $\pretab$.
Given a pretableau, if $\Gamma$ is a prestate, we denote by $\mathbf{states}(\Gamma)$ the finite set of states that are targets of 
$\Longrightarrow$ edges outgoing from $\Gamma$ and 
if  $\Delta$ is a state we denote by $\mathbf{prestates}(\Delta)$ the finite set of prestates that are targets of 
$\longrightarrow$ edges outgoing from $\Delta$.

Before providing an example of how \rnext works, we give an example for the computation of the function $\co$. 

\begin{example}
Let the input formula, containing two agents, 1 and 2, be such that at some step of the pretableau construction, there is a state containing the next four primitive formulae: $\{ \diamsn{1}\varphi_1, \diamsn{1,2}\varphi_2, \crochetn{2}\varphi_3, \crochetn{1}\varphi_4 \}$. 
The computation of the functions $N$ and $\co$ and the successor prestate $ \Gamma_{\sigma}$ for each action profile gives:

\bigskip

\begin{center}
$
\arraycolsep=1.4pt\def\arraystretch{1.5}
\begin{array}{|c|c|c|c||c|c|c|c|}
\hline
	\sigma & N(\sigma)  & \co(\sigma) & \Gamma_{\sigma} & \sigma & N(\sigma) & \co(\sigma)  & \Gamma_{\sigma} \\\hline
	0,0 & \emptyset & 0 & \set{\varphi_1} & 2,0 & \set{1} & 0 & \set{\varphi_3} \\\hline
	0,1 & \emptyset & 0 &  \set{\varphi_1}  & 2,1 & \set{1} & 0 & \set{\top} \\\hline
	0,2 & \set{2} & 0 &  \set{\varphi_1}  & 2,2 & \set{1,2}& 0 & \set{\varphi_3} \\\hline
    0,3 & \set{2} & 1 & \set{\varphi_1, \varphi_4} & 2,3 & \set{1,2 }& 1 & \set{\varphi_4} \\\hline
	1,0 & \emptyset & 0 & \set{\top} & 3,0  & \set{1} & 1 & \set{\top} \\\hline
	1,1 & \emptyset & 0 & \set{\varphi_2} & 3,1 & \set{1} & 1 & \set{\top} \\\hline
	1,2 & \set{2} & 0 & \set{\top} & 3,2 & \set{1,2} & 1 & \set{\varphi_4} \\\hline
	1,3 & \set{2} & 1 & \set{\varphi_4} & 3,3 & \set{1,2} & 0 & \set{\varphi_3} \\\hline
\end{array}
$
\end{center}
\end{example}

\bigskip

\begin{example}
\label{run-ex4}
 (\textit{Continuation of Example \ref{run-ex3} for $\theta$}) 
 For $\Delta_1$, the list of successor formulae is 
 \begin{equation*}
 	\mathbb{L} = \ddiams{1}p\until q, \ddiams{2}(\Diamond p \wedge \Box \neg q)
 \end{equation*}
So $m = 2$, $l=0$ and $r_{\Delta_1} = 2$. 
    
    \medskip
    As there are no unavoidable successor formulae, we do not need to compute $N(\sigma)$ and $\co(\sigma)$. 
    Then, 
\begin{align*}
   \Gamma_{(0,0)} &= \{\diams{1}p\until q\} = \Gamma_1 & \Gamma_{(1,0)} &= \{\top\} = \Gamma_3\\
 \Gamma_{(0,1)} &= \{\diams{1}p\until q,\diams{2}(\Diamond p\wedge \Box \neg q)\} = \Gamma_2 & \Gamma_{(1,1)} &= \{\diams{2}(\Diamond p \wedge \Box \neg q)\} = \Gamma_4. 
\end{align*}

For $\Delta_2$, the list of successor formulae is 
\begin{equation*}
\mathbb{L} = \ddiams{1}p\until q,\ddiams{2}\Box\neg q
\end{equation*}
So $m = 2$, $l=0$ and $r_{\Delta_2} = 2$. 

\medskip
Here again, we do not compute $N(\sigma)$ and $\co(\sigma)$. Then
\begin{align*}
\Gamma_{(0,0)} &= \{\diams{1}p\until q\} = \Gamma_1 & \Gamma_{(1,0)} &= \{\top\} = \Gamma_3\\ 
\Gamma_{(0,1)} &= \{\diams{1}p\until q,\diams{2}\Box \neg q\} = \Gamma_5 & \Gamma_{(1,1)} &= \{\diams{2}\Box \neg q\} = \Gamma_6.
\end{align*}
Applying \sr to the so-obtained prestates, we have: 
\\
$\st{\Gamma_1} = \{\Delta_3:\{\diams{1}p\until q, p, \ddiams{1}p \until q\},\Delta_4:\{\diams{1}p\until q, q,\diamsn{1,2}\top\}\}$, 
\\
$\st{\Gamma_2} = \{\Delta_5:\{\diams{1}p\until q,\diams{2}(\Diamond p\wedge \Box \neg q),p,\neg q,\ddiams{1}p\until q,\ddiams{2}(\Diamond p \wedge \Box\neg q)\},
\Delta_6: \{\diams{1}p\until q,\diams{2}(\Diamond p\wedge \Box \neg q),p,p\wedge \neg q,\neg q,\ddiams{1}p\until q,\ddiams{2}\Box\neg q\}\}$; 
\\
$\st{\Gamma_3} = \{\Delta_7: \{\top,\diamsn{1,2}\top\}\}$; \\
$\st{\Gamma_4} = \{\Delta_8:\{\diams{2}(\Diamond p \wedge \Box\neg q),\neg q, \ddiams{2}(\Diamond p \wedge \Box\neg q)\}, 
\Delta_9: \{\diams{2}(\Diamond p \wedge \Box\neg q), p\wedge\neg q, \neg q, \ddiams{2}\Box\neg q\}\}$; 
\\
$\st{\Gamma_5} = \{\Delta_{10}:\{\diams{1}p\until q,\diams{2}\Box \neg q,p,\neg q$, $\ddiams{1}p\until q,\ddiams{2}\Box\neg q\}\}$; 
\\
$\st{\Gamma_6} = \{\Delta_{11}:\{\diams{2}\Box \neg q,\neg q,\ddiams{2}\Box\neg q\}\}$.
\end{example}

\medskip
The pretableau for $\theta$ is given in Figure \ref{fig:pretab:theta}.

\begin{figure}
\centering
\noindent\scalebox{0.65}{
\begin{tikzpicture}[node distance=3cm]

  \tikzstyle{pre_etat}=[text badly centered]
  \tikzstyle{etat}=[text badly centered]
  \tikzstyle{dfleche}=[->,>=latex, double]
  \tikzstyle{fleche}=[->,>=latex]

  \node[pre_etat] (G0) {$\Gamma_0: \diams{1}(p \until q \vee \Box q) \wedge \diams{2}(\Diamond p \wedge \Box \neg q)$} ;
  \node (p1) [below of=G0, node distance=2cm] {};

  \node[etat, text width=5cm] (D1) [left of=p1] {$\Delta_1:\theta,\theta_1,\theta_2,p,\neg q$, $\ddiams{1}p\until q$, $\ddiams{2}(\Diamond p \wedge \Box\neg q)$} ;
  \node[etat, text width=5cm] (D2) [right of=p1] {$\Delta_2: \theta,\theta_1,\theta_2,p,p\wedge \neg q,\neg q$, $\ddiams{1}p\until q,\ddiams{2}\Box\neg q$} ;

  \node[pre_etat, text width=2.5cm,xshift=0cm] (G1) [below right of=D1] {$\Gamma_1: \diams{1}p\until q$} ;
  \node[pre_etat, text width=2.5cm,xshift=0cm] (G4) [left of=G1] {$\Gamma_4: \diams{2}(\Diamond p\wedge \Box \neg q)$} ;
  \node[pre_etat, text width=2.5cm,xshift=0cm] (G2) [left of=G4] {$\Gamma_2: \diams{1}p\until q$,\\$\diams{2}(\Diamond p\wedge \Box \neg q) $} ;
  \node[pre_etat, text width=2.5cm,xshift=0cm] (G3) [right of=G1] {$\Gamma_3: \top$} ;
  \node[pre_etat, text width=2.5cm,xshift=0cm] (G5) [right of=G3] {$\Gamma_5: \diams{1}p\until q,\diams{2}\Box \neg q$} ;
  \node[pre_etat, text width=2.5cm,xshift=0cm] (G6) [right of=G5] {$\Gamma_6: \diams{2}\Box \neg q$} ;

  \node[etat] (D7) [below of=G3,node distance=2cm,xshift=0.5cm] {$\Delta_7$} ;
  \node[etat] (D10) [below of=G5,node distance=2cm] {$\Delta_{10}$} ;
  \node[etat] (D11) [below of=G6,node distance=2cm] {$\Delta_{11}$} ;
  \node[etat] (D4) [left of=D7,node distance=1.7cm] {$\Delta_4$} ;
  \node[etat] (D3) [left of=D4,node distance=1.7cm] {$\Delta_3$} ;
  \node[etat] (D9) [left of=D3,node distance=1.7cm] {$\Delta_9$} ;
  \node[etat] (D8) [left of=D9,node distance=1.7cm] {$\Delta_8$} ;
  \node[etat] (D6) [left of=D8,node distance=1.7cm] {$\Delta_6$} ;
  \node[etat] (D5) [left of=D6,node distance=1.7cm] {$\Delta_5$} ;

  \node(p2) [below left of=D5,node distance=1.5cm] {}; \node(p3) [below right of=D5,node distance=1.5cm] {};
  \node(p4) [below left of=D6,node distance=1.5cm] {}; \node(p5) [below right of=D6,node distance=1.5cm] {};
  \node(p6) [below left of=D10,node distance=1.5cm] {}; \node(p7) [below right of=D10,node distance=1.5cm] {};

\path[dfleche]
    (G0) edge [double] (D1)
	 edge [double] (D2)
    (G1) edge [double] (D3)
         edge [double] (D4)
    (G2) edge [double] (D5)
         edge [double] (D6)
    (G3) edge [double] (D7)
    (G4) edge [double] (D8)
         edge [double] (D9)
    (G5) edge [double] (D10)
    (G6) edge [double] (D11);

\path[fleche]
    (D1) edge node [left] {$0,0$} (G1)
	 edge node [left] {$0,1$} (G2)
	 edge node [yshift=0.2cm] {$1,0$} (G3)
	 edge node [right] {$1,1$} (G4)
    (D2) edge node [yshift=0.2cm] {$0,0$} (G1)
	 edge node [right] {$0,1$} (G5)
	 edge node [right] {$1,0$} (G3)
	 edge node [right] {$1,1$} (G6)
     (D3) edge [bend left] node [left] {$0,0$} (G1)
     (D4) edge [bend right] node [left] {$0,0$} (G3)
    (D7) edge [bend right] node [right] {$0,0$} (G3)
    (D8) edge [bend left] node [left] {$0,0$} (G4)
    (D9) edge [bend right=70] node [below] {$0,0$} (G6)
    (D11) edge [bend right] node [right] {$0,0$} (G6);

\path[dashed]
   (D5) edge (p2)
   (p2) edge (p3)
   (p3) edge (D5)
   (D6) edge (p4)
   (p4) edge (p5)
   (p5) edge (D6)
   (D10) edge (p6)
   (p6) edge (p7)
   (p7) edge (D10);
\end{tikzpicture}
}

\begin{center}
\scalebox{0.65}{
\begin{tikzpicture}[node distance=3cm]

  \tikzstyle{pre_etat}=[text badly centered]
  \tikzstyle{etat}=[text badly centered]
  \tikzstyle{dfleche}=[->,>=latex, double]
  \tikzstyle{fleche}=[->,>=latex]

  \node[etat, text width=5cm] (D5) [left of=D6,node distance=1.7cm] {$\Delta_5$} ;
  \node(p1) [below of=D5,node distance=2cm] {};
  \node[pre_etat] (G2) [left of=p1, node distance=0.8cm] {$\Gamma_2$} ;
  \node[pre_etat] (G1) [left of=G2, node distance=1.6cm] {$\Gamma_1$} ;
  \node[pre_etat] (G3) [right of=p1, node distance=0.8cm] {$\Gamma_3$} ;
  \node[pre_etat] (G4) [right of=G3, node distance=1.6cm] {$\Gamma_4$} ;

\path[fleche]
    (D5) edge node [left] {$0,0$} (G1)
	 edge node [left] {$0,1$} (G2)
	 edge node [right] {$1,0$} (G3)
	 edge node [right] {$1,1$} (G4);
\end{tikzpicture}
}
\scalebox{0.65}{
\begin{tikzpicture}[node distance=3cm]

  \tikzstyle{pre_etat}=[text badly centered]
  \tikzstyle{etat}=[text badly centered]
  \tikzstyle{dfleche}=[->,>=latex, double]
  \tikzstyle{fleche}=[->,>=latex]

  \node[etat, text width=5cm] (D6) [left of=D6,node distance=1.7cm] {$\Delta_6/\Delta_{10}$} ;
  \node(p2) [below of=D6,node distance=2cm] {};
  \node[pre_etat] (G5) [left of=p2, node distance=0.8cm] {$\Gamma_5$} ;
  \node[pre_etat] (G1b) [left of=G5, node distance=1.6cm] {$\Gamma_1$} ;
  \node[pre_etat] (G3b) [right of=p2, node distance=0.8cm] {$\Gamma_3$} ;
  \node[pre_etat] (G6) [right of=G3b, node distance=1.6cm] {$\Gamma_6$} ;

\path[fleche]
    (D6) edge node [left] {$0,0$} (G1b)
	 edge node [left] {$0,1$} (G5)
	 edge node [right] {$1,0$} (G3b)
	 edge node [right] {$1,1$} (G6);

\end{tikzpicture}
}
\end{center}
\caption{The pretableau for $\theta$} 
\label{fig:pretab:theta}
\end{figure}

\begin{example}
\label{run-ex5}
 (\textit{Continuation of Example \ref{run-ex3} for $\vartheta$}) 
  For $\Delta_1$, the list of successor formulae is 
  \begin{equation*}
  \mathbb{L} = \ddiams{1}p\until q, \dcrochet{2}(\Diamond p \wedge \Box \neg q)
  \end{equation*}
  So $m = 1$, $l=1$ and $r_{\Delta_1} = 2$. Therefore, 
  \begin{align*}
  N(0,0) &= \emptyset &  N(1,0) &= \{1\} \\
  N(0,1) &= \{2\} & N(1,1) &= \{1,2\}
\end{align*}
and also $	\co(0,0) = \co(0,1) = \co(1,0) = \co(1,1)=0$.
 Then, 
\begin{equation*}
\Gamma_{(0,0)} =\Gamma_{(0,1)} = \{\diams{1}p\until q\} = \Gamma_1 \text{, and } \Gamma_{(1,0)} = \Gamma_{(1,1)} = \{\crochet{2}(\Diamond p\wedge \Box \neg q)\} = \Gamma_2.
\end{equation*}

\medskip
For $\Delta_2$, the list of successor formulae is 
\begin{equation*}
\mathbb{L} = \ddiams{1}p\until q,\dcrochet{2}\Box\neg q
\end{equation*}
So $m = 1$, $l=1$ and $r_{\Delta_2} = 2$. Here also 
\begin{align*}
N(0,0) &= \emptyset & N(1,0) &= \{1\}\\
N(0,1) &= \{2\} & N(1,1) &= \{1,2\}
\end{align*}
and  $\co(0,0) = \co(0,1)= \co(1,0) = \co(1,1)=0$. 
Then, 
\begin{equation*}
\Gamma_{(0,0)} =\Gamma_{(0,1)} = \{\diams{1}p\until q\} = \Gamma_1 \text{, and } \Gamma_{(1,0)} = \Gamma_{(1,1)} = \{\crochet{2}\Box \neg q\} = \Gamma_3.
\end{equation*}
In the same way, we obtain: 
\\
$\st{\Gamma_1} = \{\Delta_3:\{\diams{1}p\until q,p,\ddiams{1}p\until q\}, \Delta_4:\{\diams{1}p\until q, q, \diamsn{1,2}\top\}\}$;
\\
$\st{\Gamma_2}  = \{\Delta_5: \{\crochet{2}(\Diamond p \wedge \Box\neg q),\neg q,\dcrochet{2}(\Diamond p \wedge \Box\neg q)\},\Delta_6:\{\crochet{2}(\Diamond p \wedge \Box\neg q), p\wedge \neg q, p,\neg q, \dcrochet{2}\Box\neg q\}\}; \\
\st{\Gamma_3} = \{\Delta_7:\{\crochet{2}\Box\neg q,\neg q, \dcrochet{2}\Box\neg q\}\} $;\\
$\st{\Gamma_4} = \{\Delta_8:\{\top,\diamsn{1,2}\top\}\}$.

\end{example}

The pretableau for $\vartheta$ is given in Figure \ref{fig:pretab:vartheta}.

\begin{figure}
\centering
\noindent\scalebox{0.65}{
\begin{tikzpicture}[node distance=3cm]

  \tikzstyle{pre_etat}=[text badly centered]
  \tikzstyle{etat}=[text badly centered]
  \tikzstyle{dfleche}=[->,>=latex, double]
  \tikzstyle{fleche}=[->,>=latex]

  \node[pre_etat] (G0) {$\Gamma_0: \diams{1}(p \until q \vee \Box q) \wedge \diams{2}(\Diamond p \wedge \Box \neg q)$} ;
  \node (p1) [below of=G0, node distance=2cm] {};

  \node[etat, text width=5cm] (D1) [left of=p1] {$\Delta_1:\vartheta,\vartheta_1,\vartheta_2,p,\neg q,\ddiams{1}p\until q$, $\dcrochet{2}(\Diamond p \wedge \Box\neg q)$} ;
  \node[etat, text width=5cm] (D2) [right of=p1] {$\Delta_2: \vartheta,\vartheta_1,\vartheta_2,p,p\wedge \neg q,\neg q$, $\ddiams{1}p\until q$, $\dcrochet{2}\Box\neg q$} ;

  \node[pre_etat,xshift=0cm] (G1) [below right of=D1] {$\Gamma_1: \diams{1}p\until q$} ;
  \node[pre_etat,xshift=0cm] (G2) [left of=G1,node distance=4cm] {$\Gamma_2: \crochet{2}(\Diamond p\wedge \Box \neg q) $} ;
  \node[pre_etat,xshift=0cm] (G3) [right of=G1,node distance=8cm] {$\Gamma_3: \crochet{2}\Box \neg q$} ;

  \node[etat] (D3) [below left of=G1,node distance=2cm] {$\Delta_3$} ;
  \node[etat] (D4) [below right of=G1,node distance=2cm] {$\Delta_4$} ;
  \node[etat] (D7) [below of=G3,node distance=1.7cm, xshift=0.5cm] {$\Delta_7$} ;
  \node[etat] (D5) [below left of=G2,node distance=2cm] {$\Delta_5$} ;
  \node[etat] (D6) [below right of=G2,node distance=2cm] {$\Delta_6$} ;

  \node[pre_etat, xshift=0cm] (G4) [right of=D4,node distance=2cm] {$\Gamma_4: \top$} ;
  \node[etat] (D8) [right of=G4,node distance=2.5cm, yshift=0.7cm] {$\Delta_8: \top,\diamsn{1,2}\top$} ;

\path[dfleche]
    (G0) edge [double] (D1)
	 edge [double] (D2)
    (G1) edge [double] (D3)
         edge [double] (D4)
    (G2) edge [double] (D5)
         edge [double] (D6)
    (G3) edge [double] (D7)
     (G4) edge [double] (D8);

\path[fleche]
    (D1) edge node [right,text width=0.5cm,xshift=0.2cm] {$0,0$ $0,1$} (G1)
	 edge node [left,text width=0.5cm,xshift=-0.1cm] {$1,0$ $1,1$} (G2)
    (D2) edge node [left,text width=0.5cm,xshift=-0.2cm,yshift=0.2cm] {$0,0$ $0,1$} (G1)
	 edge node [right,text width=0.5cm,xshift=0.3cm,yshift=0.2cm] {$1,0$ $1,1$} (G3)
    (D3) edge [bend left] node [left] {$0,0$} (G1)
    (D4) edge node [above] {$0,0$} (G4)
    (D5) edge [bend left] node [left] {$0,0$} (G2)
    (D6) edge [bend right=55] node [above] {$0,0$} (G3)
    (D7) edge [bend right] node [right] {$0,0$} (G3);

\end{tikzpicture}
}
\caption{The pretableau for $\vartheta$} 
\label{fig:pretab:vartheta}
\end{figure}

\subsection{The prestate and state elimination phases. Eventualities}
First, we remove from $\pretab$ all the prestates and the $\Longrightarrow$ edges, as follows.  For every prestate $\Gamma$ in $\pretab$ put 
$\Delta\stackrel{\vect}{\longrightarrow} \Delta'$ 
for all states $\Delta$ in $\pretab$ with 
$\Delta\stackrel{\vect}{\longrightarrow} \Gamma$ 
and all $\Delta' \in \mathbf{states}(\Gamma)$; 
then,  remove $\Gamma$ from $\pretab$. 
The graph obtained after eliminating all prestates is called the \keyterm{initial tableau},  denoted by $\tabb{0}$. The initial tableau for the formula $\theta$ in our running example is given on Figure \ref{fig:init-tab:theta} and the initial tableau for $\vartheta$ is given on Figure \ref{fig:init-tab:vartheta}.

\begin{figure}
\centering
\noindent\scalebox{0.65}{
\begin{tikzpicture}[node distance=3cm]

  \tikzstyle{pre_etat}=[text badly centered]
  \tikzstyle{etat}=[text badly centered]
  \tikzstyle{dfleche}=[->,>=latex, double]
  \tikzstyle{fleche}=[->,>=latex]

  \node (p1) {};

  \node[etat, text width=5cm] (D1) [left of=p1] {$\Delta_1:\theta,\theta_1,\theta_2,p,\neg q$, $\ddiams{1}p\until q,$ $\ddiams{2}(\Diamond p \wedge \Box\neg q)$} ;
  \node[etat, text width=5cm] (D2) [right of=p1] {$\Delta_2: \theta,\theta_1,\theta_2,p,p\wedge \neg q,\neg q$, $\ddiams{1}p\until q,\ddiams{2}\Box\neg q$} ;

  \node[pre_etat, text width=2.5cm,xshift=0cm] (G1) [below right of=D1] {} ;
  \node[pre_etat, text width=2.5cm,xshift=0cm] (G4) [left of=G1] {} ;
  \node[pre_etat, text width=2.5cm,xshift=0cm] (G2) [left of=G4] {} ;
  \node[pre_etat, text width=2.5cm,xshift=0cm] (G3) [right of=G1] {} ;
  \node[pre_etat, text width=2.5cm,xshift=0cm] (G5) [right of=G3] {} ;
  \node[pre_etat, text width=2.5cm,xshift=0cm] (G6) [right of=G5] {} ;

  \node[etat,yshift=1cm] (D7) [below of=G3,node distance=2cm] {$\Delta_7$} ;
  \node[etat,yshift=1cm] (D10) [below of=G5,node distance=2cm] {$\Delta_{10}$} ;
  \node[etat,yshift=1cm] (D11) [below of=G6,node distance=2cm] {$\Delta_{11}$} ;
  \node[etat] (D4) [left of=D7,node distance=1.7cm] {$\Delta_4$} ;
  \node[etat] (D3) [left of=D4,node distance=1.7cm] {$\Delta_3$} ;
  \node[etat] (D9) [left of=D3,node distance=1.7cm] {$\Delta_9$} ;
  \node[etat] (D8) [left of=D9,node distance=1.7cm] {$\Delta_8$} ;
  \node[etat] (D6) [left of=D8,node distance=1.7cm] {$\Delta_6$} ;
  \node[etat] (D5) [left of=D6,node distance=1.7cm] {$\Delta_5$} ;

  \node(p2) [below left of=D5,node distance=1.5cm] {}; \node(p3) [below right of=D5,node distance=1.5cm] {};
  \node(p4) [below left of=D6,node distance=1.5cm] {}; \node(p5) [below right of=D6,node distance=1.5cm] {};
  \node(p6) [below left of=D10,node distance=1.5cm] {}; \node(p7) [below right of=D10,node distance=1.5cm] {};

\path[fleche]
    (D1) edge node [left] {$0,0$} (D3)
         edge node [left] {$0,0$} (D4)
	 edge node [left] {$0,1$} (D5)
	 edge node [left] {$0,1$} (D6)
	 edge node [left] {$1,0$} (D7)
	 edge node [left] {$1,1$} (D8)
	 edge node [left] {$1,1$} (D9)
    (D2) edge node [above,yshift=0.3cm] {$0,0$} (D3)
         edge node [left] {$0,0$} (D4)
	 edge node [right] {$0,1$} (D10)
	 edge node [right] {$1,0$} (D7)
	 edge node [right] {$1,1$} (D11)
     (D3) edge [loop left] node [left] {$0,0$} (D3)
          edge node [below] {$0,0$} (D4)
     (D5) edge [bend right] node [yshift=0.2cm,xshift=-1cm] {$0,0$} (D3)
	  edge [bend right] node [left] {} (D4)
	  edge [loop left] node [left] {$0,1$} (D5)
	  edge  node [below] {$0,1$} (D6)
	 edge [bend right] node [below,xshift=-1cm] {$1,0$} (D7)
	 edge [bend left] node [xshift=-0.1cm] {$1,1$} (D8)
	 edge [bend left] node [right] {} (D9)
     (D4) edge  node [below] {$0,0$} (D7)
   (D6) edge [bend left] node [] {$0,0$} (D3)
        edge [bend left] node {} (D4)
	 edge [bend right] node [below,xshift=1cm,yshift=0.1cm] {$0,1$} (D10)
	 edge [bend right] node [right] {$1,0$} (D7)
	 edge [bend right] node [below] {$1,1$} (D11)
     (D7) edge [loop below] node [below] {$0,0$} (D7)
    (D8) edge [loop left] node [below] {$0,0$} (D8)
	 edge node [below] {$0,0$} (D9)
    (D9) edge [bend right=25] node [below,xshift=0.3cm] {$0,0$} (D11)
   (D10) edge [bend right] node [left] {} (D3)
        edge [bend right] node [xshift=1.5cm,yshift=-0.5cm] {$0,0$} (D4)
	 edge [loop below] node [below] {$0,1$} (D10)
	 edge  node [below] {$1,0$} (D7)
	 edge node [above] {$1,1$} (D11)
    (D11) edge [loop below] node [below] {$0,0$} (D11);

\end{tikzpicture}
}
\caption{The initial tableau for $\theta$} 
\label{fig:init-tab:theta}
\end{figure}

\begin{figure}
\centering
\noindent\scalebox{0.65}{
\begin{tikzpicture}[node distance=3cm]

  \tikzstyle{pre_etat}=[text badly centered]
  \tikzstyle{etat}=[text badly centered]
  \tikzstyle{dfleche}=[->,>=latex, double]
  \tikzstyle{fleche}=[->,>=latex]

   \node (p1) {};

  \node[etat, text width=5cm] (D1) [left of=p1] {$\Delta_1:\vartheta,\vartheta_1,\vartheta_2,p,\neg q,\ddiams{1}p\until q,$ $\dcrochet{2}(\Diamond p \wedge \Box\neg q)$} ;
  \node[etat, text width=5cm] (D2) [right of=p1] {$\Delta_2: \vartheta,\vartheta_1,\vartheta_2,p,p\wedge \neg q,\neg q,$ $\ddiams{1}p\until q$, $\dcrochet{2}\Box\neg q$} ;

  \node[pre_etat,xshift=0cm] (G1) [below right of=D1] {} ;
  \node[pre_etat,xshift=0cm] (G2) [left of=G1,node distance=4cm] {} ;
  \node[pre_etat,xshift=0cm] (G3) [right of=G1,node distance=8cm] {} ;

  \node[etat,yshift=1cm] (D3) [below left of=G1,node distance=2cm] {$\Delta_3$} ;
  \node[etat,yshift=1cm] (D4) [below right of=G1,node distance=2cm] {$\Delta_4$} ;
  \node[etat,yshift=1cm] (D7) [below of=G3,node distance=1.7cm] {$\Delta_7$} ;
  \node[etat,yshift=1cm] (D5) [below left of=G2,node distance=2cm] {$\Delta_5$} ;
  \node[etat,yshift=1cm,xshift=-0.5cm] (D6) [below right of=G2,node distance=2cm] {$\Delta_6$} ;

  \node[pre_etat, xshift=0cm] (G4) [right of=D4,node distance=2cm] {} ;
  \node[etat] (D8) [right of=G4,node distance=2.5cm] {$\Delta_8: \top,\diamsn{1,2}\top$} ;

\path[fleche]
    (D1) edge node [right,text width=1cm] {$0,0$ $0,1$} (D3)
	  edge node [right,yshift=0.3cm,text width=1cm] {$0,0$ $0,1$} (D4)
	 edge node [left, text width=1cm] {$1,0$ $1,1$} (D5)
	 edge node [left,text width=0.5cm] {$1,0$ $1,1$} (D6)
    (D2) edge node [yshift=0.4cm, text width=0.6cm] {$0,0$ $0,1$}(D3)
	  edge node [right,text width=1cm,xshift=0.3cm] {$0,0$ $0,1$}(D4)
	 edge node [right,text width=0.5cm,yshift=0.3cm] {$1,0$ $1,1$} (D7)
    (D3) edge [loop left] node [below] {$0,0$} (D3)
         edge [bend right] node [below] {$0,0$} (D4)
    (D4) edge node [above] {$0,0$} (D8)
    (D5) edge [loop left] node [left] {$0,0$} (D5)
         edge [bend right] node [below] {$0,0$} (D6)
    (D6) edge [bend right=30] node [above] {$0,0$} (D7)
    (D7) edge [loop right] node [right] {$0,0$} (D7)
    (D8) edge [loop below] node [below] {$0,0$} (D8);
\end{tikzpicture}}
\caption{The initial and final tableau for $\vartheta$} 
\label{fig:init-tab:vartheta}
\end{figure}

The elimination phase starts with $\tabb{0}$ and goes through stages. At stage $n+1$ we remove exactly one state from the tableau $\tabb{n}$ obtained at the previous stage, by applying one of the elimination rules described below, thus obtaining the tableau $\tabb{n+1}$. The set of states of  $\tabb{m}$ is noted $\settab{m}$.

The first elimination rule \textbf{(ER1)}, defined below, is used to eliminate all states with missing successors for some action vectors determined by the \rnext. 
If, due to a previous state elimination, any state has an outgoing action vector for which the  corresponding successor state is missing, we delete the state.
The reason is clear: if $\Delta$ is to be satisfiable, then for each $\vect \in D(\Delta)$ there should exist a satisfiable $\Delta'$ that $\Delta$ reaches  via $\sigma$.
Formally, the rule is stated as follows, where $ D(\Delta)$ is defined in the \rnext: 

\smallskip
\textbf{\eun}: 
If, for some $\sigma \in D(\Delta)$, 
all states $\Delta'$ with $\Delta \stackrel{\sigma}{\longrightarrow} \Delta'$ have been eliminated at earlier stages, then obtain $\tabb{n+1}$ by eliminating $\Delta$  (together with its adjacent edges) from $\tabb{n}$.

\smallskip

The aim of the next elimination rule 
is to make sure that there are no \emph{unrealized eventualities}. In \ATL there are only two kinds of eventualities : $\diaA \varphi \until \psi$ and $\crochetA \varphi \until \psi$ . 
The situation is more complex in \ATLp. 
For instance, should the formula $\diaA(\Box\varphi \vee \psi_1\until \psi_2)$ be considered an eventuality?
Our solution for \ATLp is to consider all $\gamma$-formulae as \emph{potential eventualities}.  
In order to properly define the notion of \emph{realization of a potential eventuality} we first define 
a Boolean-valued function $Real$ that takes as arguments two elements: an \ATLp path-formula $\Phi$ and a set $\Theta$ of \ATLp state-formulae. This function allows us to check the realization of a potential eventuality of the form $\diaA\Phi$ and $\crochetA\Phi$ (where $\Phi$ is the first argument of $Real$) at a given state labelled by $\Theta$ (where $\Theta$ is the second argument of $Real$).
 \begin{itemize}
    \item $\fonctA{\Phi \wedge \Psi}{\Theta} = \fonctA{\Phi}{\Theta} \wedge \fonctA{\Psi}{\Theta}$
    \item $\fonctA{\Phi \vee \Psi}{\Theta} = \fonctA{\Phi}{\Theta} \vee \fonctA{\Psi}{\Theta}$
      \item $\fonctA{\varphi}{\Theta} = true$ iff $\varphi \in \Theta$
    \item $\fonctA{\rond\varphi}{\Theta} = false$ 
    \item $\fonctA{\Box\varphi}{\Theta} = true$ iff $\varphi \in \Theta$
    \item $\fonctA{\varphi\until\psi}{\Theta}= true$ iff $\psi \in \Theta$
  \end{itemize}

\begin{definition}[Descendant potential eventualities]
\label{def:descendant}
Let $\xi \in \Delta$ be a potential eventuality of the form $\diaA\Phi$ or $\crochetA\Phi$. 
Suppose the $\gamma$-component $\gamma(\xi, \Delta)$ in $\Delta$ linked to $\xi$ is, respectively, of the form 
 $\psi\wedge\ddiaA\Psi$ or $\psi\wedge\dcrochetA\Psi$. Then the \emph{successor potential eventuality} of $\xi$ w.r.t. $\gamma(\xi, \Delta)$ is the $\gamma$-formula $\diaA\Psi$ (resp. $\crochetA\Psi$) and it will be denoted by $\xi^{1}_{\Delta}$. 
 The notion of \emph{descendant potential eventuality of  $\xi$ of degree $d$}, for $d>1$,
is defined inductively as follows:

- any successor eventuality  of $\xi$  (w.r.t. some $\gamma$-component of $\xi$) is a descendant eventuality of $\xi$ of degree 1;  

- any successor eventuality of a descendant eventuality $\xi^n$ of $\xi$ of degree $n$ is a descendant eventuality of $\xi$ of degree $n+1$. 

We will also consider $\xi$
to be a descendant eventuality of itself of degree 0. 
\end{definition}

\begin{example}
Let $\xi = \diaA(\Box p \lor q \until r)$ be a potential eventuality such that $\xi \in \Theta$, where $\Theta$ is the labelling of a state $\Delta$. Let us see the different cases that can occur and the corresponding result of the function $Real$.
\begin{enumerate}
\item $p \in \Theta$ and $r \in \Theta$. In that case, $\fonctA{\Box p}{\Theta} = true$ and $\fonctA{q \until r}{\Theta} = true$, so $\fonctA{\Box p \lor q \until r}{\Theta} = true$. This is indeed correct since $q \until r$ is immediately realized. 
\item $p \not\in \Theta$ and $r \in \Theta$. This case is similar to the previous one even if $\fonctA{\Box p}{\Theta} = false$, indeed $\fonctA{\Box p \lor q \until r}{\Theta} = true$.
\item 
$p \not\in \Theta$ and $r \not\in \Theta$. Here $\fonctA{\Box p \lor q \until r}{\Theta} = false$ and the potential eventuality $\xi$ is not immediately realized. The \sr guarantees that $q \in \Theta$. This case means that the part $\Box p$ of $\xi$ is skipped and the part $q \until r$ will be continued. Therefore the next potential eventuality to be realized is $\diaA q \until r$. The immediate realization of this new potential eventuality will be checked again at next states.
\item $p \in \Theta$ and $r \not\in \Theta$. The potential eventuality $\xi$ is immediately realized since $\fonctA{\Box p}{\Theta} = true$, but two sub-cases can be distinguished to explain why this is correct:
\begin{enumerate}
\item $q \in \Theta$. Here both possibilities to do either $\Box p$ or $q \until r$ are kept. So the successor potential eventuality is $\diaA(\Box p \lor q \until r)$ and its immediate realization will be checked again at next states. 
\item $q \not\in \Theta$. This means that only the part $\Box p$ will be kept and the successor potential eventuality is $\diaA\Box p$. This case can be declared immediately realized since the construction rules of the tableau guarantees that $\diaA\Box p$ is correctly treated.
\end{enumerate}
\end{enumerate}
\end{example}

\begin{example}
\label{run-ex6}
(\textit{Continuation of Example \ref{run-ex4}}) \  
In $\Delta_5$ we have $\xi = \diams{1}(p\until q\vee\Box q)$ with $\fonctA{p\until q\vee\Box q}{\Delta_1} = \fonctA{p\until q}{\Delta_1} \vee \fonctA{\Box q}{\Delta_1} = false \vee false = false$, since $q \not\in \Delta_1$, and $\xi' = \diams{2}(\Diamond p \wedge \Box \neg q)$  with $\fonctA{\Diamond p \wedge \Box\neg q}{\Delta_1} = \fonctA{\Diamond p}{\Delta_1} \wedge \fonctA{\Box\neg q}{\Delta_1} = true \wedge true = true$ since $p,\neg q \in \Delta_1$.

The successor eventuality of $\xi=\diams{1}(p\until q \vee \Box q)$ w.r.t $\gamma(\xi,\Delta_1)$ is $\xi^1_{\Delta_1}=\diams{1}p\until q$ in $\Delta_3,\Delta_4,\Delta_5,\Delta_6$. For each $n>1$, the descendant eventuality of degree $n$ of $\xi$ w.r.t $\gamma(\xi,\Delta_1)$ is $\xi^n_{\Delta_1} = \xi^1_{\Delta_1}$ in $\Delta_3,\Delta_4\Delta_5,\Delta_6,\Delta_{10}$. 
The successor eventuality of $\xi'= \diams{2}(\Diamond p \wedge \Box\neg q)$ w.r.t $\gamma(\xi',\Delta_5)$ is $\xi'^1_{\Delta_5} = \diams{2}(\Diamond p \wedge \Box\neg q)$ in $\Delta_5$, $\Delta_6$, $\Delta_8$. For each $n>1$, the descendant eventualities of degree $n$ of $\xi'$ w.r.t $\gamma(\xi', \Delta_5)$ are 
$\xi'^n_{\Delta_5} = \xi'^1_{\Delta_5}$ in $\Delta_5$, $\Delta_6$, $\Delta_8$ and $\Delta_9$; and $\xi'^n_{\Delta_5} = \diams{2}\Box\neg q$ in $\Delta_{10}$ and $\Delta_{11}$.
\end{example}

Now, let $\mathbb{L} = \diamsn{A_0}\varphi_0,\dots,\diamsn{A_{m-1}}\varphi_{m-1},\crochetn{A'_0}\psi_0,\dots,\crochetn{A'_{l-1}}\psi_{l-1}$
be the list of all primitive successor formulae of $\Delta \in \settab{0}$, induced as part of an application of \textbf{(Next)}. We will use the following notation:\\
 $D(\Delta, \diamsn{A_p}\varphi_p) := \{\sigma \in D(\Delta) \mid \sigma_a = p \text{ for every } a \in A_p \}$\\
 $D(\Delta, \crochetn{A'_q}\psi_q) := \{\sigma \in D(\Delta) \mid \co(\sigma) = q \text{ and } \agents - A'_q \subseteq N(\sigma)\}$

 Next, we will  define recursively what it means for an eventuality $\xi$ to be realized at a state $\Delta$ of a tableau $\tabb{n}$, followed by our second elimination rule. 

\begin{definition}[Realization of potential eventualities]
\label{def:realEvent}
Let $\Delta \in \settab{n}$ and $\xi \in \Delta$ be a potential eventuality of the form $\diaA\Phi$ or $\crochetA\Phi$. Then: 
 \begin{enumerate}
  \item If $\fonctA{\Phi}{\Delta} = true$ then $\xi$ is realized at $\Delta$ in $\tabb{n}$.
  \item Else, let $\xi^1_{\Delta}$ be the successor potential eventuality of $\xi$ w.r.t. $\gamma(\xi, \Delta)$. If for every $\sigma \in D(\Delta, \diaAn\xi^1_\Delta)$ (resp. $\sigma \in D(\Delta, \crochetAn\xi^1_\Delta)$), 
  there exists $\Delta' \in \tabb{n}$ with
$\Delta \stackrel{\sigma}{\longrightarrow} \Delta'$ and $\xi^1_\Delta$ is realized at $\Delta'$ in $\tabb{n}$, then $\xi$ is realized at $\Delta$ in $\tabb{n}$. 
 \end{enumerate}
\end{definition}

\textbf{\edeux}: If $\Delta \in \settab{n}$ contains a potential eventuality that is not realized at $\Delta \in \tabb{n}$, 
then obtain $\tabb{n+1}$ by removing $\Delta$ (together with its adjacent edges) from $\settab{n}$.

\begin{example}
\label{run-ex7}
(\textit{Continuation of Example \ref{run-ex6}}) 
 The potential eventuality $\xi''=\diams{1}(p\until q)$ is not realized at $\Delta_5$, so by \edeux we remove the state $\Delta_5$ from $\taba{\theta}{0}$ and obtain the tableau $\taba{\theta} 
{1}$. The same applies to $\Delta_6$ for $\xi''$, so we also remove $\Delta_6$ from $\taba{\theta}{1}$ and obtain $\taba{\theta}{2}$ with \edeux. In $\taba{\theta}{2}$ there is no more move vector $(0,1)$ for the state $\Delta_1$, so by \eun we remove $\Delta_1$ from $\taba{\theta}{2}$ and obtain $\taba{\theta}{3}$. In the same way, $\Delta_{10}$ is removed by \edeux and $\Delta_2$ by \eun.
 
For the case of $\vartheta$, it is easy to see that no state gets eliminated, so the final tableau is the same as the initial one. 
\end{example}

The elimination phase is completed when no more applications of elimination rules are possible. Then we obtain the \keyterm{final tableau} for $\initf$, denoted by $\tabb{}$. It is declared \keyterm{open} if $\initf$ belongs to some state in it,  otherwise \keyterm{closed}. The procedure for deciding satisfiability of $\initf$ returns ``No'' if $\tabb{}$ is closed, ``Yes'' otherwise.

\begin{example}
\label{run-ex8}
(\textit{Continuation of Example \ref{run-ex7}})  
 At the end of the elimination phase, $\Delta_1$ and $\Delta_2$ are no longer in $\tabfin{\theta}$. Thus $\tabfin{\theta}$ is closed and the formula $\theta = \diams{1}(p \until q \vee \Box q) \wedge \diams{2}(\Diamond p \wedge \Box \neg q)$ is declared unsatisfiable. The final tableau for $\theta$ is given on Figure \ref{fig:fin-tab:theta}.

Respectively,  the final tableau for $\vartheta$ is open, hence $\vartheta$ is declared satisfiable. 
Indeed, a CGM can be extracted from the final tableau.  We will explain in Section \ref{subsec:completeness} how this can be done in a systematic way. 

\end{example}

\medskip
\begin{figure}
\begin{center}
\vspace{-1cm}
\noindent\scalebox{0.65}{ 
\begin{tikzpicture}[node distance=3cm]

  \tikzstyle{pre_etat}=[text badly centered]
  \tikzstyle{etat}=[text badly centered]
  \tikzstyle{dfleche}=[->,>=latex, double]
  \tikzstyle{fleche}=[->,>=latex]

  \node(p1) {};

  \node[etat, text width=5cm] (D1) [left of=p1] {} ;
  \node[etat, text width=5cm] (D2) [right of=p1] {} ;

  \node[pre_etat, text width=2.5cm,xshift=0cm] (G1) [below right of=D1] {} ;
  \node[pre_etat, text width=2.5cm,xshift=0cm] (G4) [left of=G1] {} ;
  \node[pre_etat, text width=2.5cm,xshift=0cm] (G2) [left of=G4] {} ;
  \node[pre_etat, text width=2.5cm,xshift=0cm] (G3) [right of=G1] {} ;
  \node[pre_etat, text width=2.5cm,xshift=0cm] (G5) [right of=G3] {} ;
  \node[pre_etat, text width=2.5cm,xshift=0cm] (G6) [right of=G5] {} ;

  \node[etat,yshift=3cm] (D7) [below of=G3,node distance=2cm] {$\Delta_7$} ;
  \node[etat,yshift=3cm] (D10) [below of=G5,node distance=2cm] {} ;
  \node[etat,yshift=3cm] (D11) [below of=G6,node distance=2cm] {$\Delta_{11}$} ;
  \node[etat] (D4) [left of=D7,node distance=1.7cm] {$\Delta_4$} ;
  \node[etat] (D3) [left of=D4,node distance=1.7cm] {$\Delta_3$} ;
  \node[etat] (D9) [left of=D3,node distance=1.7cm] {$\Delta_9$} ;
  \node[etat] (D8) [left of=D9,node distance=1.7cm] {$\Delta_8$} ;
  \node[etat] (D6) [left of=D8,node distance=1.7cm] {} ;
  \node[etat] (D5) [left of=D6,node distance=1.7cm] {} ;

  \node(p2) [below left of=D5,node distance=1.5cm] {}; \node(p3) [below right of=D5,node distance=1.5cm] {};
  \node(p4) [below left of=D6,node distance=1.5cm] {}; \node(p5) [below right of=D6,node distance=1.5cm] {};
  \node(p6) [below left of=D10,node distance=1.5cm] {}; \node(p7) [below right of=D10,node distance=1.5cm] {};

\path[fleche]
     (D3) edge [loop left] node [left] {$0,0$} (D3)
          edge node [below] {$0,0$} (D4)
     (D4) edge  node [below] {$0,0$} (D7)
     (D7) edge [loop below] node [below] {$0,0$} (D7)
    (D8) edge [loop left] node [below] {$0,0$} (D8)
	 edge node [below] {$0,0$} (D9)
    (D9) edge [bend right=25] node [below,xshift=0.3cm] {$0,0$} (D11)
    (D11) edge [loop below] node [below] {$0,0$} (D11);

\end{tikzpicture}
}
\caption{The final tableau for $\theta$} 
\vspace{-0.5cm}
\label{fig:fin-tab:theta}
\end{center}
\end{figure}

\section{Termination and soundness}
\label{sec:soundness}

The termination of the tableau procedure is straightforward, as there are only finitely many states and prestates that can be added in the construction phase.

\begin{theorem}\label{thm:sound}
The tableau method for  \ATLp is sound. 
\end{theorem}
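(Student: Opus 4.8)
The claim to establish is that the procedure never rejects a satisfiable formula: if $\m,s_0\models\initf$ for some CGM $\m$ and state $s_0$, then the final tableau $\tabb{}$ for $\initf$ is open, i.e. some state of it contains $\initf$. The plan is to fix such $\m$ and $s_0$, to build during the construction phase a \emph{satisfiability witness} --- a partial map $\rho$ assigning to certain states and prestates a state of $\m$ (together with the strategic data needed below) so that every formula in the label of a node holds at $\rho$ of that node --- and then to show that nodes in the domain of $\rho$ survive the elimination phase. The conclusion follows at once, since the state obtained from the initial prestate $\{\initf\}$ lies in the domain of $\rho$ and contains $\initf$.

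For the \sr step, if $\rho(\Gamma)=s$ then $\m,s\models\bigwedge\Gamma$, so by Proposition~\ref{prop:FullExp} some full expansion $\De$ of $\Gamma$ satisfies $\m,s\models\bigwedge\De$; patently inconsistent sets being unsatisfiable, $\De$ is a genuine state, and adding the valid formula $\diamsn{\agents}\top$ when required is harmless. When several satisfiable full expansions exist I would select $\De$, and for each $\gamma$-formula $\xi=\diaA\Phi$ (or $\crochetA\Phi$) in it the linked $\gamma$-component $\gamma(\xi,\De)$, precisely as in the proof of Lemma~\ref{lem:gamma}(2): the pair $\langle\psi,\Psi\rangle\in\g{\Phi}$ whose conjuncts taken from right-hand sides of until-subformulae of $\Phi$ are exactly those already true at $s$. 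Put $\rho(\De):=s$. For the \rnext step, let $\rho(\De)=s$ and let $\mathbb{L}$ be the induced list of primitive successor formulae of $\De$. For each enforceable $\diamsn{A_p}\varphi_p$ fix a one-step $A_p$-action at $s$ all of whose outcomes satisfy $\varphi_p$, and for each unavoidable $\crochetn{A'_q}\psi_q$ fix an $A'_q$-co-action at $s$. Using the disjointness and containment properties recorded in Remark~\ref{rmk:rnext} (activated coalitions are pairwise disjoint; a profile carries at most one unavoidable-successor promise; $A_i\subseteq A'$ in the mixed case), every tableau action profile $\vect\in D(\De)$ can be matched with a genuine action profile of $\m$ at $s$ whose outcome state $s_{\vect}$ satisfies every member of $\Gamma_{\vect}$; set $\rho(\Gamma_{\vect}):=s_{\vect}$ (choosing one value when several $\vect$ produce the same prestate).

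I would then show, by induction on the elimination stage $n$, that no node of $\mathrm{dom}(\rho)$ is deleted. For \eun this is immediate: for $\De\in\mathrm{dom}(\rho)$ and every $\vect\in D(\De)$ the prestate $\Gamma_{\vect}$ lies in $\mathrm{dom}(\rho)$, hence by the \sr step has a successor state in $\mathrm{dom}(\rho)$, which by the induction hypothesis is still present. For \edeux one must check that every potential eventuality $\xi=\diaA\Phi$ (or $\crochetA\Phi$) in a state $\De\in\mathrm{dom}(\rho)$ is realized in the current tableau in the sense of Definition~\ref{def:realEvent}. Here the semantics enters: $\m,\rho(\De)\models\diaA\Phi$ via a perfect-recall strategy $F_A$, and along every play consistent with $F_A$ all until-subformulae occurring in $\Phi$ are fulfilled after finitely many steps. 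Following $F_A$ through the $\rho$-images of the descendant potential eventualities $\xi^1_{\De},\xi^2_{\De},\dots$ of Definition~\ref{def:descendant}, and using the model-compatible choice of the linked $\gamma$-components, one verifies that $\fonctA{\cdot}{\cdot}$ eventually returns $true$ along every branch and that each $\vect$-successor demanded by Definition~\ref{def:realEvent}(2) can be taken inside $\mathrm{dom}(\rho)$; since the sets $D(\De,\diaAn\xi^1_{\De})$ are finite, this yields a finite realization witness, so \edeux does not apply to $\De$. Finally, the state produced from $\{\initf\}$ by the $\rho$-guided \sr step contains $\initf$ and belongs to $\mathrm{dom}(\rho)$, hence persists into $\tabb{}$, which is therefore open.

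I expect the case of \edeux to be the main obstacle, precisely because the linked $\gamma$-component of a state is fixed once and for all, whereas the witnessing strategy $F_A$ admits many consistent plays, each of which might ``prefer'' a different decomposition pair in $\g{\Phi}$. This is exactly what the $\oplus$ operation in the definition of $\fg$ is designed to absorb, and the argument must make precise that the single pair selected as in Lemma~\ref{lem:gamma}(2) --- pooling the state formulae by $\wedge$ and the continuation path formulae by $\vee$, while recording which until-subformulae are already fulfilled --- does furnish a measure that decreases along the descendant chain, so that the recursion defining realization is well founded.
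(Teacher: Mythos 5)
Your proposal is correct and follows essentially the same route as the paper: Proposition~\ref{prop:FullExp} for the \sr step, the coalition-disjointness properties of Remark~\ref{rmk:rnext} (the paper's Lemma~\ref{mergeVectors}) for the \rnext step, and a finite realization witness tree extracted by following the perfect-recall witnessing strategy to block \edeux; your ``witness map $\rho$'' is just a single-model repackaging of the paper's invariant that satisfiable states are never eliminated. The delicate point you flag at the end --- reconciling the one fixed linked $\gamma$-component with the many plays of $F_A$ by choosing the pair of Lemma~\ref{lem:gamma}(2) whose fulfilled until-parts match the current model state --- is exactly how the paper's Lemmas~\ref{eventualities_realized} and \ref{lemma:realization} resolve it.
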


Soundness of the tableau procedure with respect to unsatisfiability means that if a formula is satisfiable then its final tableau is open. 
To prove that, we essentially follow the same procedure as in the soundness proof for the tableau-based decision procedure for \atl in \cite{GorankoShkatov09ToCL}. 

The soundness proof consists of three main claims. 
First, we show that when a prestate $\Gamma$ is satisfiable then at least one of the states in $\mathbf{states}(\Gamma)$ is satisfiable. 
Then, we prove that when a state $\Delta$ is satisfiable then all the prestates in $\mathbf{prestates}(\Delta)$ are satisfiable. 
Finally, we show that no satisfiable states are eliminated in the elimination phase.
Below, we take the input formula of the tableau procedure to be $\initf$.

%
%

The first step of the proof consists in showing that  \sr is sound:

\begin{lemma}
\label{lemme_sat1}
Let $\Gamma$ be a prestate of $\pretab$ and let $\m,s \models \Gamma$ for some CGM $\m$ and some $s \in \m$. 
Then, $\m,s\models \Delta$ holds for at least one $\Delta \in \mathbf{states}(\Gamma)$.
\end{lemma}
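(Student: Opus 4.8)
The plan is to derive the statement almost directly from Proposition~\ref{prop:FullExp}. First I would unwind how \sr populates $\mathbf{states}(\Gamma)$: every state in $\mathbf{states}(\Gamma)$ carries as its label either a full expansion $\Delta$ of $\Gamma$ or such a $\Delta$ augmented with the extra formula $\diamsn{\agents}\top$ (added when $\Delta$ contains no successor formula), and conversely each $\Delta \in FE(\Gamma)$ gives rise to exactly one such state — the clause of \sr that forbids duplicate labels only avoids creating a second copy and therefore does not remove any full expansion from $\mathbf{states}(\Gamma)$. Hence it suffices to exhibit a full expansion $\Delta \in FE(\Gamma)$ with $\m,s \models \Delta$, after observing that $\m,s \models \diamsn{\agents}\top$ holds in every CGM at every state, since each agent always has an available action and so $\out(s,\vec{\acta}_{\agents}) \ne \emptyset$ for some action profile.

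Next I would invoke Proposition~\ref{prop:FullExp}, which states $\bigwedge \Gamma \equiv \bigvee \{ \bigwedge \Delta \mid \Delta \in FE(\Gamma) \}$. The hypothesis $\m,s \models \Gamma$ means $\m,s \models \bigwedge \Gamma$, so by this equivalence there is some $\Delta \in FE(\Gamma)$ with $\m,s \models \bigwedge \Delta$, i.e., $\m,s \models \varphi$ for every $\varphi \in \Delta$. Combining this with the first paragraph, the state of $\pretab$ labelled by $\Delta$ (or by $\Delta \cup \{ \diamsn{\agents}\top \}$, if that is the label \sr assigned) lies in $\mathbf{states}(\Gamma)$ and is satisfied at $(\m,s)$, which is exactly the desired conclusion.

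I do not expect a genuine obstacle in this lemma: the real content has been front-loaded into Proposition~\ref{prop:FullExp}, which in turn rests on Lemma~\ref{lem:gamma} — the equivalence of a $\gamma$-formula with the disjunction of its $\gamma$-components — and that is the point where the \ATLp case genuinely differs from \ATL. The only items demanding a little care here are the two pieces of bookkeeping noted above: the auxiliary conjunct $\diamsn{\agents}\top$ possibly introduced by \sr (which is trivially true and so does not affect satisfiability), and the identification of $\mathbf{states}(\Gamma)$ with $FE(\Gamma)$ modulo that conjunct, despite the duplicate-label clause of the rule.
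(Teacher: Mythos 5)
Your proposal is correct and follows exactly the paper's route: the paper's entire proof of this lemma is ``Straightforward from Proposition~\ref{prop:FullExp}'', and your argument simply spells out that reduction, including the (harmless) bookkeeping about the valid conjunct $\diamsn{\agents}\top$ that \sr may add. No gap.
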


\begin{proof}
Straightforward from 
Proposition \ref{prop:FullExp}. \qed
\end{proof}


The aim of the next two lemmas is to show that the \rnext creates only satisfiable prestates from satisfiable states.

We recall that we use  $\vectorsCoalatstate{A}{s}$ 
to denote the set of all A-actions that can be played by the 
coalition $A$ at state $s$, i.e. $\vectorsCoalatstate{A}{s} = \Pi_{\aga \in A} \vectorsCoalatstate{a}{s}$. 
We also use $\ensCoVect{A}{s}$ to denote the set of all $A$-co-actions available at state $s$ 
and $\covect{A}$ for an element of this set.
Let $\sigma_{A} \in \vectorsCoalatstate{A}{s}$. We say that an action profile $\moveAll$ extends an $A$-action $\sigma_A$, denoted by $\moveAll \sqsupseteq \sigma_A$, if $\moveAll(a) = \sigma_A(a)$ for every $a \in A$. We also use 
$\Out(s,\sigma_{A})$ to denote the set of all states $s'$ for which there exists an action profile
$\moveAll \in \vectorsatstate{s}$ that extends $\sigma_{A}$ and such that $\out(s, \moveAll) = s'$.
We define in a same way $\sqsupseteq$ and $\Out(s,\sigma^c_A)$ for an $A$-co-action $\sigma^c_A \in \ensCoVect{A}{s}$.

The following lemma states a semantic property, independent of the tableau construction.
\begin{lemma}
\label{mergeVectors}
 Let $\Theta = \{\diamsn{A_1}\varphi_1,\dots,\diamsn{A_m}\varphi_m,\crochetn{A'}\psi\}$
 be a set of formulae such that $A_i \cap A_j = \emptyset$ for every $1 \leqslant i,j \leqslant m$, $i\neq j$ and 
 $A_i \subseteq A'$ for every $1 \leqslant i \leqslant m$. 
 Let $\m,s \models \Theta$ for some GCM $\m$ and $s \in \m$. 
 Let $\sigma_{A_i} \in  \vectorsCoalatstate{A_i}{s}$ be an $A_i$-action witnessing the truth of $\diamsn{A_i}\varphi_i$ at $s$, for each $1 \leqslant i \leqslant m$, and let, finally, $\covect{A'} \in \ensCoVect{A'}{s}$ be an $A'$-co-action witnessing the truth of $\crochetn{A'}\psi$ at $s$. 
Then there exists $s' \in \Out(s,\sigma_{A_1})\cap\dots\cap \Out(s,\sigma_{A_m})\cap \Out(s,\covect{A'})$ such that $\m,s' \models \{\varphi_1,\dots,\varphi_m,\psi\}$.
\end{lemma}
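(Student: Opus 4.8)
The plan is to build a single action profile $\moveAll$ at $s$ that simultaneously extends each witnessing action $\sigma_{A_i}$ and is consistent with the witnessing co-action $\covect{A'}$, and then to let $s'$ be the unique $\out$-successor of $s$ under $\moveAll$; the required properties of $s'$ will then fall out directly from the fact that the $\sigma_{A_i}$ and $\covect{A'}$ are witnesses.

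First I would merge the actions $\sigma_{A_1},\dots,\sigma_{A_m}$. Since the $A_i$ are pairwise disjoint, the assignment sending each $a \in A_i$ to $\sigma_{A_i}(a)$ is a well-defined action $\sigma_B$ for the coalition $B := A_1 \cup \dots \cup A_m$; this well-definedness is exactly where disjointness of the $A_i$ is used. Because each $A_i \subseteq A'$, we have $B \subseteq A'$, so I can extend $\sigma_B$ to an $A'$-action $\sigma_{A'}$ by picking, for every $a \in A' \setminus B$, some action in $\dd_a(s)$ (nonempty by the definition of a CGS). Then I would apply the co-action: let $\moveAll$ be the action profile agreeing with $\sigma_{A'}$ on the agents in $A'$ and with $\covect{A'}(\sigma_{A'})$ on the agents in $\agents \setminus A'$; this is a legitimate action profile at $s$. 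Since $\m$ is deterministic, $\out(s,\moveAll)$ is a single state, which I take to be $s'$.

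The verification step then proceeds as follows. By construction $\moveAll \sqsupseteq \sigma_{A_i}$ for each $i$, since the restriction of $\moveAll$ to $A_i$ coincides with that of $\sigma_{A'}$, which equals $\sigma_{A_i}$; hence $s' \in \Out(s,\sigma_{A_i})$, and as $\sigma_{A_i}$ witnesses $\m,s \models \diamsn{A_i}\varphi_i$ (i.e. every state reachable by a profile extending $\sigma_{A_i}$ satisfies $\varphi_i$) we get $\m,s' \models \varphi_i$. Likewise $\moveAll \sqsupseteq \covect{A'}$ by the very way $\moveAll$ was assembled from $\sigma_{A'}$ and $\covect{A'}(\sigma_{A'})$, so $s' \in \Out(s,\covect{A'})$, and as $\covect{A'}$ witnesses $\m,s \models \crochetn{A'}\psi$ we get $\m,s' \models \psi$. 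This yields both the membership $s' \in \Out(s,\sigma_{A_1}) \cap \dots \cap \Out(s,\sigma_{A_m}) \cap \Out(s,\covect{A'})$ and $\m,s' \models \{\varphi_1,\dots,\varphi_m,\psi\}$, as required.

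I do not anticipate a genuine obstacle: the argument is essentially bookkeeping with the definitions. The only two points needing care are (i) that the merged action $\sigma_B$ is well defined, which uses the pairwise disjointness hypothesis on the $A_i$, and (ii) that $B \subseteq A'$, which uses the hypothesis $A_i \subseteq A'$ and is what lets the co-action $\covect{A'}$ be legitimately fed the extended $A'$-action $\sigma_{A'}$. The degenerate case $m=0$ is harmless: then $B = \emptyset$, $\sigma_{A'}$ is chosen arbitrarily, and the intersection over the empty family of $\Out(s,\sigma_{A_i})$'s is vacuous.
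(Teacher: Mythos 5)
Your proof is correct and follows essentially the same route as the paper's: merge the pairwise disjoint $A_i$-actions into an action for $A_1\cup\dots\cup A_m$, extend it arbitrarily to an $A'$-action, complete it with the co-action $\covect{A'}$, and take $s'$ to be the resulting successor. The paper states this in four sentences; your version merely adds the (correct) verification details.
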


\begin{proof}
Let $A = A_1 \cup \ldots \cup A_m$. Since $A_i \cap A_j = \emptyset$ for every $1 \leqslant i,j \leqslant m$, the actions $\sigma_{A_1},\ldots,\sigma_{A_m}$ can be combined to get
an $A$-action $\sigma_{A}$. This last can be arbitrarily extended to an $A'$-action $\sigma_{A'}$
because $A_i \subseteq A'$ for every $1 \leqslant i \leqslant m$. Finally, the so obtained $\sigma_{A'}$
can be completed by the $A'$-co-action $\covect{A'}$. The resulting action $\sigma_{\agents}$ leads
from $s$ to the desired $s'$.
\qed
\end{proof}

The next lemma states 
that satisfiability propagates from states to their successor prestates created via \rnext.
\begin{lemma}
\label{states_wrt_prestates}
 If $\Delta \in \pretab$ is a satisfiable state then all the prestates $\Gamma$ obtained by applying the \rnext\ are satisfiable. 
\end{lemma}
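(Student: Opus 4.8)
The plan is to fix a CGM $\m$ and a state $s$ in it with $\m,s\models\Delta$, and to show that each prestate $\Gamma_{\vect}$ produced from $\Delta$ by the \rnext\ is satisfiable — in fact satisfiable at some successor of $s$ reachable by an action profile that ``implements'' $\vect$. First I would unwind what the \rnext\ actually puts into $\Gamma_{\vect}$: recall from the rule that $\mathbb{L}=\diamsn{A_0}\varphi_0,\dots,\diamsn{A_{m-1}}\varphi_{m-1},\crochetn{A'_0}\psi_0,\dots,\crochetn{A'_{l-1}}\psi_{l-1}$ is the list of primitive successor formulae of $\Delta$, and $\Gamma_{\vect}$ consists of the successor components $\varphi_p$ of those enforceable formulae $\diamsn{A_p}\varphi_p$ with $\vect_a=p$ for all $a\in A_p$, together with the successor component $\psi_q$ of the unique unavoidable formula $\crochetn{A'_q}\psi_q$ with $\co(\vect)=q$ and $\agents\setminus A'_q\subseteq N(\vect)$ (if such exists). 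Since $\m,s\models\Delta$, each $\diamsn{A_p}\varphi_p\in\Delta$ has a witnessing $A_p$-action $\sigma_{A_p}$ at $s$ forcing $\varphi_p$ at every successor, and each $\crochetn{A'_q}\psi_q\in\Delta$ has a witnessing $A'_q$-co-action $\covect{A'_q}$ at $s$.

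The key point, recorded in Remark \ref{rmk:rnext}, is that the combinatorics of the \rnext\ guarantees exactly the compatibility hypotheses of Lemma \ref{mergeVectors}: the index sets $A_{p}$ of the enforceable formulae whose components land in $\Gamma_{\vect}$ are pairwise disjoint (two agents cannot simultaneously be assigned distinct actions), and each such $A_{p}$ is contained in the index set $A'_{q}$ of the unavoidable formula selected by $\co(\vect)$. Thus I would take the (at most one) unavoidable formula $\crochetn{A'_q}\psi_q$ contributing to $\Gamma_{\vect}$, set $\Theta=\{\diamsn{A_p}\varphi_p \mid \text{the }\varphi_p\text{ is put in }\Gamma_{\vect}\}\cup\{\crochetn{A'_q}\psi_q\}$, check that $\Theta$ satisfies the disjointness-and-containment hypothesis of Lemma \ref{mergeVectors}, and apply that lemma with the witnessing actions $\sigma_{A_p}$ and the witnessing co-action $\covect{A'_q}$ to obtain a state $s'$ with $\m,s'\models\{\varphi_p\}\cup\{\psi_q\}$, i.e. $\m,s'\models\Gamma_{\vect}$. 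In the degenerate cases one handles separately: if $\Gamma_{\vect}$ received no unavoidable component, a direct merge of the $A_{p}$-actions (no co-action needed) suffices; if moreover there are no enforceable components either, then $\Gamma_{\vect}=\{\top\}$, which is trivially satisfiable.

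The main obstacle I expect is the purely bookkeeping verification that the set of formulae actually feeding $\Gamma_{\vect}$ meets the hypotheses of Lemma \ref{mergeVectors} — in particular justifying carefully, from the definitions of $N(\vect)$, $\co(\vect)$ and the membership conditions in the \rnext, the three bullet points of Remark \ref{rmk:rnext} (pairwise disjointness of the relevant $A_{p}$'s, at most one unavoidable component, and $A_{p}\subseteq A'_{q}$ for that component). Once these are in hand the conclusion is immediate from Lemma \ref{mergeVectors}, so the semantic content is entirely delegated; the work is in matching the syntactic side-conditions of the \rnext\ to the algebraic hypotheses of the merging lemma. I would also note in passing that the argument gives slightly more than asserted: not only is each $\Gamma_{\vect}$ satisfiable, it is satisfiable at a successor $s'$ of $s$ that is jointly consistent with all the witnessing $A_p$-actions and the witnessing $A'_q$-co-action at $s$, which is the form in which this lemma will be reused in the soundness proof of \edeux.
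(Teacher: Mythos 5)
Your proposal is correct and follows essentially the same route as the paper: the paper's (very terse) proof likewise reduces the claim to Lemma \ref{mergeVectors} applied to the witnessing actions and co-action, using the combinatorial facts recorded in Remark \ref{rmk:rnext} to verify that lemma's disjointness-and-containment hypotheses. Your more explicit unwinding of which formulae feed $\Gamma_{\vect}$, and your handling of the degenerate cases (no unavoidable component, or $\Gamma_{\vect}=\{\top\}$), fills in exactly the bookkeeping the paper leaves implicit.
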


\begin{proof}
Follows by induction on the number of steps in the construction of the tableau, and from Lemma \ref{mergeVectors} and Remark \ref{rmk:rnext}. \qed 
\end{proof}

Thus, the \sr generates at least one satisfiable state from a satisfiable prestate and that the \rnext generates only satisfiable prestates from a satisfiable state.  
 Hence, we can conclude that the construction phase of the tableau procedure is sound.
 
\medskip 
%
%
We now move to the elimination phase. 

\begin{lemma}
 Let $\Delta$ be a state in $\tabb{n}$. If $\Delta$ is satisfiable then \eun cannot eliminate $\Delta$ from $\tabb{n}$.
\end{lemma}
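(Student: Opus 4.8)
The plan is to argue at the level of individual action vectors: I will show that a satisfiable state $\Delta\in\tabb{n}$ retains, for every $\sigma\in D(\Delta)$, at least one satisfiable $\sigma$-successor, so the premise of \eun — that all $\sigma$-successors of $\Delta$ have been eliminated at earlier stages — can never be fulfilled. This lemma is meant to be read as one case of the overarching induction on the elimination stages whose hypothesis is that every state eliminated before stage $n$ is unsatisfiable, and I will use that hypothesis.

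Concretely, fix a satisfiable $\Delta\in\tabb{n}$, say $\m,s\models\Delta$ for some CGM $\m$ and $s\in\m$, and fix an arbitrary $\sigma\in D(\Delta)$. First I note that $D(\Delta)$, the list $\mathbb{L}$ of primitive successor formulae of $\Delta$, and the prestate $\Gamma_\sigma$ produced by \rnext depend only on the label of $\Delta$, which does not change during the elimination phase, and that the $\longrightarrow$-edges of $\tabb{n}$ form a subset of those of the initial tableau $\tabb{0}$, where $\Delta\stackrel{\sigma}{\longrightarrow}\Delta'$ holds exactly when $\Delta\stackrel{\sigma}{\longrightarrow}\Gamma_\sigma$ in $\pretab$ and $\Delta'\in\st{\Gamma_\sigma}$. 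Hence the $\sigma$-successors of $\Delta$ available in $\tabb{n}$ are among the states in $\st{\Gamma_\sigma}$, and \rnext does create such a prestate $\Gamma_\sigma$ for every $\sigma\in D(\Delta)$ (adding $\top$ when the set would be empty).

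Next I invoke the two construction-phase lemmas. By Lemma \ref{states_wrt_prestates}, since $\Delta$ is satisfiable and $\Gamma_\sigma$ is obtained from $\Delta$ by \rnext, the prestate $\Gamma_\sigma$ is satisfiable; then by Lemma \ref{lemme_sat1} there is a satisfiable state $\Delta'_\sigma\in\st{\Gamma_\sigma}$, and in $\tabb{0}$ we have $\Delta\stackrel{\sigma}{\longrightarrow}\Delta'_\sigma$. Now the stage-induction hypothesis enters: a satisfiable state cannot have been eliminated at any stage $<n$, so $\Delta'_\sigma$ still belongs to $\tabb{n}$ and the edge $\Delta\stackrel{\sigma}{\longrightarrow}\Delta'_\sigma$ survives. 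Thus for this $\sigma$ not all $\sigma$-successors of $\Delta$ have been eliminated; since $\sigma\in D(\Delta)$ was arbitrary, the premise of \eun fails for $\Delta$, so \eun cannot eliminate $\Delta$ from $\tabb{n}$.

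The only genuinely delicate point is the reliance on the induction hypothesis that no satisfiable state has yet been removed: this forces the lemma to be proved together with (or right after) the companion statement that \edeux also cannot remove a satisfiable state, the two jointly furnishing the step of the induction on elimination stages. I expect that interlocking to be the main thing to organize carefully, whereas the per-$\sigma$ argument itself is short given Lemmas \ref{lemme_sat1} and \ref{states_wrt_prestates}.
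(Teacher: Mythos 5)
Your proof is correct and follows essentially the same route as the paper's: propagate satisfiability from $\Delta$ to each successor prestate $\Gamma_\sigma$ (Lemma \ref{states_wrt_prestates}), then to at least one state in $\st{\Gamma_\sigma}$ (Lemma \ref{lemme_sat1}), and conclude that every $\sigma\in D(\Delta)$ keeps a surviving successor. Your explicit appeal to the stage induction (``no satisfiable state has been eliminated before stage $n$'') is a point the paper leaves implicit, and making it explicit is a welcome refinement rather than a deviation.
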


\begin{proof}
By Lemma \ref{mergeVectors}  a satisfiable state $\Delta$ generates only satisfiable successor prestates, and, by Lemma \ref{states_wrt_prestates}, 
each of these prestates in turn generates at least one satisfiable state. Therefore, if $\Delta$ is satisfiable 
then for every action profile $\moveAll \in \vectorsatstate{s}$ there is a state $\Delta'$ such that $\Delta \stackrel{\moveAll}{\longrightarrow} \Delta'$. 
Therefore the \eun cannot eliminate a satisfiable state. \qed
\end{proof}

It remains to be proved that a satisfiable state cannot be eliminated by \edeux, either.  
We recall that \edeux eliminates each state containing an eventuality that is not realized at that state. 
So we need to prove that if a state $\Delta$ is satisfiable, then every eventuality $\xi \in \Delta$ is realized at $\Delta$ throughout the elimination phase.

Note that the structure underlying a tableau can be seen as a non-deterministic CGS, where edges outgoing from a tableau state can lead to different successors even if they are labelled by the same action vector. The following two definitions will be used to extract  deterministic transitions from non-deterministic ones.

\begin{definition}[Outcome set of $\sigma_A$ at $s$]
Let $\cgs$ be a non-deterministic concurrent game structure,  let $s$ be a state
and let $\sigma_A \in \vectorsCoalatstate{A}{\sigma}$.
An \keyterm{outcome set of $\sigma_A$ at $s$}  
is a set of states $X$ such that for every $\moveAll \sqsupseteq \sigma_A$ 
there exists exactly one $s' \in X$ such that $s \stackrel{\moveAll}{\longrightarrow} s'$.  
\end{definition}

\begin{definition}[Outcome set of $\covect{A}$ at $s$]
Let $\cgs$ be a non-deterministic concurrent game structure,  let $s$ be a state
and let  $\covect{A} \in \Acovectsatstate{s}$.
 An \keyterm{outcome set of $\covect{A}$ at $s$} is a  set of states $X$ such that 
 for every $\sigma_A  \in \vectorsCoalatstate{A}{s}$
 there exists exactly one $s' \in X$ such that 
 $s \stackrel{\sigma^{c}_{A}(\sigma_A)}{\longrightarrow} s'$. 
 
\end{definition}

In particular, both definitions above can be applied to a tableau, where the states $s$  and $s'$ are taken to be tableau states $\Delta$ and $\Delta'$. 

\bigskip

Some notation. Consider a concurrent game structure  $\cgs$ which is  state-labelled by a set $\Theta$ of state formulae of \ATLp and suppose
that the elements of $\Theta$ are listed by any enumeration $E$ where enforceable next-time formulae appear before unavoidable next-time formulae, in particular, in the list $\mathbb{L}$ given in the definition of Rule \textbf{(Next)}. Then: 
\begin{enumerate}
  \item Whenever we write $\diamsn{A_p}\varphi_p \in \Theta$, we mean that $\diamsn{A_p}\varphi_p$ is the $p$-th enforceable next-time formula according to $E$.
  In particular, when $\cgs$ is a tableau, $E$ is usually assumed to be the listing of the successor formulae of $\Theta$ induced by the application of the \rnext to $\Theta$. 
  
  We use the notation $\crochetn{A'_q}\psi_q \in \Theta $ likewise. 
  
  \item Given $\diamsn{A_p}\varphi \in \Theta$, we denote by $\sigma_{A_p}[\diamsn{A_p}\varphi]$
a (somehow selected) $A$-action enforcing $\varphi$ in any associated successor state. 

In particular, when  $\cgs$  is a tableau, we denote by $\sigma_{A_p}[\diamsn{A_p}\varphi_p]$
the unique $A_p$-action $\sigma_{A_p} \in\vectorsCoalatstate{A_p}{\Theta}$  in the tableau such that $\sigma_{A_p}(a) = p$ for every $a \in A_p$.

   \item Likewise, given a formula $\crochetn{A'}\psi \in \Theta$, where $A'_q \not =\agents$, we denote by 
  $\covect{A'_q}[\crochetn{A'_q}\psi]$ a (somehow selected) $A'_q$-co-action $\sigma^c_{A'_q}$ enforcing
$\psi$ in any associated successor state.  

In particular, when  $\cgs$  is a tableau, we denote by $\covect{A'_q}[\crochetn{A'_q}\psi_q]$ 
the unique $A'_q$-co-action $\sigma^c_{A'_q} \in \ensCoVect{A'_q}{\Delta}$ in the tableau satisfying the following condition (with notation referring to the definition of \rnext): 
 $\co(\sigma^{c}_{A'_q}(\sigma_{A'_q})) = q$ and $\agents - A'_q\subseteq N(\sigma^{c}_{A'_q}(\sigma_{A'_q}))$ 
  for every $\sigma_{A'_q} \in \vectorsCoalatstate{A'_q}{\Delta}$.
\end{enumerate}

In order to prove that the rule \textbf{(ER2)} does not eliminate any satisfiable states, we need to show that if a tableau $\tabb n$ contains a state $\Delta$ that is satisfiable and contains an eventuality $\xi$ , then $\xi$ is realized at $\Delta$.
Thus we prove that $\tabb n$ ``contains'' a structure (more precisely, a tree) that ``witnesses'' the realization of $\xi$ at $\Delta$ in $\tabb n$. This tree will emulate a tree of runs effected by a strategy or a co-strategy that ``realizes'' an eventuality in a model. This simulation is done step-by-step, and each step, i.e. $A$-action (in the case of $\diaA\Phi$) or $A$-co-action (in the case of $\crochetA\Phi$) corresponds to a tableau action or co-action associated with a respective eventuality. The fact that this step-by-step simulation can be done is proved in the next two lemmas (together with their corollaries).

\begin{lemma}
\label{lem:outcomeSet}
 Let $\diamsn{A_p}\varphi_p \in \Delta \in \settab{n}$ and let 
 $\m,s \models \Delta$ for some CGM $\m$ and state $s \in \m$. 
 Let, furthermore, $\sigma_{A_p} \in \vectorsCoalatstate{A_p}{s}$ be an $A_p$-action witnessing the truth of 
 $\diamsn{A_p}\varphi_p$ at $s$. Then, there exists in $\tabb{n}$ an outcome set $X$ of 
 $\sigma_{A_p}[\diamsn{A_p}\varphi_p]$ such that for each $\Delta' \in X$ there exists 
 $s' \in \Out(s,\sigma_{A_p})$ such that $\m,s' \models \Delta'$.
\end{lemma}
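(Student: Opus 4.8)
The plan is to build the outcome set $X$ successor by successor. For every tableau action profile $\vect \in D(\Delta)$ extending the tableau $A_p$-action $\sigma_{A_p}[\diamsn{A_p}\varphi_p]$ — i.e. every $\vect$ with $\vect_a = p$ for all $a \in A_p$ — I would produce one $\vect$-successor $\Delta'_{\vect}$ of $\Delta$ in $\tabb{n}$ together with a state $s'_{\vect} \in \Out(s,\sigma_{A_p})$ of $\m$ with $\m,s'_{\vect} \models \Delta'_{\vect}$, and then set $X = \{\Delta'_{\vect} \mid \vect\}$.

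Fix such a $\vect$ and look at the prestate $\Gamma_{\vect}$ that the \rnext\ attached to $\Delta$ along the $\vect$-edge during the pretableau construction. Since $\vect$ extends $\sigma_{A_p}[\diamsn{A_p}\varphi_p]$, we have $\varphi_p \in \Gamma_{\vect}$; and by Remark~\ref{rmk:rnext} the set $\Gamma_{\vect}$ has exactly the shape required by Lemma~\ref{mergeVectors}: it is the union of $\{\varphi_i \mid \diamsn{A_i}\varphi_i \in \Delta \text{ fires at } \vect\}$, whose coalitions $A_i$ are pairwise disjoint and each contained in the coalition $A'$ of the at most one unavoidable formula $\crochetn{A'}\psi \in \Delta$ that fires at $\vect$, plus $\psi$ itself when such a formula fires. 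All these formulae belong to $\Delta$, hence hold at $s$ in $\m$; as the witnessing move for $\diamsn{A_p}\varphi_p$ I take precisely the given action $\sigma_{A_p}$, and pick arbitrary witnessing $A_i$-actions for the remaining enforceable formulae and an $A'$-co-action for $\crochetn{A'}\psi$. Lemma~\ref{mergeVectors} (or, when no unavoidable formula fires at $\vect$, the simpler observation that the pairwise-disjoint $A_i$-actions combine into a single action profile whose $\m$-successor satisfies every $\varphi_i$) then yields a state $s'_{\vect} \in \Out(s,\sigma_{A_p})$ with $\m,s'_{\vect} \models \Gamma_{\vect}$.

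From $\m,s'_{\vect} \models \Gamma_{\vect}$, Lemma~\ref{lemme_sat1} (equivalently Proposition~\ref{prop:FullExp}) provides a full expansion $\Delta'_{\vect} \in \st{\Gamma_{\vect}}$ with $\m,s'_{\vect} \models \Delta'_{\vect}$. Since $\tabb{0}$ is obtained from the pretableau by redirecting the $\vect$-edge of $\Delta$ through $\Gamma_{\vect}$ onto every state of $\st{\Gamma_{\vect}}$, we have $\Delta \stackrel{\vect}{\longrightarrow} \Delta'_{\vect}$ in $\tabb{0}$; and since $\Delta'_{\vect}$ is satisfiable it has not been removed by any earlier elimination step, so $\Delta'_{\vect} \in \settab{n}$ and the edge $\Delta \stackrel{\vect}{\longrightarrow} \Delta'_{\vect}$ is still in $\tabb{n}$ (an edge vanishes only with one of its endpoints). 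Choosing the $\Delta'_{\vect}$ coherently — reusing an already-chosen state whenever it happens also to be a $\vect$-successor of $\Delta$, which is harmless since every relevant $\m$-successor sits in the single set $\Out(s,\sigma_{A_p})$ — makes $X = \{\Delta'_{\vect}\}$ an outcome set of $\sigma_{A_p}[\diamsn{A_p}\varphi_p]$ at $\Delta$ in $\tabb{n}$ each of whose members is satisfied by some element of $\Out(s,\sigma_{A_p})$, which is the claim.

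The step I expect to be the most delicate is checking that $X$ is genuinely an outcome set, i.e. that for every admissible $\vect$ exactly one member of $X$ is a $\vect$-successor of $\Delta$ in $\tabb{n}$; this is precisely what forces the choices of the $\Delta'_{\vect}$ to be made consistently across action profiles sharing successors. The only potential circularity — that the satisfiable state $\Delta'_{\vect}$ might have been deleted by \edeux\ before stage $n$ — is only apparent: it is resolved by carrying out the entire ``no satisfiable state is ever eliminated'' argument as one induction on the stage of the elimination phase, so that when Lemma~\ref{lem:outcomeSet} is invoked for $\tabb{n}$ one may assume that all satisfiable states of $\tabb{0}$ still lie in $\settab{n}$.
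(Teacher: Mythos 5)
Your proposal is correct and follows essentially the same route as the paper's own proof: read off the shape of each successor prestate from the \rnext, invoke Lemma~\ref{mergeVectors} to obtain a model state $s'\in\Out(s,\sigma_{A_p})$ satisfying it, and then extend to a full expansion true at $s'$ by always choosing $\beta$- and $\gamma$-components that actually hold at $s'$. The only difference is that you make explicit two points the paper leaves tacit — the coherence of choices across action profiles sharing a successor (needed for the ``exactly one'' clause of the outcome-set definition) and the fact that persistence of the satisfiable $\Delta'$ into $\tabb{n}$ rests on the overall induction on elimination stages — which is a welcome clarification rather than a deviation.
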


\begin{proof}
We consider the following set of prestates (from the pretableau construction): 
\[Y = \{ \Gamma \in \mathbf{prestates}(\Delta) \mid \Delta \stackrel{\moveAll}{\longrightarrow} \Gamma \text{ for some }  \moveAll \sqsupseteq \sigma_{A_p}[\diamsn{A_p}\varphi_p]\}\]
For every $\Gamma \in Y$, it follows immediately from the \rnext that $\Gamma$ (which must contain $\varphi_p$) is
 either of the form 
 
 $\{\varphi_1,\dots,\varphi_m,\psi\}$, where
 $\{\diamsn{A_1}\varphi_1,\dots,\diamsn{A_m}\varphi_m,\crochetn{A'}\psi \} \subseteq \Delta$, 
 
 or of the form 
 
 $\{\varphi_1,\dots,\varphi_m\}$ where $\{\diamsn{A_1}\varphi_1,\dots,\diamsn{A_m}\varphi_m \} \subseteq \Delta$. 
 
 We can reduce the latter case to the former by adding the valid formula $\crochetn{\agents}\top$ (equivalent to $\diamsn{\emptyset}\top$).

Since $\m,s\models \Delta$,  
by Lemma \ref{mergeVectors}, 
there exists $s' \in \Out(s,\sigma_{A_p})$ with $\m,s' \models \Gamma$. 
Then $\Gamma$ can be extended to a fully expanded set
$\Delta'$ containing at least one successor formula 
($\diamsn{\agents}\top$, if nothing else) such that $\m,s' \models \Delta'$. 
This is done by choosing, for every $\beta$- or $\gamma$-formula to be processed in the procedure computing the family of full expansions, a disjunct, resp. a  $\gamma$-component, that is actually true in $\m$ at $s'$ (if there are several such options, the choice is arbitrary) and adding it to the current set. \qed
\end{proof}

\begin{corol}
\label{crl1}
 Let $\diamsn{A_p}\varphi_p \in \Delta$ for $\Delta \in \settab{n}$ and let $\m,s \models \Delta$ for some CGM $\m$ and state $s \in \m$. Let, furthermore, $\sigma_{A_p} \in  \vectorsCoalatstate{A_p}{s}$
 be an $A_p$-action witnessing the truth of $\diamsn{A_p}\varphi_p$ at $s$ and let 
 $\chi \in cl(\initf)$ be a $\beta$-formula (resp. a $\gamma$-formula) and $\psi$ be one of its $\beta$-components (resp. $\gamma$-components).   
 Then there exists in $\tabb{n}$ an outcome set $X_{\psi}$ of $\sigma_{A_p}[\diamsn{A_p}\varphi_p]$ 
 such that for every $\Delta' \in X_{\psi}$ there exists $s' \in \Out(s,\sigma_{A_p})$ such that 
 $\m,s' \models \Delta'$, and moreover, if $\m,s' \models\psi$, then $\psi \in \Delta'$.
\end{corol}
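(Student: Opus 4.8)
The plan is to read this off from Lemma~\ref{lem:outcomeSet} as a mild refinement: the outcome set $X_\psi$ is produced by the very same construction, and the only new ingredient is one extra truth-guided choice in the step that completes prestates to full expansions. Concretely, I would start, exactly as in the proof of Lemma~\ref{lem:outcomeSet}, from the set of prestates
\[
Y \;=\; \bigl\{\, \Gamma \in \mathbf{prestates}(\Delta) \;\bigm|\; \Delta \stackrel{\moveAll}{\longrightarrow} \Gamma \text{ for some } \moveAll \sqsupseteq \sigma_{A_p}[\diamsn{A_p}\varphi_p] \,\bigr\},
\]
and, using $\m, s \models \Delta$ together with Lemma~\ref{mergeVectors}, fix for each $\Gamma \in Y$ a state $s'_\Gamma \in \Out(s, \sigma_{A_p})$ with $\m, s'_\Gamma \models \Gamma$. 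Each $\Gamma$ is then completed to a full expansion by the iterative procedure of Definition~\ref{def:fullexp}, resolving every $\beta$- and $\gamma$-step by a disjunct, resp.\ a $\gamma$-component, that is true in $\m$ at $s'_\Gamma$; this always succeeds because $\m, s'_\Gamma$ satisfies the formula being decomposed, and it yields a state $\Delta'_\Gamma \in \mathbf{states}(\Gamma)$ with $\m, s'_\Gamma \models \Delta'_\Gamma$.

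The refinement is this: whenever $\m, s'_\Gamma \models \psi$, I additionally demand that, at the step at which $\chi$ is processed, the chosen $\beta$-component (resp.\ $\gamma$-component, which in the $\gamma$-case is simultaneously recorded as the linked component $\gamma(\chi, \Delta'_\Gamma)$) be $\psi$ itself. This is consistent with the rest of the construction: $\psi \in cl(\initf)$, so it may legitimately occur in a tableau state's label, and picking a component of $\chi$ that holds at $s'_\Gamma$ is exactly the kind of choice the procedure already makes --- in the $\gamma$-case Lemma~\ref{lem:gamma}(3) guarantees that, since $\m, s'_\Gamma \models \chi$ and $\m, s'_\Gamma \models \psi$, the component $\psi$ is among the admissible ones to designate. (Here I use, as in the situations where the corollary is applied, that $\chi$ actually occurs in $\Gamma$ --- for instance as the successor component $\varphi_p$, or as another successor component placed there by \rnext --- so that the step processing $\chi$ is indeed performed.) Setting $X_\psi := \{\Delta'_\Gamma \mid \Gamma \in Y\}$, the map $\moveAll \mapsto \Gamma_{\moveAll}$ is well-defined on all $\moveAll \sqsupseteq \sigma_{A_p}[\diamsn{A_p}\varphi_p]$, so each such $\moveAll$ leads from $\Delta$ to exactly the one state $\Delta'_{\Gamma_{\moveAll}} \in X_\psi$; hence $X_\psi$ is an outcome set of $\sigma_{A_p}[\diamsn{A_p}\varphi_p]$. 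For every $\Delta' = \Delta'_\Gamma \in X_\psi$ the witness $s' = s'_\Gamma$ lies in $\Out(s, \sigma_{A_p})$ and satisfies $\m, s' \models \Delta'$, and by the refinement $\m, s' \models \psi$ forces $\psi \in \Delta'$.

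I do not expect any genuine obstacle here: the argument is essentially bookkeeping layered on Lemma~\ref{lem:outcomeSet}. The only point that needs a moment's care is checking that imposing $\psi \in \Delta'_\Gamma$ does not clash with the other truth-guided choices, i.e.\ that a full expansion of $\Gamma$ that is simultaneously true at $s'_\Gamma$ and contains $\psi$ genuinely exists among the states \sr creates from $\Gamma$; but since the extra demand only pins down one otherwise-free choice inside the same procedure, and pins it to something true at $s'_\Gamma$, consistency is automatic.
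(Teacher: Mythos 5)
Your proposal is correct and follows essentially the same route as the paper: the paper's own proof is precisely ``construct $X_{\psi}$ just like $X$ in Lemma~\ref{lem:outcomeSet}, except that when the formula $\chi$ is processed, choose the component $\psi$ that is true at $s'$.'' Your additional remarks (that $\chi$ must actually occur in the set being expanded, and that the truth-guided choice of $\psi$ is consistent with the rest of the expansion procedure) only make explicit what the paper leaves implicit.
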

\begin{proof}
Construct $X_{\psi}$ just like $X$ was constructed in the proof of the preceding lemma, with a single modification: when dealing with the formula $\chi$, instead of choosing arbitrarily between the different options for $\psi$, choose $\psi$ which is true at $s'$. \qed
\end{proof}
\bigskip

Likewise, we obtain the following for unavoidable formulae:

\begin{lemma}
 Let $\crochetn{A'_q}\psi_q \in \Delta \in \settab{n}$ and let $\m,s \models \Delta$ for some CGM $\m$ and state $s \in \m$. Let, furthermore, $\covect{A'_q} \in \ensCoVect{A'_q}{s}$ be an $A'_q$-co-action witnessing the truth of $\crochetn{A'_q}\psi_q$ at $s$. Then, there exists in $\tabb{n}$ an outcome set $X$ of $\covect{A'_q}[\ \crochetn{A'_q}\psi_q]$ such that for each $\Delta' \in X$ there exists $s' \in \Out(s,\covect{A'_q})$ such that $\m,s' \models \Delta'$.
\end{lemma}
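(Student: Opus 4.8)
The plan is to mirror, essentially line for line, the proof of Lemma~\ref{lem:outcomeSet}, with the $A'_q$-co-action now taking over the ``universal'' role that the $A_p$-action $\sigma_{A_p}$ played there, and with Remark~\ref{rmk:rnext} and Lemma~\ref{mergeVectors} again supplying the combinatorics. Write $\sigma^c$ for the tableau $A'_q$-co-action $\covect{A'_q}[\crochetn{A'_q}\psi_q]$ appearing in the statement. First I would collect the prestates of $\Delta$ that are reached along action profiles consistent with $\sigma^c$, namely
\[
Y = \bigl\{\Gamma \in \mathbf{prestates}(\Delta) \mid \Delta \stackrel{\sigma^c(\sigma_{A'_q})}{\longrightarrow} \Gamma \text{ for some } \sigma_{A'_q} \in \vectorsCoalatstate{A'_q}{\Delta}\bigr\}.
\]
By the defining property of $\sigma^c$ we have $\co\bigl(\sigma^c(\sigma_{A'_q})\bigr)=q$ and $\agents - A'_q \subseteq N\bigl(\sigma^c(\sigma_{A'_q})\bigr)$ for every $\sigma_{A'_q}$, so \rnext guarantees that each $\Gamma \in Y$ contains $\psi_q$ and is, in fact, of the form $\{\varphi_{p_1},\dots,\varphi_{p_k},\psi_q\}$, where each $\diamsn{A_{p_i}}\varphi_{p_i}$ is a primitive successor formula of $\Delta$ all of whose agents $a \in A_{p_i}$ play $p_i$ in $\sigma^c(\sigma_{A'_q})$. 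By Remark~\ref{rmk:rnext} the coalitions $A_{p_1},\dots,A_{p_k}$ are pairwise disjoint, and since each $\varphi_{p_i}$ sits in $\Gamma$ alongside $\psi_q$, also $A_{p_i} \subseteq A'_q$ for every $i$. Exactly as in Lemma~\ref{lem:outcomeSet}, if no $\varphi_{p_i}$ is present I would harmlessly pad $\Gamma$ with the valid formula $\crochetn{\agents}\top \equiv \diamsn{\emptyset}\top$.

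Next, for each $\Gamma \in Y$ I would apply Lemma~\ref{mergeVectors} to the subset $\{\diamsn{A_{p_1}}\varphi_{p_1},\dots,\diamsn{A_{p_k}}\varphi_{p_k},\crochetn{A'_q}\psi_q\}$ of $\Delta$: its two hypotheses ($A_{p_i}$ pairwise disjoint, $A_{p_i} \subseteq A'_q$) are exactly the facts just established, and $\m,s\models\Delta$ delivers satisfiability at $s$. Feeding $\covect{A'_q}$ as the witnessing $A'_q$-co-action and the (model) witnesses of the $\diamsn{A_{p_i}}\varphi_{p_i}$ as the witnessing $A_{p_i}$-actions, Lemma~\ref{mergeVectors} produces a state $s' \in \Out(s,\covect{A'_q})$ — lying also in the intersection with the various $\Out(s,\sigma_{A_{p_i}})$, which we discard — such that $\m,s'\models\{\varphi_{p_1},\dots,\varphi_{p_k},\psi_q\}=\Gamma$. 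As in the previous lemma, I would then extend $\Gamma$ to a full expansion $\Delta'$ with $\m,s'\models\Delta'$, resolving every $\beta$- and $\gamma$-step by choosing a disjunct, respectively a $\gamma$-component, that is true at $s'$ in $\m$; this $\Delta'$ belongs to $\mathbf{states}(\Gamma)$, so $\Delta \stackrel{\sigma^c(\sigma_{A'_q})}{\longrightarrow} \Delta'$ holds in $\tabb{n}$. Collecting one such $\Delta'$ for every $\Gamma \in Y$ gives the candidate set $X$.

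The only step left — and the only place I expect any friction, just as in Lemma~\ref{lem:outcomeSet} — is to check that $X$ is genuinely an \emph{outcome set} of $\sigma^c$ in the non-deterministic CGS $\tabb{n}$: for each tableau $A'_q$-action $\sigma_{A'_q}$ there must be \emph{exactly one} member of $X$ reached from $\Delta$ via $\sigma^c(\sigma_{A'_q})$. This is routine bookkeeping — each action profile determines a unique target prestate in the pretableau, so the per-$\Gamma$ choice of $\Delta'$ made above pins down a unique successor for every $\sigma_{A'_q}$ — and I anticipate no conceptual obstacle, since all the genuine content, namely reconciling the single successor formula $\psi_q$ that the co-action forces with the several enforceable objectives $\varphi_{p_i}$ that must be delivered simultaneously, is already packaged inside Lemma~\ref{mergeVectors} via the disjointness and inclusion conditions that Remark~\ref{rmk:rnext} supplies.
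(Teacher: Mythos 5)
Your proof is correct and follows exactly the route the paper intends: the paper itself only states that this lemma is proved ``analogously to Lemma~\ref{lem:outcomeSet}'', and your write-up is precisely that analogue, with the co-action's defining conditions ($\co(\sigma^c(\sigma_{A'_q}))=q$ and $\agents-A'_q\subseteq N(\sigma^c(\sigma_{A'_q}))$) guaranteeing $\psi_q$ in each target prestate, Remark~\ref{rmk:rnext} supplying the disjointness and inclusion hypotheses, and Lemma~\ref{mergeVectors} plus the full-expansion step doing the rest. No discrepancy with the paper's argument.
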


The proof is analogous to the proof of Lemma \ref{lem:outcomeSet}.

\begin{corol}
\label{crl2}
 Let $\crochetn{A'_q}\psi_q \in \Delta \in \settab{n}$ and let $\m,s \models \Delta$ for some CGM $\m$ and state $s \in \m$. Let, furthermore, $\sigma_{A'_q}^c \in \ensCoVect{A'_q}{s}$ be an $A'_q$-co-action witnessing the truth of $\crochetn{A'_q}\psi_q$ at $s$ and let $\chi \in cl(\initf)$ be a $\beta$-formula (resp. a $\gamma$-formula), whose associated $\beta_i$-component ($i \in \{1,2\})$
(resp. $i$-th $\gamma$-component  ($i \geqslant 1$)) is $\chi_i$. Then there exists in $\tabb{n}$ an outcome set $X_{\chi_i}$ of $\covect{A'_q}[\ \crochetn{A'_q}\psi_q]$ such that for every $\Delta' \in X_{\chi_i}$ there exists $s' \in \Out(s,\covect{A'_q})$ such that $\m,s' \models \Delta'$, and moreover, if $\m,s' \models\chi_i$, then $\chi_i \in \Delta'$.
\end{corol}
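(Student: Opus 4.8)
The plan is to proceed exactly as in the proof of Corollary~\ref{crl1}, only transposing from enforceable to unavoidable successor formulae: I would re-run the construction from the lemma immediately preceding this corollary (the co-action analogue of Lemma~\ref{lem:outcomeSet}) and refine the single point in it where an arbitrary choice is made.

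Recall how that lemma produces the outcome set $X$ of $\covect{A'_q}[\crochetn{A'_q}\psi_q]$. One gathers the prestates
\[
Y = \{\Gamma \in \mathbf{prestates}(\Delta) \mid \Delta \stackrel{\moveAll}{\longrightarrow} \Gamma \text{ for some } \moveAll \sqsupseteq \covect{A'_q}[\crochetn{A'_q}\psi_q]\};
\]
by the shape of \rnext, every such $\Gamma$ contains $\psi_q$ (since $\co(\moveAll)=q$ and $\agents - A'_q \subseteq N(\moveAll)$) together with the successor components of the enforceable formulae activated by $\moveAll$, and these formulae meet the compatibility conditions of Lemma~\ref{mergeVectors}; since $\m,s\models\Delta$, Lemma~\ref{mergeVectors} yields, for each such $\Gamma$, a state $s'\in\Out(s,\covect{A'_q})$ with $\m,s'\models\Gamma$; finally $\Gamma$ is extended to a full expansion $\Delta'$ with $\m,s'\models\Delta'$, where $\Delta'\in\tabb{n}$ by \sr, by repeatedly selecting, for each $\beta$- or $\gamma$-formula to be processed, a disjunct resp.\ $\gamma$-component that actually holds at $s'$ in $\m$. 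The only modification, exactly as in Corollary~\ref{crl1}, is this: when during the full-expansion step the formula being processed is $\chi$, I resolve it by choosing $\chi_i$ whenever $\m,s'\models\chi_i$, and otherwise resolve $\chi$ arbitrarily among the admissible options. This is legitimate because $\chi$ is a $\beta$- or $\gamma$-formula, so $\chi_i\in cl(\initf)$ is an admissible choice, and Proposition~\ref{prop:FullExp} guarantees that adding a disjunct/$\gamma$-component true at $s'$ keeps the set satisfied at $s'$. The resulting family of states, one per $\Gamma\in Y$, is declared to be $X_{\chi_i}$.

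It then remains to verify the three conclusions, each immediate. First, $X_{\chi_i}$ is an outcome set of $\covect{A'_q}[\crochetn{A'_q}\psi_q]$ by the very argument of the preceding lemma, since the modification affects only which full expansion is picked for each reachable prestate, not the set $Y$ of reachable prestates; so for every $A'_q$-action $\sigma_{A'_q}$ there is still exactly one member of $X_{\chi_i}$ reached via $\covect{A'_q}(\sigma_{A'_q})$. Second, each $\Delta'\in X_{\chi_i}$ comes by construction with some $s'\in\Out(s,\covect{A'_q})$ such that $\m,s'\models\Delta'$. Third, if $\m,s'\models\chi_i$ then the refined choice put $\chi_i$ into $\Delta'$. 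I do not anticipate a genuine obstacle: the statement is entirely parasitic on the preceding lemma and Corollary~\ref{crl1}, and the only care needed is bookkeeping — ensuring that it is precisely the resolution of $\chi$ that is constrained while all other $\beta/\gamma$ resolutions remain free, and observing that when $\m,s'\not\models\chi_i$ the extra clause of the statement is vacuous, so nothing has to be enforced.
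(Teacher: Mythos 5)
Your proposal is correct and follows exactly the route the paper intends: the paper states this corollary without an explicit proof, leaving it as the transposition to co-actions of Corollary~\ref{crl1}, whose proof likewise reruns the outcome-set construction of the preceding lemma and merely constrains the resolution of $\chi$ to pick the component true at $s'$. Your reconstruction, including the appeal to Lemma~\ref{mergeVectors} via Remark~\ref{rmk:rnext} and the observation that only the choice of full expansion (not the set of reachable prestates) is modified, matches that intended argument.
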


In what follows we make use of the notion of \keyterm{tree}. In our context, we use such a term as
a synonym of ``directed, connected, and acyclic graph, every node of which, except  the root, has exactly one incoming edge". We denote a tree as a pair $(R,\rightarrow)$, where $R$ is the set of nodes and
$\rightarrow$ is the parent-child relation (the edges).

The first kind of tree that we define is the so-called  \keyterm{realization witness tree}.  
Intuitively, such tree witnesses the satisfaction of a given potential eventuality $\xi$
at a state and simulates a tree of runs effected in a model by (co-)strategies. Our definition is
more general than the one in \cite{GorankoShkatov09ToCL}, 
as we want this notion to be applicable in a broader context, including tableaux, concurrent game models and concurrent game Hintikka structures (to be defined later). 

The two definitions below implicitly use the notion of descendant potential eventuality of degree $d$ and its associate notation (see Definition \ref{def:descendant}).
That notion was defined in the context of tableaux, however it is applicable to any CGS which is state-labelled by a set of state formulae. 
We recall that, given a potential eventuality $\xi=\diaA\Phi$ ($\crochetA\Phi$), by convention $\xi$ itself
is taken to be its (unique) descendant potential eventuality of degree 0 and that 
if $\xi^i$ is a descendant eventuality of degree $i$ of $\xi$ then a $\gamma$-component of $\xi^i$ will have
the form $\psi \wedge \ddiaA \Phi^{i+1}$ (respectively,  $\psi \wedge \dcrochetA\Phi^{i+1}$)
and $\diaA \Phi^{i+1}$ (respectively, $\crochetA \Phi^{i+1}$) will be a descendant potential eventuality of $\xi$ having degree $d = i+1$.

A piece of terminology that will be used often further: given sets $X,Y$ and a mapping $c: X \to Y$, we
sometimes say that \keyterm{the set  $X$ is $Y$-coloured by $c$} and that for any $x\in X$, the value $c(x)$ is \keyterm{the $Y$-colour of $x$ under the colouring $c$}. 

\begin{definition}[Realization witness tree for enforceable potential eventualities]
 \label{def:realTree1} 
Let $\cgs$ be any (non-deterministic) CGS with a state space $\stat$ which is state-labelled by some set of \ATLp formulae $\Gamma$, with a labelling function $c$. 
 Let $s\in \stat$ and let  $\xi \in s$ be a potential eventuality of the form $\diaA\Phi$.  
 A \keyterm{realization witness tree for $\xi$ at $s$} is a finite tree $\mr=(R,\rightarrow)$, where the set of nodes $R$ is $\stat$-coloured so that: 
 \begin{enumerate}
 \item the root of $\mr$ is coloured with $s$ and is of depth $0$;
 \item if an interior node $w$ of depth $i$ of $\mr$ is coloured with $s'$ where $c(s')=\Theta$, then there exists a successor $\xi^{i+1}$ of $\xi^i$ such that $\diaAn\xi^{i+1}\in \Theta$;  
 \item for every interior node $w \in \mr$ of depth $i$ coloured with $s'$ where $c(s') = \Theta$, the children of $w$ are coloured bijectively with 
 vertices from an outcome set of $\sigma_A[\diaAn\xi^i]$ at $s'$;
 \item if a leaf of depth $i$ of $\mr$ is coloured with $s'$ where $c(s') = \Theta$, then $\xi^i = \diaA\Phi \in \Theta$ is such that $\fonctA{\Phi}{\Theta} = true$.
\end{enumerate}
\end{definition}

\begin{definition}[Realization witness tree for unavoidable potential eventualities]

Let $\cgs$ be any (non-deterministic) CGS with a state space $\stat$ which is state-labelled by some set of \ATLp formulae $\Gamma$, with a labelling function $c$. 
 Let $s\in \stat$ and let  $\xi \in s$ be a potential eventuality of the form $\crochetA\Phi$.  
 A \keyterm{realization witness tree for $\xi$ at $s$} is a finite tree $\mr=(R,\rightarrow)$, where the set of nodes $R$ is $\stat$-coloured so that: 

\begin{enumerate}
 \item the root of $\mr$ is coloured with $s$ and is of depth $0$;
 \item if an interior node $w$ of depth $i$ of $\mr$ is coloured with $s'$ where $c(s')=\Theta$, then there exists a successor $\xi^{i+1}$ of $\xi^i $ such that $\crochetAn\xi^{i+1}\in \Theta$;
 \item for every interior node $w \in \mr$ of depth $i$ coloured with $s'$ where $c(s') = \Theta$, the children of $w$ are coloured bijectively with 
 vertices from an outcome set of $\covect{A}[\crochetAn\xi^{i+1}]$ at $s'$;
 \item if a leaf of depth $i$ of $\mr$ is coloured with $s'$ where $c(s') = \Theta$, then $\xi^i = \crochetA\Phi \in \Theta$ is such that $\fonctA{\Phi}{\Delta'} = true$.
\end{enumerate}

\end{definition}

We are going to apply the definitions above for the case when the CGS $\cgs$ is a tableau $\tabb{n}$, with states being (identified with) the sets of formulae in their labels.

\begin{lemma}
\label{lemma:realization}
 Let $\mr=(R,\rightarrow)$ be a realization witness tree for a potential eventuality $\xi$ at $\Delta \in \settab{n}$. 
 For every $\Delta'$ of depth $i$, colouring a node of $R$, $\xi^i$ is realized at $\Delta'$ in $\tabb{n}$.
 In particular for $i=0$, thus $\xi$ is realized at $\Delta$ in $\tabb{n}$. 
\end{lemma}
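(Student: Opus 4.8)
The plan is to prove the statement by a bottom-up induction on the tree $\mr$ — that is, by induction on the height of the subtree rooted at the node under consideration, with leaves having height $0$. Throughout, for a node $w$ of $\mr$ of depth $i$ coloured with a tableau state $\Delta'$ I will write $\xi^i = \diaA\Phi^i$ (resp. $\crochetA\Phi^i$) for the descendant potential eventuality of $\xi$ of degree $i$ determined by the branch of $\mr$ running from the root to $w$; clause~2 of the relevant witness-tree definition guarantees that, when $w$ is interior, the successor formula $\diaAn\xi^{i+1}$ (resp. $\crochetAn\xi^{i+1}$) associated with $\xi^i$ via the linked $\gamma$-component $\gamma(\xi^i,\Delta')$ belongs to $\Delta'$, where $\xi^{i+1} := (\xi^i)^1_{\Delta'}$. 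I will carry out the argument for $\xi$ of the enforceable form $\diaA\Phi$; the unavoidable form $\crochetA\Phi$ is entirely symmetric, with $A$-co-actions, outcome sets of co-actions, the sets $D(\Delta,\crochetn{A'_q}\psi_q)$, and Corollary~\ref{crl2} replacing their enforceable counterparts.

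For the base case, if $w$ is a leaf of depth $i$ coloured with $\Delta'$, then clause~4 of Definition~\ref{def:realTree1} gives $\xi^i = \diaA\Phi^i \in \Delta'$ with $\fonctA{\Phi^i}{\Delta'} = true$, so $\xi^i$ is realized at $\Delta'$ in $\tabb{n}$ directly by clause~1 of Definition~\ref{def:realEvent}. For the inductive step, let $w$ be interior of depth $i$ coloured with $\Delta'$, and assume the claim for all nodes of smaller subtree height, in particular for the children of $w$. If $\fonctA{\Phi^i}{\Delta'} = true$ we are again done by clause~1 of Definition~\ref{def:realEvent}; otherwise, clause~2 of Definition~\ref{def:realTree1} supplies the successor eventuality $\xi^{i+1}$ with $\diaAn\xi^{i+1}\in\Delta'$, and clause~3 tells us that the children of $w$ are coloured bijectively with the vertices of some outcome set $X$ of the tableau action $\sigma_A[\diaAn\xi^{i+1}]$ at $\Delta'$ in $\tabb{n}$. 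The key observation is that $\{\sigma\in D(\Delta')\mid\sigma\sqsupseteq\sigma_A[\diaAn\xi^{i+1}]\}$ equals exactly $D(\Delta',\diaAn\xi^{i+1})$, since $\sigma_A[\diaAn\xi^{i+1}]$ is by definition the $A$-action sending every $a\in A$ to the index $p$ of $\diaAn\xi^{i+1}$ in $\mathbb{L}$. Hence, by the definition of an outcome set, for each $\sigma\in D(\Delta',\diaAn\xi^{i+1})$ there is a unique $\Delta''\in X$ with $\Delta'\stackrel{\sigma}{\longrightarrow}\Delta''$; this $\Delta''$ colours a child of $w$, so by the induction hypothesis $\xi^{i+1}$ is realized at $\Delta''$ in $\tabb{n}$. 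Thus the condition in clause~2 of Definition~\ref{def:realEvent} is met (with $\xi^1_{\Delta'}=\xi^{i+1}$), and $\xi^i$ is realized at $\Delta'$ in $\tabb{n}$. Applying this to the root of $\mr$ (depth $0$, coloured with $\Delta$, with $\xi^0=\xi$) yields the ``in particular'' part.

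I expect the only real obstacle to be the bookkeeping that identifies the tableau-level notion of realization — Definition~\ref{def:realEvent}, which quantifies over all $\sigma\in D(\Delta',\diaAn\xi^{i+1})$ and demands a realizing successor for each — with the combinatorial structure of the realization witness tree, whose children enumerate a single outcome set of the one tableau action $\sigma_A[\diaAn\xi^{i+1}]$. Once one checks that these two families of action profiles coincide and that the bijection of clause~3 places exactly one child above each of them, the induction is mechanical. A secondary point requiring (minor) care is genuinely unifying the enforceable and unavoidable cases rather than duplicating the proof, which is precisely where Corollaries~\ref{crl1} and~\ref{crl2} and the symmetry of the two witness-tree definitions are used.
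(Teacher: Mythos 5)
Your proof is correct and follows essentially the same route as the paper's: an induction on the height of the subtree rooted at a node, with leaves handled by clause~1 of Definition~\ref{def:realEvent} and interior nodes by clause~2, using the bijection between children and an outcome set of $\sigma_A[\diaAn\xi^{i+1}]$. Your explicit identification of $\{\sigma \sqsupseteq \sigma_A[\diaAn\xi^{i+1}]\}$ with $D(\Delta',\diaAn\xi^{i+1})$ is the step the paper leaves implicit, and is exactly right (the closing mention of Corollaries~\ref{crl1} and~\ref{crl2} is unnecessary here, but harmless).
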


\begin{proof}
We prove this lemma by induction on the length of the longest path from a node coloured by $\Delta$ to a leaf of $\mr$.

\medskip

Base case: The length of the longest path from a node $w$ coloured by $\Delta$ to a leaf of $\mr$ is $0$. 
Then $w$ is a leaf and $\fonctA{\xi}{\Delta} = true$. 
Thus, by item 1 of Definition \ref{def:realEvent}, 
$\xi$ is realized at $\Delta \in \settab{n}$.

\medskip
Induction step: The length of the longest path from a node $w$ coloured by $\Delta$ to a leaf of $\mr$ is $l>0$. 
Then $w$ is an interior node of depth $i$, so $\diaA\xi^i \in \Delta$ (resp. $\crochetA\xi^i \in \Delta)$) and 
there exists a action (resp. a co-action) such that for all children $w'$ of $w$, where each $w'$ is coloured by 
$\Delta'$, $\xi^{i+1}_{\Delta} \in \Delta'$. Let $\mr'$ be a sub tree of $\mr$ whose root is $w'$. 
The length of the longest path from a node $w'$ coloured by $\Delta'$ to a leaf of $\mr'$ is at most $l-1$. 
Thus, by induction hypothesis, $\xi^{i+1}_{\Delta}$ is realized at $\Delta' \in \tabb{n}$ and 
$\xi^{j}$ is realized at $\Delta''$ in $\tabb{n}$.
Therefore $\xi^{i+1}_{\Delta}$ is realized at $\Delta'$ in $\tabb{n}$ and $w$ respects item 2 of Definition
\ref{def:realEvent}. 
We conclude that $\xi^i$ is realized at $\Delta \in \settab{n}$.  \qed
 
\end{proof}

We now prove the existence of a realization witness tree for any satisfiable state of a tableau containing a potential eventuality.

\begin{lemma}
\label{eventualities_realized}
 Let $\xi \in \Delta$ be a potential eventuality and  $\Delta \in \settab{n}$ 
 be satisfiable. Then there exists a realization witness tree $\mr = (R,\rightarrow)$ for $\xi$ at $\Delta \in \settab{n}$. 
 Moreover, every $\Delta'$, colouring a node of $R$, is satisfiable. 
\end{lemma}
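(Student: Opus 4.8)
The plan is to build $\mr$ recursively, decorating each node not only with a tableau state (its colour) but also with an auxiliary model state and a descendant eventuality of $\xi$, so that $\mr$ mirrors inside $\tabb{n}$ a finite tree of plays in a satisfying model that is generated by a strategy (resp.\ co-strategy) realizing $\xi$. Since $\Delta$ is satisfiable, fix a CGM $\m$ and a state $s$ with $\m,s\models\Delta$, hence $\m,s\models\xi$. I treat the enforceable case $\xi=\diaA\Phi$; the unavoidable case $\xi=\crochetA\Phi$ is symmetric, using the co-strategy analogue of Lemma~\ref{lem:outcomeSet} and Corollary~\ref{crl2} in place of Corollary~\ref{crl1}. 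Fix an $A$-strategy $F$ witnessing $\m,s\models\diaA\Phi$: every play from $s$ consistent with $F$ satisfies $\Phi$.

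The auxiliary fact driving finiteness is a termination statement about residual objectives. Along any play $\lambda$ consistent with $F$, define a sequence of residual path formulae $\Phi=\Phi^{0},\Phi^{1},\dots$ by choosing at each position $j$, exactly as in the proof of Lemma~\ref{lem:gamma}(2), the pair $\langle\psi_j,\Psi_j\rangle\in\g{\Phi^{j}}$ whose conjuncts coming from the right-hand sides of $\until$-formulae are precisely those already true at $\lambda_j$, and putting $\Phi^{j+1}:=\Psi_j$. I claim that for some finite $N$ one has $\fonctA{\Phi^{N}}{\Theta}=true$ for every set $\Theta$ of state formulae containing all state formulae true at $\lambda_N$. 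This is proved by induction on the structure of $\Phi$: state-formula, $\rond$- and $\Box$-residuals are $Real$-true immediately (for $\Box\varphi$ because $\varphi$ holds along all of $\lambda$, hence lies in every such $\Theta$), an $\until$-residual $\varphi\until\psi$ becomes $Real$-true once $\psi$ holds, which happens at a finite position by the semantics of $\until$, and a disjunctive residual is $Real$-true as soon as one disjunct is.

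Now I construct $\mr$. The root is coloured with $\Delta$ and decorated with $(s,\xi^{0})$, where $\xi^{0}=\xi$; here $\diaA\Phi\in\Delta$ and $\m,s\models\Delta$. At a node coloured with a tableau state $\Delta'$ and decorated with $(s',\xi^{i})$, with $\xi^{i}=\diaA\Phi^{i}\in\Delta'$ and $\m,s'\models\Delta'$: if $\fonctA{\Phi^{i}}{\Delta'}=true$, make the node a leaf, so item~4 of Definition~\ref{def:realTree1} holds. Otherwise, the linked $\gamma$-component $\gamma(\xi^{i},\Delta')\in\Delta'$ is true at $s'$, and it cannot be a plain state formula (that would force $\fonctA{\Phi^{i}}{\Delta'}=true$ by the same case analysis of $\fg$ used in the termination argument, using that full expansions are closed under $\alpha$-components), so it has the form $\psi\wedge\ddiaA\Psi$; set $\xi^{i+1}:=\diaA\Psi$ and note $\ddiaA\Psi=\diaAn\xi^{i+1}\in\Delta'$, which gives item~2. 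Since $\m,s'\models\ddiaA\Psi$, fix an $A$-action at $s'$ witnessing this; by Lemma~\ref{lem:outcomeSet} applied to the primitive successor formula $\diaAn\xi^{i+1}\in\Delta'$, and simultaneously by Corollary~\ref{crl1} applied with $\chi=\xi^{i+1}$, obtain an outcome set $X$ of $\sigma_A[\diaAn\xi^{i+1}]$ inside $\tabb{n}$ such that every $\Delta''\in X$ is satisfied by some successor of $s'$ under that action and, in addition, contains the $\gamma$-component of $\xi^{i+1}$ true at that successor — so that $\gamma(\xi^{i+1},\Delta'')$ may be taken to be that component and the recursion continues. Colour the children bijectively with $X$ (item~3) and decorate the child coloured $\Delta''$ with the corresponding pair. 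The tree $\mr$ is finitely branching, each $X$ being a finite set of tableau states, and every branch is finite by the termination fact together with Corollary~\ref{crl1} keeping the linked components in the labels, so $\mr$ is finite by K\"onig's lemma. Every colouring state is satisfiable by construction, and Lemma~\ref{lemma:realization} then yields that $\xi$ is realized at $\Delta$.

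The step I expect to be the main obstacle is reconciling the two sides of this construction precisely: one must ensure that the specific residual $\Phi^{i}$ tracked along a branch is exactly the one whose linked $\gamma$-component Corollary~\ref{crl1} (resp.\ Corollary~\ref{crl2}) has forced into the label $\Delta'$, so that the termination fact ``$Real$ eventually holds along the play'' really translates into ``$\fonctA{\Phi^{i}}{\Delta'}=true$ for the tableau state on that branch''. The $\Box$- and disjunctive residuals, which generate no genuine eventuality yet persist, respectively branch, from step to step, are what make this bookkeeping delicate.
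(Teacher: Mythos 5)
Your proposal follows essentially the same route as the paper's proof: fix a satisfying model and a witnessing perfect-recall (co-)strategy, use Lemma~\ref{lem:outcomeSet} and Corollary~\ref{crl1} (resp.\ Corollary~\ref{crl2}) to lift each step of that strategy to a tableau outcome set whose full expansions are chosen to match the model, and unfold until the residual objective becomes $Real$-true. The only difference is that you make explicit, via the inductively proven termination fact about residuals, the finiteness claim that the paper dispatches with ``as $\m,s\models\xi$, such a state can be reached for each run''; this is a welcome sharpening rather than a different argument.
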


\begin{proof}
We will only give the proof for potential eventualities of the type $\diaA\Phi$. The case of potential eventualities of type $\crochetA\Phi$ is similar.

When dealing with realization of potential eventualities, we have two cases:
\begin{enumerate}
	\item $\fonctA{\Phi}{\Delta} = true$. This case is straightforward, the realization witness tree consists of only the root, coloured with $\Delta$.
	\item $\fonctA{\Phi}{\Delta} = false$. This case means that there is a successor potential eventuality $\xi^1_\Delta$ such that $\diaAn\xi^1_\Delta \in \Delta$.
	
	As $\Delta$ is satisfiable, there exists a CGM $\m$ and a state $s \in \m$ such that $\m,s\models\Delta$, and in particular, $\m,s\models\diaAn\xi^1_\Delta$. Thus, there exists an $A$-action $\sigma_A \in \Acovectsatstate{s}$ such that $\m,s'\models\xi^1_\Delta$ for all
	$s' \in \Out(s,\sigma_A)$, that is an $A$-action witnessing the truth of $\diaAn\xi^1_\Delta$ at $s$.
	
	We know that $\Delta$ is satisfiable and that $\diaAn\xi^1_\Delta$ is an enforceable successor formula. 
    Let $p$ be the position of $\diaAn\xi^1_\Delta$ in the list made at the application of the \rnext on $\Delta$. 
    Note that $\xi$ is a $\gamma$-formula $\in cl(\initf)$, where at least one of its $\gamma$-components, obtained from a pair $\langle \psi,\Psi \rangle$, is such that $\fonctA{\Psi}{FE(\psi)}=true$. Let $\chi$ be such a $\gamma$-component. 
    So Lemma \ref{lem:outcomeSet} is applicable to $\Delta$, and according to that corollary, there exists an
	outcome set $X_\chi$ of $\sigma_A[\diaAn\xi^1_\Delta]$ at $\Delta$ such that, for every $\Delta' \in X_\chi$, there exists $s' \in \Out(s,\sigma)$ such that $\m,s'\models\Delta'$, and moreover, if $\m,s'\models\chi$, then $\chi \in \Delta'$. 
	We start building the realization witness tree $\mr$ with a simple tree whose root $r$ is coloured with $\Delta$ and whose leaves are coloured bijectively with a node from $X$. This first tree respects Items 1 to 3 of Definition \ref{def:realTree1}; some of the leaves respect Item 4 of this definition, but not all of them. The next part treats these leaves.
	
	Since, $\m,s \models \xi^1_\Delta$ with $\xi^1_\Delta = \diaA\Phi'$ for every $s' \in \Out(s,\sigma_A)$, it follows that for every such $s'$ there exists a perfect-recall $A$-strategy $F^{s'}_A$ such that for every $\lambda \in \Plays(s',F_A^{s'})$, $\m,\lambda\models \Phi'$. Then, playing $\sigma_A$ followed by playing $F^{s'}_A$ constitutes a perfect recall strategy $F_A$ witnessing the truth of $\xi$ at s.
	
	Then we continue the construction of $\mr$ as follows. Let $S'$ be the set of all sates $s''$ appearing as part of a play consistent with $F^{s'}_A$, containing a descendant eventuality $\xi^i$ of $\xi$ and satisfying the requirement that $\m,s''\not\models\chi$, for all $\gamma$-components $\chi$ obtained from a pair $\langle \psi,\Psi \rangle$ such that $\fonctA{\Psi}{FE(\psi)} = true$.
 For every $s'\in \Out(s,\sigma_A)$, we follow the perfect recall strategy $F^{s'}_A$, matching every state $s'' \in S'$ with a node $w''$ of $\mr$ and matching every $A'$-action of $F^{s'}_A$ at $s''$ with the tableau $\sigma_A[\diaAn\xi^i] \in \vectorsatstate{\Delta''}$ where $\xi^i$ is the descendant eventuality of $\xi$ in $w''$ and $\Delta''$ is the state colouring the node $w''$. We follow this way
 each $F^{s'}_A$ along each run until we reach a state $t$ where $\m,t \models \chi$ ($\chi$ is as described above). This means that we have reached a leaf of $\mr$; this leaf respects item 4 of Definition \ref{def:realTree1}. As $\m,s\models\xi$, such a state can be reached for each run, and we thus obtain a finite tree $\mr$.

	Thus, the so constructed realization witness tree conforms to Definition \ref{def:realTree1}. \qed
\end{enumerate}

\end{proof}

\begin{lemma}
\label{ERevent_OK}
 Let $\Delta$ be a state in $\tabb{n}$. If $\Delta$ is satisfiable then \edeux cannot eliminate $\Delta$ from $\tabb{n}$.
\end{lemma}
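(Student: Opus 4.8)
The plan is to derive the statement almost immediately from Lemma~\ref{eventualities_realized} together with Lemma~\ref{lemma:realization}. Recall that \edeux removes $\Delta$ from $\tabb{n}$ only if $\Delta$ contains \emph{some} potential eventuality that is not realized at $\Delta$ in $\tabb{n}$; hence, to show that \edeux cannot eliminate a satisfiable $\Delta$, it suffices to prove that \emph{every} potential eventuality $\xi \in \Delta$ is realized at $\Delta$ in $\tabb{n}$.

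So first I would fix an arbitrary potential eventuality $\xi \in \Delta$ (that is, a $\gamma$-formula of the form $\diaA\Phi$ or $\crochetA\Phi$ occurring in the label of $\Delta$). Since $\Delta$ is satisfiable and $\Delta \in \settab{n}$, Lemma~\ref{eventualities_realized} yields a realization witness tree $\mr = (R,\rightarrow)$ for $\xi$ at $\Delta$ in $\tabb{n}$ (it also gives that every tableau state colouring a node of $R$ is satisfiable, though that is not needed here). Then I would invoke Lemma~\ref{lemma:realization}, which, applied to this $\mr$ at depth $i=0$, gives that $\xi = \xi^{0}$ is realized at $\Delta$ in $\tabb{n}$. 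Since $\xi$ was arbitrary, no potential eventuality of $\Delta$ is unrealized at $\Delta$, and therefore \edeux has no ground to remove $\Delta$ from $\tabb{n}$, as claimed.

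The only point that needs care — and it is already built into the way Lemma~\ref{eventualities_realized} is phrased — is that the witness tree must live inside the \emph{current} tableau $\tabb{n}$, not merely inside the initial tableau: the outcome sets used to build $\mr$ (via Lemma~\ref{lem:outcomeSet} and Corollary~\ref{crl1}, and their unavoidable-formula counterparts) are outcome sets of tableau (co-)actions taken in $\tabb{n}$. This is what lets the overall soundness argument proceed by induction on the elimination stages: once one knows (inductively) that no satisfiable state was discarded before stage $n$, the satisfiable successors needed to assemble $\mr$ are still present in $\tabb{n}$, so the two lemmas above close the case for \edeux, just as the preceding lemma closes it for \eun. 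There is no substantial obstacle here beyond assembling these facts; all the real work has been done in Lemmas~\ref{lemma:realization} and~\ref{eventualities_realized}.
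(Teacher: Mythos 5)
Your proposal is correct and follows exactly the paper's own argument: for each potential eventuality $\xi\in\Delta$, Lemma~\ref{eventualities_realized} supplies a realization witness tree for $\xi$ at $\Delta$ in $\tabb{n}$, and Lemma~\ref{lemma:realization} then yields that $\xi$ is realized at $\Delta$, so \edeux{} is not applicable. Your added remark that the witness tree must live in the current tableau $\tabb{n}$ is a sensible clarification but does not change the route.
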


\begin{proof}
Let $\Delta \in \tabb{n}$ be a satisfiable state.\\
If $\Delta$ contains no eventuality, then \edeux is not applicable. \\
If $\Delta$ contains an eventuality $\xi$, then Lemma  \ref{eventualities_realized} ensures that there exists a realization witness tree for $\Delta$ and, by Lemma \ref{lemma:realization} we know that $\xi$ is realized at $\Delta$ in $\tabb{n}$. 
Therefore, \edeux cannot eliminate $\Delta$ for $\tabb{n}$. \qed
\end{proof}

\begin{theorem}[Soundness]
 If $\initf$ is satisfiable, then $\tabb{}$ is open.
\end{theorem}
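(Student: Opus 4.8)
The plan is to obtain the theorem as a direct assembly of the soundness lemmas already proved in this section, organised around a single induction on the state-elimination stages.

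First I would fix the initial data: assuming $\initf$ satisfiable, pick a CGM $\m$ and a state $s$ with $\m,s\models\initf$, so that $\m,s\models\Gamma_0$ for the initial prestate $\Gamma_0=\{\initf\}$. Applying Lemma \ref{lemme_sat1}, there is a state $\Delta_0\in\mathbf{states}(\Gamma_0)$ with $\m,s\models\Delta_0$, and since every full expansion of a set contains that set (Definition \ref{def:fullexp}), we have $\initf\in\Delta_0$. The prestate-elimination phase removes only prestates, so $\Delta_0$ is already a state of the initial tableau $\tabb{0}$ and is still satisfied by $(\m,s)$. It therefore suffices to show that $\Delta_0$ is never deleted during the state-elimination phase: then $\Delta_0$ occurs in the final tableau $\tabb{}$ and, because $\initf\in\Delta_0$, $\tabb{}$ is open.

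The core of the argument is an induction on the elimination stage $n$ with the invariant: \emph{every state of $\tabb{0}$ that is satisfiable in some CGM occurs in $\tabb{n}$}. The base case $n=0$ is trivial. For the step, $\tabb{n+1}$ is obtained from $\tabb{n}$ by removing a single state $\Delta$ via \eun or \edeux; I would assume $\Delta$ satisfiable and derive a contradiction. If $\Delta$ were removed by \eun, some $\sigma\in D(\Delta)$ would have lost all its $\sigma$-successors; but by Lemma \ref{mergeVectors} the corresponding successor prestate is satisfiable, by Lemma \ref{states_wrt_prestates} it yields a satisfiable successor state $\Delta'$ with $\Delta\stackrel{\sigma}{\longrightarrow}\Delta'$ in $\tabb{0}$, and by the induction hypothesis $\Delta'$ is still present in $\tabb{n}$ — this is the content of the earlier lemma stating that \eun cannot eliminate a satisfiable state. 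If $\Delta$ were removed by \edeux, some potential eventuality $\xi\in\Delta$ would be unrealized at $\Delta$ in $\tabb{n}$; but Lemma \ref{eventualities_realized} supplies a realization witness tree for $\xi$ at $\Delta$ all of whose colouring states are satisfiable, hence — again by the induction hypothesis — present in $\tabb{n}$, so the tree lies inside $\tabb{n}$, and Lemma \ref{lemma:realization} forces $\xi$ to be realized at $\Delta$ in $\tabb{n}$ (this is Lemma \ref{ERevent_OK}). Either way $\Delta$ is unsatisfiable, establishing the invariant for $n+1$. Applying the invariant to $\Delta_0$ finishes the proof.

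The one delicate point — and the main thing to get right rather than a real obstacle — is the circularity hidden in the two elimination-rule lemmas: the claim that \eun (respectively \edeux) does not delete a satisfiable state of $\tabb{n}$ implicitly needs the relevant successor states, respectively the states colouring the realization witness tree, to be still present in $\tabb{n}$. Proving the invariant "all satisfiable states of $\tabb{0}$ lie in $\tabb{n}$" uniformly by induction on $n$, rather than tracking states individually, makes the induction hypothesis available exactly where Lemmas \ref{mergeVectors}, \ref{states_wrt_prestates} and \ref{eventualities_realized} require surviving witness states, so the whole thing closes without gaps. Beyond this bookkeeping there is no new technical content.
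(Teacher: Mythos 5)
Your proof is correct and follows essentially the same route as the paper: it assembles Lemma \ref{lemme_sat1} (a satisfiable state containing $\initf$ exists among $\mathbf{states}(\Gamma_0)$) with the lemmas showing that \eun and \edeux never eliminate satisfiable states, concluding that such a state survives into $\tabb{}$. Your explicit induction on the elimination stage, with the invariant that all satisfiable states of $\tabb{0}$ persist in $\tabb{n}$, merely makes rigorous the dependency that the paper's per-stage elimination lemmas leave implicit; it is a sound and slightly more careful presentation of the same argument.
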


\begin{proof}
Lemmas \ref{states_wrt_prestates}--\ref{ERevent_OK} ensure that if $\Delta$ is satisfiable, then $\Delta$ cannot be eliminated from $\tabb{n}$ due to \eun or \edeux. 
Moreover, Lemma \ref{lemme_sat1} ensures that if the input formula $\initf$ is satisfiable, then at 
least one state containing $\initf$ (created from the initial prestate) is satisfiable. Thus, this state cannot be eliminated and therefore the final tableau $\tabb{}$ is open. \qed
\end{proof}

\section{Completeness, model synthesis and complexity}
\label{sec:compl}

\subsection{Hintikka Structures}
\label{subsec:HS}

The tableau procedure actually attempts to build not a concurrent game model of the input formula but a state-labelled non-deterministic CGS, from which structures of a special kind can be extracted which essentially are partly defined concurrent game models. Following \cite{Pratt80,BPM83,GorankoShkatov09ToCL} we will call them
 \emph{Hintikka structures}. Here we will give the definition of a Hintikka structure for a given \ATLp formula $\initf$ and will show how to obtain a CGM for $\initf$ from a Hintikka structure for $\initf$. Later we will explain how to extract a Hintikka structure `satisfying' the input formula from its open final tableau. 

\begin{definition}
 \label{def:hintikka}
A \keyterm{Concurrent Game Hintikka Structure} (for short, CGHS)
 is a deterministic CGS 
 $\Hmodel = (\agents, \stat, \{\Actions_\aga\}_{\aga\in\agents}, \{\dd_{\aga}\}_{\aga\in\agents}, \out,H)$ which is 
 state-labelled by a given set $\Gamma$ of \ATLp-formulae with a state-labelling function $H$. Let $s \in \stat$ be a state of $\Hmodel$. An Hintikka structure $\Hmodel$ satisfies the following constraints: 
\begin{description}
 \item[H1] If $\varphi \in H(s)$ then  $\neg \varphi \not \in H(s)$;
 \item[H2] If an $\alpha$-formula belongs to $H(s)$, then its both $\alpha$-components do;
 \item[H3] If a $\beta$-formula belongs to $H(s)$, then one of its $\beta$-components does;
 \item[H4] If a $\gamma$-formula belongs to $H(s)$, then one of its $\gamma$-components does;
 \item[H5] If $\diaAn\psi \in  H(s)$, then there exists an A-action $\sigma_A \in \vectorsatstate{s}$ such that $\psi \in H(s')$ for all $s' \in \Out(s,\sigma_A)$. 
 Likewise, if $\crochetAn\psi \in H(s)$, then there exists an $A$-co-action $\sigma_A^c \in \ensCoVect{A}{s}$ such that, for all $\psi \in H(s')$ for all $s' \in \Out(s,\sigma^c_A)$.
 \item[H6] If a potential eventuality $\xi=\diaA \Phi$ (resp.  $\xi=\crochetA \Phi$) belongs to $H(s)$,
then there exists a realization witness tree, rooted at $s$ in $\Hmodel$ for $\xi=\diaA \Phi$ (resp.  $\xi=\crochetA \Phi$) at $s$.
\end{description}
\end{definition}

\begin{remark}
The condition H6 is well defined because a Hintikka structure is obtained by colouring via $H$  from a deterministic concurrent game structure, for which the notion of realization witness tree is defined.

\end{remark}

\begin{definition}
 Let $\Hmodel=(\agents, \stat, \{\Actions_\aga\}_{\aga\in\agents}, \{\dd_{\aga}\}_{\aga\in\agents}, \out, H)$ be a CGHS and $\initf$  be an \ATLp-formula. We say that 
 $\Hmodel$ 
 is a concurrent game Hintikka structure for $\initf$,
 if $\initf \in H(s)$ for some $s \in \stat$.
\end{definition}

We now show that from any CGHS for a given formula $\initf$ a CGM satisfying $\initf$ can be obtained. 

\begin{theorem}
\label{thm:FromCgsToModel}
Let $\Hmodel=(\agents, \stat, \{\Actions_\aga\}_{\aga\in\agents}, \{\dd_{\aga}\}_{\aga\in\agents}, \out, H)$  be a CGHS for a given \ATLp-formula $\initf$. 
Let further $\m=(\agents, \stat, \{\Actions_\aga\}_{\aga\in\agents}, \{\dd_{\aga}\}_{\aga\in\agents}, \out,\prop,\lab)$ be the CGM obtained from 
$\Hmodel$ by setting, for every
$s \in \stat$, $\lab(s) = H(s) \cap \prop$. Then, for every
$s \in \stat$ and  every  \ATLp formula $\varphi$, $\varphi \in H(s)$ implies $\m, s\models \varphi$.
In particular, $\m$ satisfies $\initf$.
\end{theorem}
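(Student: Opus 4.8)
The plan is to prove the statement by induction on the structure of the \ATLp state formula $\varphi$, strengthened so that the inductive hypothesis also covers path formulae evaluated along the plays that witness the strategic operators. The base case and the Boolean cases are immediate from the Hintikka conditions \textbf{H1}--\textbf{H3} and the definition of $\lab$. The interesting cases are the strategic ones, $\varphi = \diaA\Phi$ and $\varphi = \crochetA\Phi$, where $\Phi$ is an arbitrary \ATLp path formula. By Lemma \ref{lem:gamma}(3), every such formula is equivalent to a disjunction of its $\gamma$-components, and by \textbf{H4} at least one $\gamma$-component $\chi$ of $\varphi$ belongs to $H(s)$; so it suffices to show $\m,s\models\chi$ for that $\chi$, and then argue $\m,s\models\varphi$ by Lemma \ref{lem:gamma}. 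A $\gamma$-component has the form $\psi$ (when the associated $\Psi=\top$), or $\psi\wedge\ddiaA\Psi$, or $\psi\wedge\dcrochetA\Psi$; the first subcase reduces to the inductive hypothesis on the smaller state formula $\psi$, while the other two require building an actual (co-)strategy in $\m$.

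\textbf{The core construction} is to extract from the realization witness tree guaranteed by \textbf{H6} a perfect-recall collective (co-)strategy along which the path objective $\Phi$ is met. Concretely, for $\varphi=\diaA\Phi$ with $\fonctA{\Phi}{H(s)}=\mathit{false}$, condition \textbf{H5} applied to $\diaAn\xi^1_s\in H(s)$ gives an $A$-action $\sigma_A$ at $s$ forcing $\xi^1_s$ into every successor's label; iterating this along the witness tree produces, for every branch, a finite sequence of $A$-actions after which the residual objective has been ``discharged'' (i.e. the leaf condition $\fonctA{\Phi^k}{H(t)}=\mathit{true}$ holds). One then composes these finitely many initial segments with the $A$-strategies obtained inductively for the smaller state formulae appearing in the leaf labels (here is exactly where perfect recall is essential, as emphasised in the proof of Lemma \ref{lem:gamma}, Claim 3: strategies at different successors can be glued together). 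The resulting perfect-recall $A$-strategy $F_A$ has the property that every play $\lambda$ consistent with it satisfies $\m,\lambda\models\Phi$ — this is verified by a secondary induction on the witness-tree depth, using the LTL fixed-point equivalences (for $\Box$ and $\until$) to propagate satisfaction from the discharge point backwards, together with the meaning of $\fonctA{\cdot}{\cdot}$. The case $\varphi=\crochetA\Phi$ is symmetric, using $A$-co-actions and the second half of \textbf{H5}, and the outcome-set machinery of Definitions for outcome sets together with the fact that $\Hmodel$ is deterministic (so outcome sets are genuine sets of successors).

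\textbf{The main obstacle} I anticipate is the bookkeeping in composing the finitely many ``tableau-level'' $\gamma$-unfolding steps (governed by \textbf{H4}/\textbf{H5}/\textbf{H6}) with the ``semantic-level'' strategies produced by the inductive hypothesis on subformulae, while keeping track of which descendant eventuality $\xi^i$ is live at each node and ensuring that the path formula $\Phi$ — which may be a nontrivial positive Boolean combination of $\Box\psi$, $\psi_1\until\psi_2$, $\rond\psi$, $\psi$ — is satisfied \emph{along every consistent play}, not merely along the plays picked out by the witness tree. The disjunctive case of $\fg$ (the $\oplus$ operation) is the delicate point: a play may ``switch'' between disjuncts, and one must check that the $\gamma$-component chosen at each step keeps enough of each disjunct alive, exactly as the informal discussion around Figure \ref{fig:disj} explains; this is the \ATLp-specific difficulty absent in the \ATL completeness proof of \cite{GorankoShkatov09ToCL}. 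I expect the formal argument to mirror Claim 1 of Lemma \ref{lem:gamma} run ``in reverse'', reconstructing satisfaction of $\Phi$ from satisfaction of the chain of $\gamma$-components along each branch. The remaining verification — that $\rond\psi$-type demands are honoured because $\psi$ lands in the label of the immediate successor, and that the non-eventuality $\gamma$-formulae ($\diaA\Box\psi$, etc.) are handled by an inner appeal to \textbf{H5} plus the fixpoint equivalence $\diaA\Box\psi\equiv\psi\wedge\ddiaA\Box\psi$ — is routine.
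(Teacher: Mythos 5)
Your proposal is correct and follows essentially the same route as the paper's proof: structural induction on $\varphi$, with the Boolean and successor cases discharged by \textbf{H1}--\textbf{H3} and \textbf{H5}, and the strategic case handled by turning the realization witness tree of \textbf{H6} into a finite partial perfect-recall strategy that is then extended indefinitely via repeated appeals to \textbf{H2}, \textbf{H4} and \textbf{H5} at the leaves (to keep the residual $\Box$-type obligations alive). If anything, you are more explicit than the paper about the final verification that every play consistent with the assembled strategy satisfies $\Phi$ (your ``secondary induction'' using the \LTL fixed-point equivalences and the semantics of $Real$), a step the paper's proof leaves implicit.
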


\begin{proof}

Suppose $\varphi \in H(s)$. We will prove that $\m,s \models\varphi$ by induction on the structure of the state formula $\varphi$.

\textit{Base}.  If $\varphi \in \prop \cup \{\top\}$ belongs to $H(s)$, it is immediate that 
$\m, s \models \varphi$,  by definition of $\lab$ and H1.

\medskip
\textit{Inductive Step}. 
\begin{itemize}
\item $\varphi$ is $\psi_1  \wedge \psi_2$. By H2 we get that $\psi_1 \in H(s)$ and  $\psi_2 \in H(s)$ .
By inductive hypothesis $\m, s \models \psi_1$ and $\m, s \models \psi_2$. Therefore
$\m , s\models \varphi$.
\item $\varphi$ is $\psi_1  \vee \psi_2$. By H3 we get that either $\psi_1 \in H(s)$ or $\psi_2 \in H(s)$ .
By inductive hypothesis either $\m , s\models \psi_1$ or $\m, s \models \psi_2$. Therefore
$\m, s \models \varphi$.
\item $\varphi$ is $\diaAn\psi$ or $\crochetAn\psi$. An application of 
H5 and the inductive hypothesis to $\psi$ imply that $\m , s\models \varphi$.
\item $\varphi$ is $\diams{A} \Phi$ or $\varphi$ is $\crochet{A} \Phi$ , where $\Phi$ is a path formula 
whose main operator is different from $\rond$, that is $\varphi$ is a $\gamma$-formula.
Here we only present in detail the first case, the second one being quite similar.
We need to prove the existence of a (perfect recall) strategy $F_A$ such that, for each branch $\lambda$
in $\m$ stemming from $s$ and consistent with that strategy, 
 $\m, \lambda \models
\Phi$. This will imply that $\m , s\models \varphi$.
Since  $\varphi=\diaA \Phi \in H(s)$ by hypothesis, then H6 guarantees the existence of 
a realization witness tree  $T$ on $\Hmodel$ for $\varphi$.
By construction, $T$ provides a partial finite strategy $Fp_A$, defined only for
the finite set of histories occurring in $T$ and having length strictly
less than  the height of $T$. We want to show that  $Fp_A$ can be extended to
a strategy $F_A$, defined for all the histories in $\Hmodel$, and that $T$ can be extended to a possibly infinite tree $T'$ 
such that:
\begin{itemize}
\item Each node of $T'$ is also a node of $\Hmodel$ and each labelled edge of $T'$ is also a labelled edge of $\Hmodel$.
\item All paths in $T'$ are consistent with $F_A$, hence
$T'$ witnesses the truth of  $\diaA \Phi$ at state $s$ of $\m$ by instantiating the quantifier $\diaA$ to $F_A$.
\end{itemize}

Below, we show how to construct $F_A$ and $T'$. Let us consider any finite path in $T$ of the form $\lambda_{\leq n}$, where $\lambda_0=s$ and $\lambda_n$ is a leaf.
By construction of $T$, each node $\lambda_i$, for $1 \leq i \leq n$, 
is a node of $\Hmodel$ and each labelled edge of $T$ is a labelled edge of $\Hmodel$.
The descendant potential eventuality $\varphi^n$ of $\varphi$ belongs to the colour of $\lambda_n$ by construction of $T$. 
Since $\lambda_n$ is a node of $\Hmodel$ and $\varphi^n \in H(\lambda_n)$, by H4 some $\gamma$-component $\chi$ of $\varphi^n$ belongs to $H(\lambda_n)$. 
This formula $\chi$ is either of the form $\psi$ or of the form $\psi \wedge \diaAn\varphi^{n+1}$
(the second case occurs, for instance, when $\varphi$ has the form $\diaA \Box \theta$).

\ \ \ \ In the first case, any extension of the partial strategy $Fp_A$  and any extension of  $\lambda_{\leq n}$ to an infinite path will do.
 
\ \ \ In the second case, we apply H2 to get $\psi \in H(\lambda_n)$ and $\diaAn\varphi^{n+1} \in H(\lambda_n)$. By H5, there exists an A-action $\sigma_A \in \vectorsatstate{H(\lambda_n)}$
such that $\varphi^{n+1} \in H(s')$ for all $s' \in \Out(\lambda_n,\sigma_A)$.
Playing this A-action $\sigma_A$ after the partial strategy $Fp_A$ gives us a new partial strategy $Fp'_A$ defined for histories whose length is less than or equal to $n$. 
The set of successors of $\lambda_n$ for $T'$ is the set 
$\Out(\lambda_n,\sigma_A)$. For each $s' \in \Out(\lambda_n,\sigma_A)$, we can again apply H2, H4 and H5 to get a new partial strategy $Fp''_A$ defined for histories whose length is inferior or equal to $n+1$. For any $s' \in \Out(\lambda_n,\sigma_A)$, its successors are obtained by an application of $Fp''_A$. An infinite iteration of this procedure will give us the complete strategy $F_A$ and the way to extend the finite tree so as to get $T'$. 

\end{itemize}
\end{proof}

\subsection{Completeness and model synthesis}
\label{subsec:completeness}

\begin{theorem}\label{thm:complete}
The tableau method for  \ATLp is complete.
\end{theorem}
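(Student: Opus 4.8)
The plan is to prove the nontrivial direction: if the final tableau $\tabb{}$ for $\initf$ is open, then $\initf$ is satisfiable. By Theorem~\ref{thm:FromCgsToModel} it suffices to produce a concurrent game Hintikka structure $\Hmodel$ with $\initf \in H(s)$ for some state $s$. Fix a tableau state $\Delta_0 \ni \initf$ (it exists since $\tabb{}$ is open). Viewed as a non-deterministic CGS state-labelled by subsets of $cl(\initf)$, $\tabb{}$ already satisfies the structural Hintikka conditions \textbf{H1}--\textbf{H4} at every state, since every tableau state is a full expansion (Definition~\ref{def:fullexp}): it is patently consistent and closed under $\alpha$-, $\beta$- and $\gamma$-components. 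It also \emph{almost} satisfies \textbf{H5}: for each successor formula $\diaAn\psi$ (resp.\ $\crochetAn\psi$) in a state $\Delta$, \rnext fixes an action $\sigma_A[\diaAn\psi]$ (resp.\ a co-action) under which every created successor prestate, hence every full expansion of it, contains $\psi$; and since \eun did not eliminate $\Delta$ from the final tableau, for each relevant action profile at least one such successor survives. The two genuine difficulties are (i) that a CGHS must be \emph{deterministic}, whereas $\tabb{}$ is not, and (ii) condition \textbf{H6}, namely that every potential eventuality in every reachable state admits a realization witness tree rooted there.

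For (ii) at the tableau level, note that since \edeux has been applied until saturation, no state of $\tabb{}$ contains an unrealized potential eventuality; hence, by Definition~\ref{def:realEvent} (read as a least fixed point), for every potential eventuality $\xi\in\Delta$ there is a \emph{finite} realization witness tree for $\xi$ at $\Delta$ inside $\tabb{}$, in the sense of Definition~\ref{def:realTree1} and its counterpart for unavoidable eventualities; this is the converse of Lemma~\ref{lemma:realization}. Crucially, the actions $\sigma_A[\diaAn\xi^i]$ along which such a tree branches are the ones prescribed by \rnext, and for distinct potential eventualities they constrain disjoint families of action profiles (or jointly consistent ones, by Remark~\ref{rmk:rnext}), so the realization trees of the finitely many eventualities present in one state can all be followed simultaneously and remain compatible with the guarantee that each created successor prestate contains the relevant successor component.

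Next I would build $\Hmodel$ as an eventuality-respecting unravelling of the part of $\tabb{}$ reachable from $\Delta_0$: a node of $\Hmodel$ is a tableau state together with a finite bookkeeping component recording, for the potential eventualities currently present, which descendant degree of each is to be tracked next, organized by a round-robin discipline so that every present eventuality is served after boundedly many steps along each branch. When expanding a node with underlying state $\Delta$, the successor along each action profile $\sigma\in D(\Delta)$ is fixed as follows: if $\sigma$ extends $\sigma_A[\diaAn\xi^i]$ for an eventuality $\xi$ being tracked, choose a full expansion of the successor prestate that contains the appropriate degree-$(i{+}1)$ descendant of $\xi$ (possible since the prestate already contains it, and since for distinct eventualities these requirements are on disjoint $\sigma$'s or are jointly consistent by Remark~\ref{rmk:rnext}); otherwise choose an arbitrary surviving successor, which exists by \eun. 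Colour each node by its underlying tableau state and let $H$ be that colour. Then \textbf{H1}--\textbf{H5} transfer directly, and the finite realization witness trees of $\tabb{}$ glue together, along the branches on which the corresponding eventuality is tracked, into realization witness trees in $\Hmodel$ rooted at every node in whose label that eventuality occurs; this gives \textbf{H6}. Since the bookkeeping component ranges over a finite set, $\Hmodel$ can moreover be folded into a \emph{finite} CGHS; in either case, Theorem~\ref{thm:FromCgsToModel} converts it into a CGM satisfying $\initf$, which also settles the model-synthesis part.

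The main obstacle is precisely the construction of $\Hmodel$: resolving the tableau's non-determinism into a single deterministic structure while simultaneously preserving \textbf{H5} --- which needs a surviving successor for \emph{every} action profile, supplied by \eun --- and \textbf{H6} --- which needs \emph{every} eventuality in \emph{every} reachable node to be realized, not merely one per state. The delicate points are designing the bookkeeping so that every present eventuality is served within a bounded number of steps along each branch (so all witness trees stay finite and actually close), and checking that the per-action-profile successor choices demanded by the different eventualities tracked at a given state are mutually compatible and compatible with the full-expansion mechanism, for which Remark~\ref{rmk:rnext} (disjointness or containment of the coalitions involved) is the key input. The rest is a routine verification against Definition~\ref{def:hintikka}.
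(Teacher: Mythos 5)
Your proposal is correct and follows essentially the same route as the paper: reduce to constructing a Hintikka structure and invoke Theorem~\ref{thm:FromCgsToModel}, get \textbf{H1}--\textbf{H5} from full expansions, \rnext and \eun, obtain finite realization witness trees for realized eventualities (the paper's Lemma~\ref{lem:realization}, proved by induction on a rank), and glue them in round-robin order over the eventualities while resolving non-determinism and closing off dead-ends into a finite structure --- which is exactly the paper's grid-and-queue construction of $\mathfrak{F}$.
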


Completeness of the procedure means that an open tableau for $\initf$ implies existence of a CGM model for $\initf$. 
So, we start with an open tableau $\tabb{}$ for $\initf$ and we want to prove that $\initf$ is indeed satisfiable. The proof is constructive, as we will build from $\tabb{}$ a Hintikka structure $\hint$ that can be turned into a model for $\initf$. In order to construct that  Hintikka structure, first we will extract special $\tabb{}$-trees associated with potential eventualities, that can be seen as building modules to be used to construct the entire structure. Eventually, we show that the so constructed structure is a Hintikka structure for $\initf$. 

First, we need to define \emph{edge-labelling} of a tree.  

\begin{definition}
 Let $\treeW = (W,\rightsquigarrow)$ be a tree and $Y$ be a non-empty set. An \keyterm{edge-labelling of $\treeW$ by $Y$} is a mapping $l$ from the set of edges of 
 $\treeW$ to the set of non-empty subsets of $Y$. 
\end{definition}

\begin{definition}
Given a tableau $\tabb{}$, a tree $\treeW = (W,\rightsquigarrow)$ is a \keyterm{$\tabb{}$-tree} if the following conditions hold:
\begin{itemize}
 \item $\treeW$ is $\settab{}$-coloured, by some colouring mapping $c$. 
 \item $\treeW$ is edge-labelled by $\bigcup_{(\Delta \in \settab{})}\vectorsatstate{\Delta}$, by some edge-labelling mapping $l$;
 \item $l(w \rightsquigarrow w') \subseteq \vectorsatstate{\Delta}$ for every $w \in W$ with $c(w) = \Delta$;
 \item For every interior node $w \in W$ with $c(w) = \Delta$ and every successor $\Delta' \in \tabb{}$ of $\Delta$, 
 there exists exactly one $w' \in W$ such that $l(w \rightsquigarrow w') = \{\sigma \mid \Delta \stackrel{\sigma}{\longrightarrow} \Delta')\}$.
\end{itemize} 
\end{definition}

\begin{definition}
 Let $\Delta \in \settab{}$. A $\tabb{}$-tree $\treeW$ is \keyterm{rooted at} $\Delta$ if the root $r$ of $\treeW$ is coloured with $\Delta$. 
\end{definition}

For the purpose of our construction, we distinguish two kinds of $\tabb{}$-trees: \textit{simple} or \textit{realizing}. Their definitions are given below. Realizing $\tabb{}$-trees will deal especially with potential eventualities.

\begin{definition}
 A tree $\treeW = (W,\rightsquigarrow)$ is \keyterm{simple} if it has no interior nodes except the root. 
\end{definition}

Simple $\tabb{}$-trees can be seen as one-step modules.

\begin{definition}
 Let $\treeW = (W,\rightsquigarrow)$ be a $\tabb{}$-tree rooted at $\Delta$ and $\xi \in \Delta$ a potential eventuality. The tree $\treeW$ is a \keyterm{realizing $\tabb{}$-tree for $\xi$}, denoted $\treeW_{\xi}$,
if there exists a subtree $\mathcal{R}_{\xi}$ of $\treeW$ rooted at $\Delta$ such that $\mathcal{R}_{\xi}$ is a realization witness tree for $\xi$ rooted at $\Delta \in \tabb{}$. 
\end{definition}

\begin{lemma}
\label{lem:simpletree}
 Let $\Delta \in \settab{}$. Then, there exists a simple $\tabb{}$-tree rooted at $\Delta$.
\end{lemma}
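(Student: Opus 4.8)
The plan is to construct a simple $\tabb{}$-tree rooted at $\Delta$ directly, using only the defining conditions of a $\tabb{}$-tree together with the fact that, since $\Delta$ survived the elimination phase, \eun was never applicable to it. First I would recall what a simple tree must look like: it has a root $r$ and a set of leaves, the root being the only interior node; its nodes are $\settab{}$-coloured and its edges are labelled by (subsets of) action profiles in $\vectorsatstate{\Delta}$, subject to the constraint that for every successor $\Delta'$ of $\Delta$ in $\tabb{}$ there is exactly one leaf $w'$ with $l(r \rightsquigarrow w') = \{\sigma \mid \Delta \stackrel{\sigma}{\longrightarrow} \Delta'\}$.

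The construction is then immediate. Colour $r$ with $\Delta$. Let $\Delta'_1, \dots, \Delta'_k$ enumerate the (finitely many) successors of $\Delta$ in $\tabb{}$; for each $\Delta'_j$ create a single child $w'_j$ of $r$, colour it with $\Delta'_j$, and set $l(r \rightsquigarrow w'_j) := \{\sigma \in \vectorsatstate{\Delta} \mid \Delta \stackrel{\sigma}{\longrightarrow} \Delta'_j\}$. The leaves $w'_j$ are not interior, so the tree is simple by definition. I would then verify the four clauses of the definition of a $\tabb{}$-tree: the colouring is into $\settab{}$ by construction; every edge label is a subset of $\vectorsatstate{\Delta}$ by construction; the third clause holds since the only node with colour $\Delta$ is $r$ and its edge labels were taken inside $\vectorsatstate{\Delta}$; and the fourth clause holds because the $w'_j$ are in bijection with the successors of $\Delta$ and carry exactly the prescribed labels.

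The one point that genuinely needs the hypothesis $\Delta \in \settab{}$ (i.e. that $\Delta$ is a state of the \emph{final} tableau, not just of $\tabb{0}$) is that each label set $l(r \rightsquigarrow w'_j)$ is \emph{non-empty}, and, more importantly, that the union of these label sets covers all of $\vectorsatstate{\Delta}$ — equivalently, that for every action profile $\sigma \in D(\Delta)$ there is at least one successor $\Delta'$ in $\tabb{}$ with $\Delta \stackrel{\sigma}{\longrightarrow} \Delta'$. This is precisely the negation of the applicability condition of \eun: had some $\sigma \in D(\Delta)$ lost all its successors, \eun would have removed $\Delta$, contradicting $\Delta \in \settab{}$. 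So the main (and only real) obstacle is this bookkeeping observation; everything else is unwinding definitions. I would state it as the key step and then conclude.

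\begin{proof}
Since $\Delta \in \settab{}$, the state $\Delta$ was not eliminated by \eun, so for every $\sigma \in D(\Delta)$ there is at least one state $\Delta'$ in $\tabb{}$ with $\Delta \stackrel{\sigma}{\longrightarrow} \Delta'$. Let $\Delta'_{1}, \ldots, \Delta'_{k}$ be the (finitely many, pairwise distinct) states of $\tabb{}$ that are successors of $\Delta$, and for each $j$ put $L_{j} := \{\sigma \in \vectorsatstate{\Delta} \mid \Delta \stackrel{\sigma}{\longrightarrow} \Delta'_{j}\}$. By the previous observation, $\bigcup_{j=1}^{k} L_{j} = D(\Delta)$, and in particular each $L_{j}$ is non-empty (a state is listed among the successors only if some $\sigma$ leads to it).

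Now define the tree $\treeW = (W, \rightsquigarrow)$ with $W = \{r, w'_{1}, \ldots, w'_{k}\}$, root $r$, and edges $r \rightsquigarrow w'_{j}$ for $1 \le j \le k$. Colour $r$ with $\Delta$ and $w'_{j}$ with $\Delta'_{j}$; this is an $\settab{}$-colouring $c$. Define the edge-labelling by $l(r \rightsquigarrow w'_{j}) := L_{j}$, a non-empty subset of $\vectorsatstate{\Delta} \subseteq \bigcup_{\Delta \in \settab{}} \vectorsatstate{\Delta}$.

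We check the conditions defining a $\tabb{}$-tree. The colouring is by $\settab{}$ and the edge-labelling by $\bigcup_{\Delta \in \settab{}} \vectorsatstate{\Delta}$. The only node coloured with a state $\Delta$ and having outgoing edges is $r$, coloured with $\Delta$ itself, and indeed $l(r \rightsquigarrow w'_{j}) = L_{j} \subseteq \vectorsatstate{\Delta}$, so the third condition holds. Finally, $r$ is the only interior node, and its successors in $\treeW$ are exactly $w'_{1}, \ldots, w'_{k}$, which are coloured bijectively with the successors $\Delta'_{1}, \ldots, \Delta'_{k}$ of $\Delta$ in $\tabb{}$; moreover $l(r \rightsquigarrow w'_{j}) = \{\sigma \mid \Delta \stackrel{\sigma}{\longrightarrow} \Delta'_{j}\}$ by construction, so the fourth condition holds. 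Since $r$ is the unique interior node, $\treeW$ is simple, and it is rooted at $\Delta$. \qed
\end{proof}
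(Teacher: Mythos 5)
Your construction is essentially the one in the paper, but it diverges at one point that actually matters. You create one leaf for \emph{every} successor $\Delta'_j$ of $\Delta$ in $\tabb{}$, whereas the paper first groups the successors by their sets $Moves(\Delta,\Delta')=\{\sigma\mid \Delta\stackrel{\sigma}{\longrightarrow}\Delta'\}$ and adds exactly one leaf per \emph{distinct} such set, colouring it with one chosen representative successor. The difference is not cosmetic: the tableau is a non-deterministic CGS, so a single action profile $\sigma$ can lead from $\Delta$ to several distinct states (all the full expansions of the prestate $\Gamma_\sigma$), and hence two distinct successors $\Delta'_i\neq\Delta'_j$ can have $L_i=L_j$. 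In that case your tree has two children of $r$ carrying the same label, and the fourth clause of the definition of a $\tabb{}$-tree --- ``there exists \emph{exactly one} $w'$ such that $l(r\rightsquigarrow w')=\{\sigma\mid\Delta\stackrel{\sigma}{\longrightarrow}\Delta'\}$'' --- is violated. This situation is not exotic: in the running example for $\vartheta$, both $\Delta_3$ and $\Delta_4$ are successors of $\Delta_1$ reached by exactly the profiles $(0,0)$ and $(0,1)$, and the paper's simple tree rooted at $\Delta_1$ accordingly keeps only one of them as a child. The fix is precisely the paper's grouping-and-selection step; with it, your verification of the remaining clauses goes through unchanged.

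Apart from that, your argument is fine. Your explicit appeal to the non-applicability of \eun, to guarantee that every $\sigma\in D(\Delta)$ still has at least one successor and hence that $\bigcup_j L_j=D(\Delta)$, is a legitimate observation (the paper hides it inside the claim that the family of $Moves$-sets covers $\vectorsatstate{\Delta}$), although note that the non-emptiness of each individual $L_j$ follows already from $\Delta'_j$ being listed as a successor, and the literal definition of a $\tabb{}$-tree does not by itself demand the coverage property --- it is needed only later, when the simple trees are used as one-step modules in building the Hintikka structure.
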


\begin{proof}
 We construct a simple $\tabb{}$-tree $\treeW$ rooted at $\Delta$ as follows.
 The root of $\treeW$ is a node $r$ such that $c(r) = \Delta$. 
For every successor state $\Delta'$ of $\Delta \in \tabb{}$, let $Moves(\Delta, \Delta') = \{\sigma \mid \Delta \stackrel{\sigma}{\longrightarrow} \Delta'\}$. Note that, by construction of 
the tableau, the family $\{ Moves(\Delta, \Delta') \mid \Delta' \mbox { is a successor of } \Delta \}$ is a partition of the set $\vectorsatstate{\Delta}$ of all action profiles applied at $\Delta$.  Now, for each set $X$ of that family we select one successor $\Delta'$ of $\Delta$ such that $X=\dd(\Delta,\Delta')$ and add a successor $t$ to $\treeW$ such that $c(t) = \Delta'$ and $l(r \rightsquigarrow t) = \{\sigma \mid \Delta \stackrel{\sigma}{\longrightarrow} \Delta'\}$.  
\qed
\end{proof}

\begin{example}
\label{run-ex10}
(\textit{Continuation of Example \ref{run-ex7}})
 
 Consider the final tableau $\tabfin{\vartheta}$ 
 for the formula $\vartheta = \diams{1}(p\until q \lor \Box q) \land \crochet{2}(\Diamond p \land \Box \neg q)$. 
 
 We have seen in the example \ref{run-ex7} that 
 $\mathcal{S}^{\vartheta}  = \{\Delta_1,\dots,\Delta_8\}$.
 
We have listed possible simple $\mathcal{T}^\vartheta$-trees rooted at each $\Delta_i$ in the table on Figure \ref{fig:SimpleTrees}.

\begin{figure}
\begin{center}
\begin{tabular}{|c|c|}
\hline
\begin{tikzpicture}[node distance=3cm]
  \tikzstyle{etat}=[text badly centered]
  \tikzstyle{fleche}=[->,>=latex]
  \node[etat, text width=1cm] (D1) {$\Delta_1$};
  \node[etat, text width=1cm] (D3) [above right of=D1,yshift=-1cm] {$\Delta_3$} ;
  \node[etat, text width=1cm] (D5) [below right of=D1,yshift=1cm] {$\Delta_5$} ;
 \path[fleche]
    (D1) edge node [text width=0.5cm] {\tiny $0,0$ $0,1$} (D3)
	 edge node [text width=0.5cm] {\tiny $1,0$ $1,1$} (D5);
\end{tikzpicture}
 & 
 \begin{tikzpicture}[node distance=3cm]
  \tikzstyle{etat}=[text badly centered]
  \tikzstyle{fleche}=[->,>=latex]
  \node[etat, text width=1cm] (D2) {$\Delta_2$};
  \node[etat, text width=1cm] (D3) [above right of=D2,yshift=-1cm] {$\Delta_3$} ;
  \node[etat, text width=1cm] (D7) [below right of=D2,yshift=1cm] {$\Delta_7$} ;
 \path[fleche]
    (D2) edge node [text width=0.5cm] {\tiny $0,0$ $0,1$} (D3)
	 edge node [text width=0.5cm] {\tiny $1,0$ $1,1$} (D7);
\end{tikzpicture}
\\\hline
 \begin{tikzpicture}[node distance=2cm]
  \tikzstyle{etat}=[text badly centered]
  \tikzstyle{fleche}=[->,>=latex]
  \node[etat, text width=1cm] (D3) {$\Delta_3$};
  \node[etat, text width=1cm] (D4) [right of=D3] {$\Delta_4$} ;
 \path[fleche]
    (D3) edge node [above] {\tiny $0,0$} (D4);
\end{tikzpicture} 
& 
\begin{tikzpicture}[node distance=2cm]
  \tikzstyle{etat}=[text badly centered]
  \tikzstyle{fleche}=[->,>=latex]
  \node[etat, text width=1cm] (D4) {$\Delta_4$};
  \node[etat, text width=1cm] (D8) [right of=D4] {$\Delta_8$} ;
 \path[fleche]
    (D4) edge node [above] {\tiny $0,0$} (D8);
\end{tikzpicture} 
\\\hline
\begin{tikzpicture}[node distance=2cm]
  \tikzstyle{etat}=[text badly centered]
  \tikzstyle{fleche}=[->,>=latex]
  \node[etat, text width=1cm] (D5) {$\Delta_5$};
  \node[etat, text width=1cm] (D5b) [right of=D5] {$\Delta_5$} ;
 \path[fleche]
    (D5) edge node [above] {\tiny $0,0$} (D5b);
\end{tikzpicture} 
& 
\begin{tikzpicture}[node distance=2cm]
  \tikzstyle{etat}=[text badly centered]
  \tikzstyle{fleche}=[->,>=latex]
  \node[etat, text width=1cm] (D6) {$\Delta_6$};
  \node[etat, text width=1cm] (D7) [right of=D6] {$\Delta_7$} ;
 \path[fleche]
    (D6) edge node [above] {\tiny $0,0$} (D7);
\end{tikzpicture} 
\\\hline
\begin{tikzpicture}[node distance=2cm]
  \tikzstyle{etat}=[text badly centered]
  \tikzstyle{fleche}=[->,>=latex]
  \node[etat, text width=1cm] (D7) {$\Delta_7$};
  \node[etat, text width=1cm] (D7b) [right of=D7] {$\Delta_7$} ;
 \path[fleche]
    (D7) edge node [above] {\tiny $0,0$} (D7b);
\end{tikzpicture} 
& 
\begin{tikzpicture}[node distance=2cm]
  \tikzstyle{etat}=[text badly centered]
  \tikzstyle{fleche}=[->,>=latex]
  \node[etat, text width=1cm] (D8) {$\Delta_8$};
  \node[etat, text width=1cm] (D8b) [right of=D8] {$\Delta_8$} ;
 \path[fleche]
    (D8) edge node [above] {\tiny $0,0$} (D8b);
\end{tikzpicture} 
\\\hline
 \end{tabular}
 \caption{Simple $\mathcal{T}^\vartheta$-trees} 
\vspace{-0.5cm}
\label{fig:SimpleTrees}
 \end{center}
 \end{figure}
\end{example}

To show the existence of a realizing $\tabb{}$-tree for $\xi$ at $\Delta$, we first prove the existence of a realization witness tree $\mr_\xi$ for $\xi$ at $\Delta$.

\begin{lemma}
\label{lem:realization}
 Let $\tabb{}$ be a tableau for $\initf$ and $\xi$ be a potential eventuality realized at $\Delta \in \tabb{}$.  Then, there  exists a realization witness tree $\mr_\xi$ for $\xi$ at $\Delta$ in $\tabb{}$. 
\end{lemma}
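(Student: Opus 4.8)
The plan is to prove the lemma by induction on the \emph{rank} of the inductively defined relation ``$\xi$ is realized at $\Delta$ in $\tabb{}$'' of Definition~\ref{def:realEvent}. This relation is the least fixpoint of the monotone operator given by its clauses 1 and 2, over the finite set of pairs (state of $\tabb{}$, descendant potential eventuality of $\xi$): it is finite because every descendant potential eventuality of $\xi$ lies in $cl(\initf)$ (the closure is closed under $\gamma$- and under successor-components), so $\mathrm{rank}(\xi,\Delta)$, the least iteration stage at which $(\xi,\Delta)$ enters the relation, is well defined, and in clause 2 every pair $(\xi^1_\Delta,\Delta')$ invoked has strictly smaller rank. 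I will carry out the argument for $\xi=\diaA\Phi$; the case $\xi=\crochetA\Phi$, in which $A\neq\agents$ since $\xi$ is a $\gamma$-formula, is symmetric, with enforceable successor formulae, $A$-actions, $\sigma_A[\diaAn\cdot]$ and $D(\Delta,\diaAn\cdot)$ replaced throughout by unavoidable successor formulae, $A$-co-actions, $\covect{A}[\crochetAn\cdot]$ and $D(\Delta,\crochetAn\cdot)$, and with the notion of outcome set of $\covect{A}$ and the co-action analogue of Lemma~\ref{lem:outcomeSet} in place of their enforceable counterparts.

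In the base case $\xi$ is realized at $\Delta$ by clause 1 of Definition~\ref{def:realEvent}, i.e.\ $\fonctA{\Phi}{\Delta}=true$, and I take $\mr_\xi$ to be the one-node tree whose root is coloured with $\Delta$: it is a leaf of depth $0$ with $\xi^0=\xi=\diaA\Phi\in\Delta$ and $\fonctA{\Phi}{\Delta}=true$, so clauses 1 and 4 of Definition~\ref{def:realTree1} hold and clauses 2, 3 are vacuous. In the inductive step $\xi$ is realized by clause 2, so $\fonctA{\Phi}{\Delta}=false$ and the successor potential eventuality $\xi^1_\Delta$ is defined; by Definition~\ref{def:descendant} this means the $\gamma$-component $\gamma(\xi,\Delta)$ linked to $\xi$ in the full expansion $\Delta$ has the form $\psi\wedge\ddiaA\Psi$ with $\xi^1_\Delta=\diaA\Psi$, and since $\Delta$, being a full expansion, is closed under $\alpha$-components we get $\diaAn\xi^1_\Delta=\ddiaA\Psi\in\Delta$, so $\diaAn\xi^1_\Delta$ appears at some enforceable position of the list $\mathbb{L}$ of primitive successor formulae of $\Delta$ produced by \rnext.

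Now clause 2 of Definition~\ref{def:realEvent} supplies, for every $\sigma\in D(\Delta,\diaAn\xi^1_\Delta)$, a successor $\Delta'$ with $\Delta\stackrel{\sigma}{\longrightarrow}\Delta'$ at which $\xi^1_\Delta$ is realized. I would convert this into an outcome set $X$ of $\sigma_A[\diaAn\xi^1_\Delta]$ at $\Delta$ all of whose members realize $\xi^1_\Delta$, by the same device as in the proof of Lemma~\ref{lem:outcomeSet}: since after prestate elimination the target prestate $\Gamma$ of an edge $\stackrel{\sigma}{\longrightarrow}$ with $\sigma\sqsupseteq\sigma_A[\diaAn\xi^1_\Delta]$ is uniquely determined by $\sigma$, and the set $\{\,\Delta'\in\st{\Gamma}\mid \xi^1_\Delta\text{ realized at }\Delta'\,\}$ depends only on $\Gamma$ and is nonempty by clause 2, one picks one such $\Delta_\Gamma$ per reachable prestate $\Gamma$ and sets $X:=\{\Delta_\Gamma\}$ (selecting full expansions at which $\xi^1_\Delta$ is realized, rather than ones true in a model as in Lemma~\ref{lem:outcomeSet}). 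Each $\Delta_\Gamma\in X$ realizes $\xi^1_\Delta$ with strictly smaller rank, so the induction hypothesis yields a realization witness tree $\mr_{\Delta_\Gamma}$ for $\xi^1_\Delta$ at $\Delta_\Gamma$; I then form $\mr_\xi$ by taking a fresh root coloured with $\Delta$, attaching children coloured bijectively with the states of $X$, and grafting $\mr_{\Delta_\Gamma}$ below the child coloured $\Delta_\Gamma$. At the root, clauses 1--3 of Definition~\ref{def:realTree1} hold by construction ($\xi^0=\xi$, $\xi^1=\xi^1_\Delta$ with $\diaAn\xi^1_\Delta\in\Delta$, and $X$ the required outcome set of $\sigma_A[\diaAn\xi^1_\Delta]$); at every other node they are inherited from the grafted subtrees, the point to check being that shifting all node depths and all degrees of descendant eventualities by one is consistent, because the degree-$(d{+}1)$ descendants of $\xi$ are exactly the degree-$d$ descendants of $\xi^1_\Delta$. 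Finiteness of $\mr_\xi$ follows from finiteness of $X$ and of each $\mr_{\Delta_\Gamma}$, and for $i=0$ this gives a realization witness tree for $\xi$ at $\Delta$, as claimed.

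The step I expect to be the main obstacle is precisely this extraction of the outcome set $X$ in the inductive step: one must pass from the ``for every action profile there exists a good successor'' form of clause 2 to a single set of successors meeting the ``exactly one $\sigma$-successor in $X$'' requirement of an outcome set while keeping every member among the states realizing $\xi^1_\Delta$. I would handle it exactly as in Lemma~\ref{lem:outcomeSet} and in the construction of simple $\tabb{}$-trees (Lemma~\ref{lem:simpletree}), using that each action profile determines its target prestate uniquely and that ``being a $\sigma$-successor realizing $\xi^1_\Delta$'' depends on $\sigma$ only through that prestate. The remaining work -- aligning, along each branch, node depths with degrees of descendant potential eventualities -- is routine bookkeeping.
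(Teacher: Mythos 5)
Your proof is correct and follows essentially the same route as the paper's: induction on the rank of realization, with a one-node tree in the base case and, in the inductive step, children supplied by clause 2 of Definition~\ref{def:realEvent} to which the induction hypothesis is applied before grafting the resulting subtrees under a fresh root coloured with $\Delta$. You are in fact somewhat more careful than the paper on two points it glosses over --- the well-foundedness of the rank (via the least-fixpoint formulation) and the extraction of a genuine outcome set satisfying the ``exactly one $\sigma$-successor'' condition by choosing one realizing state per reachable prestate --- but these are refinements of, not departures from, the published argument.
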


\begin{proof}
We give detailed proof only for the case where $\xi = \diaA\Phi$; the other case is similar, just replace $\diaA$ by $\crochetA$ in the proof. 
Suppose that  $\xi$ is realized at $\Delta$ in $\tabb{}$. 
We define the \keyterm{rank of $\xi$ at $\Delta$ in $\tabb{}$}, denoted $rank(\xi, \Delta, \tabb{})$ to be the minimal length of a chain of descendant potential eventualities $\xi=\xi^0,...,\xi^n=\diaAn\Phi_n$ ensuring the realization of $\xi$, that is, $Real(\Phi_n, \Delta_j)= true$ for some state $\Delta_j$ descendant of $\Delta$ in $\tabb{}$. 
We prove the existence of a realization witness tree $\mr$ for $\xi$ at $\Delta$ in $\tabb{}$ by induction on rank $rank(\xi, \Delta, \tabb{})$.

Base: $rank(\xi, \Delta, \tabb{}) = 0$. 
Here $\xi$ is immediately realized and $\mr_\xi$ contains only the root coloured with $\Delta$.

Inductive step: $rank(\xi, \Delta, \tabb{}) = k$ where $k > 0$. Since $\xi$ is realized at $\Delta \in \tabb{}$ and $rank(\xi, \Delta, \tabb{}) > 0$, by Definition \ref{def:realEvent} we have that for every $\sigma \in D(\Delta,\diaAn\xi^1)$ there exists $\Delta' \in \tabb{}$ such that $\Delta \stackrel{\sigma}{\longrightarrow} \Delta'$ and $\xi^1$ is realized at $\Delta' \in \tabb{}$. We build a tree $T$ rooted at a node $r$ coloured with $\Delta$ where the children $v$ of $r$ are coloured bijectively with the set of $\Delta'$ obtained above. Then $rank(\xi^1, \Delta', \tabb{}) = k-1$ and we can apply the inductive hypothesis to obtain a realization witness tree $\mr_{\xi^1}$ for $\xi^1$ at $\Delta'$ in $\tabb{}$ for each $\Delta'$. Thus, replacing each node $v$ of $T$ by the corresponding $\mr_{\xi^1}$ gives us $\mr_\xi$.
\end{proof}

\begin{lemma}
\label{lem:H6}
 Let $\xi \in \Delta \in \settab{}$ be a potential eventuality. Then, there exists a finite realizing $\tabb{}$-tree for $\xi$ rooted at $\Delta$.
\end{lemma}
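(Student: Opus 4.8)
The plan is to build the realizing $\tabb{}$-tree for $\xi$ at $\Delta$ by starting from a realization witness tree and then extending it, below each of its leaves, with simple $\tabb{}$-trees so as to satisfy the structural requirements of a $\tabb{}$-tree (namely, that every interior node of the resulting tree has exactly one child per successor state of its colour, with the edge labelled by the full set of action profiles leading to that successor). First I would invoke Lemma \ref{lem:realization}: since $\xi$ is a potential eventuality occurring in a state $\Delta$ of the \emph{final} tableau $\tabb{}$ — which is open and on which no elimination rule applies — $\xi$ is realized at $\Delta$ in $\tabb{}$, hence there exists a realization witness tree $\mr_{\xi}$ for $\xi$ at $\Delta$ in $\tabb{}$.

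Next I would turn $\mr_{\xi}$ into a genuine $\tabb{}$-tree. The realization witness tree already has the right shape at its interior nodes for the action profiles that matter for the eventuality (the children are coloured bijectively with an outcome set of the relevant $A$-action or $A$-co-action), but it need not account for \emph{all} successor states of each interior node, nor are its edges labelled by action-profile sets in the sense required of a $\tabb{}$-tree. So the construction proceeds top-down: at each interior node $w$ of $\mr_{\xi}$ coloured with some $\Delta'$, for every successor state $\Delta''$ of $\Delta'$ in $\tabb{}$ I add (if not already present from $\mr_{\xi}$) exactly one child coloured with $\Delta''$, and I set the edge label to $\{\sigma \mid \Delta' \stackrel{\sigma}{\longrightarrow} \Delta''\}$; as noted in the proof of Lemma \ref{lem:simpletree}, the family of these sets over all successors $\Delta''$ partitions $\vectorsatstate{\Delta'}$, so the $\tabb{}$-tree edge-labelling conditions are met. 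Here one must check that the outcome set used in $\mr_{\xi}$ at $w$ is compatible with this, i.e. that the children $\mr_{\xi}$ already provided correspond to a sub-collection of the $Moves$-classes and can be merged with the freshly added ones into one consistent labelling — this uses the fact that an outcome set of $\sigma_A[\diaAn\xi^i]$ picks exactly one successor per action profile extending $\sigma_A$, so those successors sit inside distinct $Moves$-classes. Then, below every leaf of $\mr_{\xi}$, and below every freshly added node, I attach a simple $\tabb{}$-tree rooted at the corresponding state, which exists by Lemma \ref{lem:simpletree}; I iterate this attachment only finitely, because once we are below the original $\mr_{\xi}$ the leaves of the attached simple trees become leaves of the whole tree (simple trees have no interior nodes other than the root), and the interior nodes added to complete $\mr_{\xi}$ are finitely many since $\mr_{\xi}$ is finite.

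The result $\treeW$ is a $\tabb{}$-tree rooted at $\Delta$ by construction: it is $\settab{}$-coloured, edge-labelled by $\bigcup_{\Delta \in \settab{}}\vectorsatstate{\Delta}$, each edge label is a subset of $\vectorsatstate{c(w)}$, and every interior node has, for each successor of its colour, exactly one child with the prescribed edge label. Moreover $\mr_{\xi}$ is a subtree of $\treeW$ rooted at $\Delta$ and is a realization witness tree for $\xi$ rooted at $\Delta \in \tabb{}$, so $\treeW$ is a realizing $\tabb{}$-tree for $\xi$, i.e. $\treeW = \treeW_{\xi}$; and $\treeW$ is finite. I expect the main obstacle to be the bookkeeping in the merging step: verifying carefully that the partial branching already present in $\mr_{\xi}$ (dictated by an outcome set of the eventuality-relevant $A$-action/co-action) is consistent with the full branching required of a $\tabb{}$-tree, so that no successor state is counted twice and every edge gets the full $Moves$-class as its label. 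This is exactly where the precise definitions of outcome set and of $D(\Delta,\diamsn{A_p}\varphi_p)$ / $D(\Delta,\crochetn{A'_q}\psi_q)$ from the \rnext must be used, rather than any deeper new idea. \qed
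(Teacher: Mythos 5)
Your proposal is correct and follows essentially the same route as the paper: obtain the realization witness tree $\mr_{\xi}$ via Lemma \ref{lem:realization} (using that $\Delta$ survived the elimination phase, so $\xi$ is realized at $\Delta$), then complete it to a $\tabb{}$-tree by adding, at each node, one child per successor state not already covered, with edges labelled by the corresponding sets of action profiles. The only divergence is your extra step of grafting simple $\tabb{}$-trees below the leaves, which is unnecessary (the $\tabb{}$-tree conditions constrain only interior nodes, so leaves may remain childless) but harmless; your more careful treatment of merging the outcome-set children with the full $Moves$-class labelling is a point the paper's proof glosses over.
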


\begin{proof}
 Since $\tabb{}$ is open, $\xi$ is realized at $\Delta$ in $\tabb{}$. To construct the realizing $\tabb{}$-tree $\treeW_{\xi}$ for $\xi$ rooted at $\Delta$, we start from the realization witness tree $\mr_{\xi}$, whose existence is given by Lemma \ref{lem:realization} and provisionally we take $\treeW_\xi$ to be $\mr_\xi$. The problem with $\mr_{\xi}$ is that for some $\sigma \in \vectorsatstate{\Delta}$ at some node $w$ of $\mr_{\xi}$ , there is no edge $w \rightsquigarrow w'$ such that $l(w\rightsquigarrow w')\ni \sigma $.
Therefore, to extend $\treeW_{\xi}$ into a realizing $\tabb{}$-tree, for every such node $w$, we pick one of the successor states of $c(w)$ via $\sigma$, say $\Delta'$ and add a node $w'$ to $\treeW_{\xi}$ such that $c(w')=\Delta'$ and $l(w \rightsquigarrow w') \ni \sigma$. \qed
\end{proof}

\begin{example}
\label{run-ex11}
(\textit{Continuation of Example \ref{run-ex10}})  
 We now extract in Figure \ref{fig:RealizingTrees} a possible realizing tree from the open tableau for every state $\Delta_i$, using realization witness trees.
 
 \begin{figure}
 \begin{center}
\begin{tabular}{|c|c|c|c|c|}\hline
State & Eventualities & Rank & Realization Witness Tree & Realizing Tree \\\hline
$\Delta_1$ & $\xi_1=\diams{1}(p\until q \lor \Box q)$ & 1 & 
\begin{tikzpicture}[node distance=2cm]
  \tikzstyle{etat}=[text badly centered]
  \tikzstyle{fleche}=[->,>=latex]
  \node[etat, text width=1cm] (D1) {$\Delta_1$};
  \node[etat, text width=1cm] (D4) [right of=D1] {$\Delta_4$} ;
 \path[fleche]
    (D1) edge node [text width=0.5cm] {\tiny $0,0$ $0,1$} (D4);
\end{tikzpicture}  
&
\begin{tikzpicture}[node distance=3cm]
  \tikzstyle{etat}=[text badly centered]
  \tikzstyle{fleche}=[->,>=latex]
  \node[etat, text width=1cm] (D1) {$\Delta_1$};
  \node[etat, text width=1cm] (D4) [above right of=D1,yshift=-1cm] {$\Delta_4$} ;
  \node[etat, text width=1cm] (D5) [below right of=D1,yshift=1cm] {$\Delta_5$} ;
 \path[fleche]
    (D1) edge node [text width=0.5cm] {\tiny $0,0$ $0,1$} (D4)
	 edge node [text width=0.5cm] {\tiny $1,0$ $1,1$} (D5);
\end{tikzpicture}
\\\hline
$\Delta_1$ & $\xi_2=\crochet{2}(\Diamond p \land \Box \neg q)$ & 1 & 
\begin{tikzpicture}[node distance=2cm]
  \tikzstyle{etat}=[text badly centered]
  \tikzstyle{fleche}=[->,>=latex]
  \node[etat, text width=1cm] (D1) {$\Delta_1$};
  \node[etat, text width=1cm] (D6) [right of=D1] {$\Delta_6$} ;
 \path[fleche]
    (D1) edge node [text width=0.5cm] {\tiny $1,0$ $1,1$} (D6);
\end{tikzpicture} 
 & 
\begin{tikzpicture}[node distance=3cm]
  \tikzstyle{etat}=[text badly centered]
  \tikzstyle{fleche}=[->,>=latex]
  \node[etat, text width=1cm] (D1) {$\Delta_1$};
  \node[etat, text width=1cm] (D4) [above right of=D1,yshift=-1cm] {$\Delta_4$} ;
  \node[etat, text width=1cm] (D6) [below right of=D1,yshift=1cm] {$\Delta_6$} ;
 \path[fleche]
    (D1) edge node [text width=0.5cm] {\tiny $0,0$ $0,1$} (D4)
	 edge node [text width=0.5cm] {\tiny $1,0$ $1,1$} (D6);
\end{tikzpicture}
 \\\hline
$\Delta_2$ & $\xi_1=\diams{1}(p\until q \lor \Box q)$ & 1 & 
\begin{tikzpicture}[node distance=2cm]
  \tikzstyle{etat}=[text badly centered]
  \tikzstyle{fleche}=[->,>=latex]
  \node[etat, text width=1cm] (D2) {$\Delta_2$};
  \node[etat, text width=1cm] (D4) [right of=D2] {$\Delta_4$} ;
 \path[fleche]
    (D2) edge node [text width=0.5cm] {\tiny $0,0$ $0,1$} (D4);
\end{tikzpicture}  
& 
\begin{tikzpicture}[node distance=3cm]
  \tikzstyle{etat}=[text badly centered]
  \tikzstyle{fleche}=[->,>=latex]
  \node[etat, text width=1cm] (D1) {$\Delta_2$};
  \node[etat, text width=1cm] (D4) [above right of=D1,yshift=-1cm] {$\Delta_4$} ;
  \node[etat, text width=1cm] (D7) [below right of=D1,yshift=1cm] {$\Delta_7$} ;
 \path[fleche]
    (D1) edge node [text width=0.5cm] {\tiny $0,0$ $0,1$} (D4)
	 edge node [text width=0.5cm] {\tiny $1,0$ $1,1$} (D7);
\end{tikzpicture}
 \\\hline
$\Delta_2$ & $\xi_2=\crochet{2}(\Diamond p \land \Box \neg q)$ & 1 & 
\begin{tikzpicture}[node distance=2cm]
  \tikzstyle{etat}=[text badly centered]
  \tikzstyle{fleche}=[->,>=latex]
  \node[etat, text width=1cm] (D2) {$\Delta_2$};
  \node[etat, text width=1cm] (D7) [right of=D2] {$\Delta_7$} ;
 \path[fleche]
    (D2) edge node [text width=0.5cm] {\tiny $1,0$ $1,1$} (D7);
\end{tikzpicture}  
& 
\begin{tikzpicture}[node distance=3cm]
  \tikzstyle{etat}=[text badly centered]
  \tikzstyle{fleche}=[->,>=latex]
  \node[etat, text width=1cm] (D1) {$\Delta_2$};
  \node[etat, text width=1cm] (D4) [above right of=D1,yshift=-1cm] {$\Delta_4$} ;
  \node[etat, text width=1cm] (D7) [below right of=D1,yshift=1cm] {$\Delta_7$} ;
 \path[fleche]
    (D1) edge node [text width=0.5cm] {\tiny $0,0$ $0,1$} (D4)
	 edge node [text width=0.5cm] {\tiny $1,0$ $1,1$} (D7);
\end{tikzpicture}
 \\\hline
$\Delta_3$ & $\xi_3=\xi_1^1=\diams{1}p\until q$ & 1 & 
\begin{tikzpicture}[node distance=2cm]
  \tikzstyle{etat}=[text badly centered]
  \tikzstyle{fleche}=[->,>=latex]
  \node[etat, text width=1cm] (D3) {$\Delta_3$};
  \node[etat, text width=1cm] (D4) [right of=D3] {$\Delta_4$} ;
 \path[fleche]
    (D3) edge node [above] {\tiny $0,0$} (D4);
\end{tikzpicture} 
&
\begin{tikzpicture}[node distance=2cm]
  \tikzstyle{etat}=[text badly centered]
  \tikzstyle{fleche}=[->,>=latex]
  \node[etat, text width=1cm] (D3) {$\Delta_3$};
  \node[etat, text width=1cm] (D4) [right of=D3] {$\Delta_4$} ;
 \path[fleche]
    (D3) edge node [above] {\tiny $0,0$} (D4);
\end{tikzpicture} 
\\\hline
$\Delta_4$ & $\xi_3=\xi_1^1=\diams{1}p\until q$ & 0 & $\Delta_4$
& 
\begin{tikzpicture}[node distance=2cm]
  \tikzstyle{etat}=[text badly centered]
  \tikzstyle{fleche}=[->,>=latex]
  \node[etat, text width=1cm] (D4) {$\Delta_4$};
  \node[etat, text width=1cm] (D8) [right of=D4] {$\Delta_8$} ;
 \path[fleche]
    (D4) edge node [above] {\tiny $0,0$} (D8);
\end{tikzpicture}\\\hline
$\Delta_5$ & $\xi_2=\crochet{2}(\Diamond p \land \Box \neg q)$ & 1 & 
\begin{tikzpicture}[node distance=2cm]
  \tikzstyle{etat}=[text badly centered]
  \tikzstyle{fleche}=[->,>=latex]
  \node[etat, text width=1cm] (D5) {$\Delta_5$};
  \node[etat, text width=1cm] (D6) [right of=D5] {$\Delta_6$} ;
 \path[fleche]
    (D5) edge node [above] {\tiny $0,0$} (D6);
\end{tikzpicture} 
& 
\begin{tikzpicture}[node distance=2cm]
  \tikzstyle{etat}=[text badly centered]
  \tikzstyle{fleche}=[->,>=latex]
  \node[etat, text width=1cm] (D5) {$\Delta_5$};
  \node[etat, text width=1cm] (D6) [right of=D5] {$\Delta_6$} ;
 \path[fleche]
    (D5) edge node [above] {\tiny $0,0$} (D6);
\end{tikzpicture}
\\\hline
$\Delta_6$ & $\xi_2=\crochet{2}(\Diamond p \land \Box \neg q)$ & 0 & $\Delta_6$
& 
\begin{tikzpicture}[node distance=2cm]
  \tikzstyle{etat}=[text badly centered]
  \tikzstyle{fleche}=[->,>=latex]
  \node[etat, text width=1cm] (D6) {$\Delta_6$};
  \node[etat, text width=1cm] (D7) [right of=D6] {$\Delta_7$} ;
 \path[fleche]
    (D6) edge node [above] {\tiny $0,0$} (D7);
\end{tikzpicture}\\\hline
$\Delta_7$ & $\xi_4=\xi_2^1 = \crochet{2}\Box \neg q$ & 0 & $\Delta_7$ &
\begin{tikzpicture}[node distance=2cm]
  \tikzstyle{etat}=[text badly centered]
  \tikzstyle{fleche}=[->,>=latex]
  \node[etat, text width=1cm] (D7) {$\Delta_7$};
  \node[etat, text width=1cm] (D7b) [right of=D7] {$\Delta_7$} ;
 \path[fleche]
    (D7) edge node [above] {\tiny $0,0$} (D7b);
\end{tikzpicture}\\\hline
 \end{tabular}
 \caption{Eventualities and realizing $\mathcal{T}^\vartheta$-trees} 
\label{fig:RealizingTrees}
 \end{center}
 \end{figure} 
\end{example}

We now construct a final structure, denoted by $\mathfrak{F}$, from simple and realizing $\tabb{}$-trees. This construction is made step-by-step. At the end of the construction, we prove that $\mathfrak{F}$ is indeed a Hintikka structure. 

\textbf{Step 1.} We define a grid $\mathcal{F}$ of size $m \times n$, where $m$ is the number of eventualities occurring in $\tabb{} $ and 
$n$ the number of states of $\tabb{}$. Each row of that grid is labelled by one of the potential eventualities and each column by a state of $\tabb{}$ previously ordered by name ($\Delta_i < \Delta_j$ if $i < j$). We denote by $\xi_i$ the eventuality associated to row $0 \leq i \leq m$, we denote by $\Delta_j$ the state associated to the column $0 \leq j \leq n$. The content $\grid(i,j)$ of each intersection between a row $i$ and a column $j$ of $\grid$ is as follows: if $\xi_i \in \Delta_j$, then $\grid(i,j)$ is the realizing $\tabb{}$-tree for $\xi_i$ rooted at $\Delta_j$, whose existence is ensured by Lemma \ref{lem:H6}; otherwise, $\grid(i,j)$ is the simple $\tabb{}$-tree rooted at $\Delta_j$, whose existence is ensured by Lemma \ref{lem:simpletree}.

\begin{example}
\label{run-ex12}
(\textit{Continuation of Example \ref{run-ex11}})  
The grid $\grid$ for our example  has a size $4 \times 8$ and is represented in Figure \ref{fig:grid}.

{
\newcommand{\mc}[3]{\multicolumn{#1}{#2}{#3}}

 \begin{figure} 
\begin{center}
\scalebox{0.8}{
\begin{tabular}{|c|c|c|c|c|c|}\hline
\mc{2}{|c|}{$\mathcal{F}$} & 0 & 1 & 2 & 3\\\hline
\mc{2}{|c|}{} & $\xi_1$ & $\xi_2$ & $\xi_3$ & $\xi_4$\\\hline
0 & $\Delta_1$ & 
\begin{tikzpicture}[node distance=3cm]
  \tikzstyle{etat}=[text badly centered]
  \tikzstyle{fleche}=[->,>=latex]
  \node[etat, text width=1cm] (D1) {$\Delta_1$};
  \node[etat, text width=1cm] (D4) [above right of=D1,yshift=-1cm] {$\Delta_4$} ;
  \node[etat, text width=1cm] (D5) [below right of=D1,yshift=1cm] {$\Delta_5$} ;
 \path[fleche]
    (D1) edge node [text width=0.5cm] {\tiny $0,0$ $0,1$} (D4)
	 edge node [text width=0.5cm] {\tiny $1,0$ $1,1$} (D5);
\end{tikzpicture}
 & 
\begin{tikzpicture}[node distance=3cm]
  \tikzstyle{etat}=[text badly centered]
  \tikzstyle{fleche}=[->,>=latex]
  \node[etat, text width=1cm] (D1) {$\Delta_1$};
  \node[etat, text width=1cm] (D4) [above right of=D1,yshift=-1cm] {$\Delta_4$} ;
  \node[etat, text width=1cm] (D6) [below right of=D1,yshift=1cm] {$\Delta_6$} ;
 \path[fleche]
    (D1) edge node [text width=0.5cm] {\tiny $0,0$ $0,1$} (D4)
	 edge node [text width=0.5cm] {\tiny $1,0$ $1,1$} (D6);
\end{tikzpicture}
 & 
\begin{tikzpicture}[node distance=3cm]
  \tikzstyle{etat}=[text badly centered]
  \tikzstyle{fleche}=[->,>=latex]
  \node[etat, text width=1cm] (D1) {$\Delta_1$};
  \node[etat, text width=1cm] (D3) [above right of=D1,yshift=-1cm] {$\Delta_3$} ;
  \node[etat, text width=1cm] (D5) [below right of=D1,yshift=1cm] {$\Delta_5$} ;
 \path[fleche]
    (D1) edge node [text width=0.5cm] {\tiny $0,0$ $0,1$} (D3)
	 edge node [text width=0.5cm] {\tiny $1,0$ $1,1$} (D5);
\end{tikzpicture}
 &
 \begin{tikzpicture}[node distance=3cm]
  \tikzstyle{etat}=[text badly centered]
  \tikzstyle{fleche}=[->,>=latex]
  \node[etat, text width=1cm] (D1) {$\Delta_1$};
  \node[etat, text width=1cm] (D3) [above right of=D1,yshift=-1cm] {$\Delta_3$} ;
  \node[etat, text width=1cm] (D5) [below right of=D1,yshift=1cm] {$\Delta_5$} ;
 \path[fleche]
    (D1) edge node [text width=0.5cm] {\tiny $0,0$ $0,1$} (D3)
	 edge node [text width=0.5cm] {\tiny $1,0$ $1,1$} (D5);
\end{tikzpicture}
\\\hline
1 & $\Delta_2$ & 
\begin{tikzpicture}[node distance=3cm]
  \tikzstyle{etat}=[text badly centered]
  \tikzstyle{fleche}=[->,>=latex]
  \node[etat, text width=1cm] (D1) {$\Delta_2$};
  \node[etat, text width=1cm] (D4) [above right of=D1,yshift=-1cm] {$\Delta_4$} ;
  \node[etat, text width=1cm] (D7) [below right of=D1,yshift=1cm] {$\Delta_7$} ;
 \path[fleche]
    (D1) edge node [text width=0.5cm] {\tiny $0,0$ $0,1$} (D4)
	 edge node [text width=0.5cm] {\tiny $1,0$ $1,1$} (D7);
\end{tikzpicture}
 &
 \begin{tikzpicture}[node distance=3cm]
  \tikzstyle{etat}=[text badly centered]
  \tikzstyle{fleche}=[->,>=latex]
  \node[etat, text width=1cm] (D1) {$\Delta_2$};
  \node[etat, text width=1cm] (D4) [above right of=D1,yshift=-1cm] {$\Delta_4$} ;
  \node[etat, text width=1cm] (D7) [below right of=D1,yshift=1cm] {$\Delta_7$} ;
 \path[fleche]
    (D1) edge node [text width=0.5cm] {\tiny $0,0$ $0,1$} (D4)
	 edge node [text width=0.5cm] {\tiny $1,0$ $1,1$} (D7);
\end{tikzpicture}
 & 
\begin{tikzpicture}[node distance=3cm]
  \tikzstyle{etat}=[text badly centered]
  \tikzstyle{fleche}=[->,>=latex]
  \node[etat, text width=1cm] (D2) {$\Delta_2$};
  \node[etat, text width=1cm] (D3) [above right of=D2,yshift=-1cm] {$\Delta_3$} ;
  \node[etat, text width=1cm] (D7) [below right of=D2,yshift=1cm] {$\Delta_7$} ;
 \path[fleche]
    (D2) edge node [text width=0.5cm] {\tiny $0,0$ $0,1$} (D3)
	 edge node [text width=0.5cm] {\tiny $1,0$ $1,1$} (D7);
\end{tikzpicture} 
& 
\begin{tikzpicture}[node distance=3cm]
  \tikzstyle{etat}=[text badly centered]
  \tikzstyle{fleche}=[->,>=latex]
  \node[etat, text width=1cm] (D2) {$\Delta_2$};
  \node[etat, text width=1cm] (D3) [above right of=D2,yshift=-1cm] {$\Delta_3$} ;
  \node[etat, text width=1cm] (D7) [below right of=D2,yshift=1cm] {$\Delta_7$} ;
 \path[fleche]
    (D2) edge node [text width=0.5cm] {\tiny $0,0$ $0,1$} (D3)
	 edge node [text width=0.5cm] {\tiny $1,0$ $1,1$} (D7);
\end{tikzpicture}
\\\hline
2 & $\Delta_3$ & 
\begin{tikzpicture}[node distance=2cm]
  \tikzstyle{etat}=[text badly centered]
  \tikzstyle{fleche}=[->,>=latex]
  \node[etat, text width=1cm] (D3) {$\Delta_3$};
  \node[etat, text width=1cm] (D4) [right of=D3] {$\Delta_4$} ;
 \path[fleche]
    (D3) edge node [above] {\tiny $0,0$} (D4);
\end{tikzpicture}  
& 
\begin{tikzpicture}[node distance=2cm]
  \tikzstyle{etat}=[text badly centered]
  \tikzstyle{fleche}=[->,>=latex]
  \node[etat, text width=1cm] (D3) {$\Delta_3$};
  \node[etat, text width=1cm] (D4) [right of=D3] {$\Delta_4$} ;
 \path[fleche]
    (D3) edge node [above] {\tiny $0,0$} (D4);
\end{tikzpicture} 
 & \begin{tikzpicture}[node distance=2cm]
  \tikzstyle{etat}=[text badly centered]
  \tikzstyle{fleche}=[->,>=latex]
  \node[etat, text width=1cm] (D3) {$\Delta_3$};
  \node[etat, text width=1cm] (D4) [right of=D3] {$\Delta_4$} ;
 \path[fleche]
    (D3) edge node [above] {\tiny $0,0$} (D4);
\end{tikzpicture}  
& 
\begin{tikzpicture}[node distance=2cm]
  \tikzstyle{etat}=[text badly centered]
  \tikzstyle{fleche}=[->,>=latex]
  \node[etat, text width=1cm] (D3) {$\Delta_3$};
  \node[etat, text width=1cm] (D4) [right of=D3] {$\Delta_4$} ;
 \path[fleche]
    (D3) edge node [above] {\tiny $0,0$} (D4);
\end{tikzpicture} 
\\\hline
3 & $\Delta_4$ & 
\begin{tikzpicture}[node distance=2cm]
  \tikzstyle{etat}=[text badly centered]
  \tikzstyle{fleche}=[->,>=latex]
  \node[etat, text width=1cm] (D4) {$\Delta_4$};
  \node[etat, text width=1cm] (D8) [right of=D4] {$\Delta_8$} ;
 \path[fleche]
    (D4) edge node [above] {\tiny $0,0$} (D8);
\end{tikzpicture}
 & 
\begin{tikzpicture}[node distance=2cm]
  \tikzstyle{etat}=[text badly centered]
  \tikzstyle{fleche}=[->,>=latex]
  \node[etat, text width=1cm] (D4) {$\Delta_4$};
  \node[etat, text width=1cm] (D8) [right of=D4] {$\Delta_8$} ;
 \path[fleche]
    (D4) edge node [above] {\tiny $0,0$} (D8);
\end{tikzpicture}
& 
\begin{tikzpicture}[node distance=2cm]
  \tikzstyle{etat}=[text badly centered]
  \tikzstyle{fleche}=[->,>=latex]
  \node[etat, text width=1cm] (D4) {$\Delta_4$};
  \node[etat, text width=1cm] (D8) [right of=D4] {$\Delta_8$} ;
 \path[fleche]
    (D4) edge node [above] {\tiny $0,0$} (D8);
\end{tikzpicture}
 & 
\begin{tikzpicture}[node distance=2cm]
  \tikzstyle{etat}=[text badly centered]
  \tikzstyle{fleche}=[->,>=latex]
  \node[etat, text width=1cm] (D4) {$\Delta_4$};
  \node[etat, text width=1cm] (D8) [right of=D4] {$\Delta_8$} ;
 \path[fleche]
    (D4) edge node [above] {\tiny $0,0$} (D8);
\end{tikzpicture}
\\\hline
4 & $\Delta_5$ & 
\begin{tikzpicture}[node distance=2cm]
  \tikzstyle{etat}=[text badly centered]
  \tikzstyle{fleche}=[->,>=latex]
  \node[etat, text width=1cm] (D5) {$\Delta_5$};
  \node[etat, text width=1cm] (D5b) [right of=D5] {$\Delta_5$} ;
 \path[fleche]
    (D5) edge node [above] {\tiny $0,0$} (D5b);
\end{tikzpicture} 
 & \begin{tikzpicture}[node distance=2cm]
  \tikzstyle{etat}=[text badly centered]
  \tikzstyle{fleche}=[->,>=latex]
  \node[etat, text width=1cm] (D5) {$\Delta_5$};
  \node[etat, text width=1cm] (D6) [right of=D5] {$\Delta_6$} ;
 \path[fleche]
    (D5) edge node [above] {\tiny $0,0$} (D6);
\end{tikzpicture}  &
 \begin{tikzpicture}[node distance=2cm]
  \tikzstyle{etat}=[text badly centered]
  \tikzstyle{fleche}=[->,>=latex]
  \node[etat, text width=1cm] (D5) {$\Delta_5$};
  \node[etat, text width=1cm] (D5b) [right of=D5] {$\Delta_5$} ;
 \path[fleche]
    (D5) edge node [above] {\tiny $0,0$} (D5b);
\end{tikzpicture} 
 & \begin{tikzpicture}[node distance=2cm]
  \tikzstyle{etat}=[text badly centered]
  \tikzstyle{fleche}=[->,>=latex]
  \node[etat, text width=1cm] (D5) {$\Delta_5$};
  \node[etat, text width=1cm] (D5b) [right of=D5] {$\Delta_5$} ;
 \path[fleche]
    (D5) edge node [above] {\tiny $0,0$} (D5b);
\end{tikzpicture} 
\\\hline
5 & $\Delta_6$
 & 
\begin{tikzpicture}[node distance=2cm]
  \tikzstyle{etat}=[text badly centered]
  \tikzstyle{fleche}=[->,>=latex]
  \node[etat, text width=1cm] (D6) {$\Delta_6$};
  \node[etat, text width=1cm] (D7) [right of=D6] {$\Delta_7$} ;
 \path[fleche]
    (D6) edge node [above] {\tiny $0,0$} (D7);
\end{tikzpicture} 
& 
\begin{tikzpicture}[node distance=2cm]
  \tikzstyle{etat}=[text badly centered]
  \tikzstyle{fleche}=[->,>=latex]
  \node[etat, text width=1cm] (D6) {$\Delta_6$};
  \node[etat, text width=1cm] (D7) [right of=D6] {$\Delta_7$} ;
 \path[fleche]
    (D6) edge node [above] {\tiny $0,0$} (D7);
\end{tikzpicture}
 & 
\begin{tikzpicture}[node distance=2cm]
  \tikzstyle{etat}=[text badly centered]
  \tikzstyle{fleche}=[->,>=latex]
  \node[etat, text width=1cm] (D6) {$\Delta_6$};
  \node[etat, text width=1cm] (D7) [right of=D6] {$\Delta_7$} ;
 \path[fleche]
    (D6) edge node [above] {\tiny $0,0$} (D7);
\end{tikzpicture}
 & 
\begin{tikzpicture}[node distance=2cm]
  \tikzstyle{etat}=[text badly centered]
  \tikzstyle{fleche}=[->,>=latex]
  \node[etat, text width=1cm] (D6) {$\Delta_6$};
  \node[etat, text width=1cm] (D7) [right of=D6] {$\Delta_7$} ;
 \path[fleche]
    (D6) edge node [above] {\tiny $0,0$} (D7);
\end{tikzpicture}
\\\hline
6 & $\Delta_7$ & 
\begin{tikzpicture}[node distance=2cm]
  \tikzstyle{etat}=[text badly centered]
  \tikzstyle{fleche}=[->,>=latex]
  \node[etat, text width=1cm] (D7) {$\Delta_7$};
  \node[etat, text width=1cm] (D7b) [right of=D7] {$\Delta_7$} ;
 \path[fleche]
    (D7) edge node [above] {\tiny $0,0$} (D7b);
\end{tikzpicture}
 & \begin{tikzpicture}[node distance=2cm]
  \tikzstyle{etat}=[text badly centered]
  \tikzstyle{fleche}=[->,>=latex]
  \node[etat, text width=1cm] (D7) {$\Delta_7$};
  \node[etat, text width=1cm] (D7b) [right of=D7] {$\Delta_7$} ;
 \path[fleche]
    (D7) edge node [above] {\tiny $0,0$} (D7b);
\end{tikzpicture}
 & 
\begin{tikzpicture}[node distance=2cm]
  \tikzstyle{etat}=[text badly centered]
  \tikzstyle{fleche}=[->,>=latex]
  \node[etat, text width=1cm] (D7) {$\Delta_7$};
  \node[etat, text width=1cm] (D7b) [right of=D7] {$\Delta_7$} ;
 \path[fleche]
    (D7) edge node [above] {\tiny $0,0$} (D7b);
\end{tikzpicture}
 & 
\begin{tikzpicture}[node distance=2cm]
  \tikzstyle{etat}=[text badly centered]
  \tikzstyle{fleche}=[->,>=latex]
  \node[etat, text width=1cm] (D7) {$\Delta_7$};
  \node[etat, text width=1cm] (D7b) [right of=D7] {$\Delta_7$} ;
 \path[fleche]
    (D7) edge node [above] {\tiny $0,0$} (D7b);
\end{tikzpicture}
\\\hline
7 & $\Delta_8$ & 
\begin{tikzpicture}[node distance=2cm]
  \tikzstyle{etat}=[text badly centered]
  \tikzstyle{fleche}=[->,>=latex]
  \node[etat, text width=1cm] (D8) {$\Delta_8$};
  \node[etat, text width=1cm] (D8b) [right of=D8] {$\Delta_8$} ;
 \path[fleche]
    (D8) edge node [above] {\tiny $0,0$} (D8b);
\end{tikzpicture}
 & 
\begin{tikzpicture}[node distance=2cm]
  \tikzstyle{etat}=[text badly centered]
  \tikzstyle{fleche}=[->,>=latex]
  \node[etat, text width=1cm] (D8) {$\Delta_8$};
  \node[etat, text width=1cm] (D8b) [right of=D8] {$\Delta_8$} ;
 \path[fleche]
    (D8) edge node [above] {\tiny $0,0$} (D8b);
\end{tikzpicture}
 & 
\begin{tikzpicture}[node distance=2cm]
  \tikzstyle{etat}=[text badly centered]
  \tikzstyle{fleche}=[->,>=latex]
  \node[etat, text width=1cm] (D8) {$\Delta_8$};
  \node[etat, text width=1cm] (D8b) [right of=D8] {$\Delta_8$} ;
 \path[fleche]
    (D8) edge node [above] {\tiny $0,0$} (D8b);
\end{tikzpicture}
 & 
\begin{tikzpicture}[node distance=2cm]
  \tikzstyle{etat}=[text badly centered]
  \tikzstyle{fleche}=[->,>=latex]
  \node[etat, text width=1cm] (D8) {$\Delta_8$};
  \node[etat, text width=1cm] (D8b) [right of=D8] {$\Delta_8$} ;
 \path[fleche]
    (D8) edge node [above] {\tiny $0,0$} (D8b);
\end{tikzpicture}
\\\hline
\end{tabular}
}
 \caption{The grid $\mathcal{F}$} 
\label{fig:grid}
 \end{center}
 \end{figure} 
 \bigskip
}%
\end{example}
\medskip

\textbf{Step 2.} We make a queue $\mathcal{Q}$ that will contain potential eventualities occurring in $\tabb{}$. The first element of $Q$ is either $\initf$, if $\initf$ is a potential eventuality, or the potential eventuality associated to the first column of the grid defined just above. Let $\xi_i$ be the first element of the queue, so that $\mathcal{Q}(0) = \xi_i$. Then we add to $\mathcal{Q}$ all the potential eventualities following the order of grid's rows and cycling if necessary, that is $\mathcal{Q}(k) = \xi_{((i+k)\bmod m)} $ for $k \in [1,m-1]$.

\textbf{Step 3.}
Let $\Delta$ be one of the states containing $\initf$.
Next, we take the element $\grid(\mathcal{Q}(0),\Delta)$ of the grid. 
The root of $\grid(\mathcal{Q}(0),\Delta)$ is then the root of $\mathfrak{F}$. 
Then we take one-by-one in order all the elements of the rest of the queue and do the following:

Let $\mathcal{Q}(i)$ be the current element of the queue to be treated.
For every dead-end state $w \in \mathfrak{F}$, that is a state without successors, such that $c(w) = \Delta_j$, we add the tree $\grid(\mathcal{Q}(i),\Delta_j)$ by identifying the dead-end state $w$ with the root of $\grid(\mathcal{Q}(i),\Delta_j)$;

\begin{example}
From the grid $\grid$, we can extract in four steps a partial structure $\mathfrak{F}$ realizing all the eventualities (see Figure \ref{fig:PartialStructure}).

\begin{figure}
\begin{center}
\scalebox{0.8}{
\begin{tabular}{|c|c|c|}\hline
Eventualities & Added grid elements & Partial structure $\mathfrak{F}$\\\hline
$\{\xi_1,\xi_2,\xi_3,\xi_4\}$ & $\mathcal{F}(\xi_1,\Delta_1)$ & 
\begin{tikzpicture}[node distance=3cm]
  \tikzstyle{etat}=[text badly centered]
  \tikzstyle{fleche}=[->,>=latex]
  \node[etat, text width=1cm] (D1) {$\Delta_1$};
  \node[etat, text width=1cm] (D4) [above right of=D1,yshift=-1cm] {$\Delta_4$} ;
  \node[etat, text width=1cm] (D5) [below right of=D1,yshift=1cm] {$\Delta_5$} ;
 \path[fleche]
    (D1) edge node [text width=0.5cm] {\tiny $0,0$ $0,1$} (D4)
	 edge node [text width=0.5cm] {\tiny $1,0$ $1,1$} (D5);
\end{tikzpicture}
\\\hline
$\{\xi_2,\xi_3,\xi_4\}$  & $\mathcal{F}(\xi_2,\Delta_4); \mathcal{F}(\xi_2,\Delta_5)$ &
\begin{tikzpicture}[node distance=3cm]
  \tikzstyle{etat}=[text badly centered]
  \tikzstyle{fleche}=[->,>=latex]
  \node[etat, text width=1cm] (D1) {$\Delta_1$};
  \node[etat, text width=1cm] (D4) [above right of=D1,yshift=-1cm] {$\Delta_4$} ;
  \node[etat, text width=1cm] (D5) [below right of=D1,yshift=1cm] {$\Delta_5$} ;
  \node[etat, text width=1cm] (D6) [right of=D5,node distance=2cm] {$\Delta_6$} ;
  \node[etat, text width=1cm] (D8) [right of=D4,node distance=2cm] {$\Delta_8$} ;
 \path[fleche]
    (D1) edge node [text width=0.5cm] {\tiny $0,0$ $0,1$} (D4)
	 edge node [text width=0.5cm] {\tiny $1,0$ $1,1$} (D5)
    (D4) edge node [above] {\tiny $0,0$} (D8)
    (D5) edge node [above] {\tiny $0,0$} (D6);
\end{tikzpicture}
\\\hline
$\{\xi_3,\xi_4\}$  & $\mathcal{F}(\xi_3,\Delta_8); \mathcal{F}(\xi_3,\Delta_6)$ & 
\begin{tikzpicture}[node distance=3cm]
  \tikzstyle{etat}=[text badly centered]
  \tikzstyle{fleche}=[->,>=latex]
  \node[etat, text width=1cm] (D1) {$\Delta_1$};
  \node[etat, text width=1cm] (D4) [above right of=D1,yshift=-1cm] {$\Delta_4$} ;
  \node[etat, text width=1cm] (D5) [below right of=D1,yshift=1cm] {$\Delta_5$} ;
  \node[etat, text width=1cm] (D6) [right of=D5,node distance=2cm] {$\Delta_6$} ;
  \node[etat, text width=1cm] (D8) [right of=D4,node distance=2cm] {$\Delta_8$} ;
  \node[etat, text width=1cm] (D7) [right of=D6,node distance=2cm] {$\Delta_7$} ;
  \node[etat, text width=1cm] (D8b) [right of=D8,node distance=2cm] {$\Delta_8$} ;
 \path[fleche]
    (D1) edge node [text width=0.5cm] {\tiny $0,0$ $0,1$} (D4)
	 edge node [text width=0.5cm] {\tiny $1,0$ $1,1$} (D5)
    (D4) edge node [above] {\tiny $0,0$} (D8)
    (D5) edge node [above] {\tiny $0,0$} (D6)
    (D6) edge node [above] {\tiny $0,0$} (D7)
    (D8) edge node [above] {\tiny $0,0$} (D8b);
\end{tikzpicture}
\\\hline
$\{\xi_4\}$  & $\mathcal{F}(\xi_4,\Delta_8); \mathcal{F}(\xi_4,\Delta_8)$ & 
\begin{tikzpicture}[node distance=3cm]
  \tikzstyle{etat}=[text badly centered]
  \tikzstyle{fleche}=[->,>=latex]
  \node[etat, text width=1cm] (D1) {$\Delta_1$};
  \node[etat, text width=1cm] (D4) [above right of=D1,yshift=-1cm] {$\Delta_4$} ;
  \node[etat, text width=1cm] (D5) [below right of=D1,yshift=1cm] {$\Delta_5$} ;
  \node[etat, text width=1cm] (D6) [right of=D5,node distance=2cm] {$\Delta_6$} ;
  \node[etat, text width=1cm] (D8) [right of=D4,node distance=2cm] {$\Delta_8$} ;
  \node[etat, text width=1cm] (D7) [right of=D6,node distance=2cm] {$\Delta_7$} ;
  \node[etat, text width=1cm] (D8b) [right of=D8,node distance=2cm] {$\Delta_8$} ;
    \node[etat, text width=1cm] (D7a) [right of=D7,node distance=2cm] {$\Delta_7$} ;
    \node[etat, text width=1cm] (D8d) [right of=D8b,node distance=2cm] {$\Delta_8$} ;
 \path[fleche]
    (D1) edge node [text width=0.5cm] {\tiny $0,0$ $0,1$} (D4)
	 edge node [text width=0.5cm] {\tiny $1,0$ $1,1$} (D5)
    (D4) edge node [above] {\tiny $0,0$} (D8)
    (D5) edge node [above] {\tiny $0,0$} (D6)
    (D6) edge node [above] {\tiny $0,0$} (D7)
    (D8) edge node [above] {\tiny $0,0$} (D8b)
    (D7) edge node [above] {\tiny $0,0$} (D7a)
    (D8b) edge node [above] {\tiny $0,0$} (D8d);
\end{tikzpicture}
\\\hline
\end{tabular}
}
 \caption{Eventualities and partial structures} 
\vspace{-0.5cm}
\label{fig:PartialStructure}
 \end{center}
 \end{figure} 

\end{example}

\textbf{Step 4.} Finally, we ensure that $\mathfrak{F}$ finite. While there is a dead-end state in $\mathfrak{F}$, say $w$ with $c(w)=\Delta_j$, we choose a component from the row $\grid(\Delta_j)$ as follows:
\begin{itemize}
 \item With priority we choose a component $\grid(i,\Delta_j), 0\leq i \leq m$ already occurring in $\mathfrak{F}$. Let $r$ be the root of the component $\grid(i,\Delta_j), 0\leq i \leq m$  inside $\mathfrak{F}$. Then we add an arrow $\rightsquigarrow$ between every predecessor $v$ of $w$ and the root $r$
 and labelled this arrow with $l(v \rightsquigarrow w)$. 
 Then we delete the node $w \in \mathfrak{F}$.
 \item Otherwise, if the chosen component $\grid(i,\Delta_j)$ is not already occurring in $\mathfrak{F}$ then we add the new component to $\mathfrak{F}$ as usual by merging the root of the component with the dead-end state $w$.
\end{itemize}

When there are no longer dead-ends in $\mathfrak{F}$, the structure is completed and we have obtained our final structure.

\begin{example}
The final structure $\mathfrak{F}$ for the formula $\vartheta = \diams{1}(p \until q \vee \Box q) \wedge \crochet{2}(\Diamond p \wedge \Box \neg q)$ is given in Figure \ref{fig:FinalStructure}.

\begin{figure}
 \begin{center}
	\begin{tikzpicture}[node distance=3cm]
	  \tikzstyle{etat}=[text badly centered]
	  \tikzstyle{fleche}=[->,>=latex]
	  \node[etat, text width=1cm] (D1) {$\Delta_1$};
	  \node[etat, text width=1cm] (D4) [above right of=D1,yshift=-1cm] {$\Delta_4$} ;
	  \node[etat, text width=1cm] (D5) [below right of=D1,yshift=1cm] {$\Delta_5$} ;
	  \node[etat, text width=1cm] (D6) [right of=D5,node distance=2cm] {$\Delta_6$} ;
	  \node[etat, text width=1cm] (D8) [right of=D4,node distance=2cm] {$\Delta_8$} ;
	  \node[etat, text width=1cm] (D7) [right of=D6,node distance=2cm] {$\Delta_7$} ;
	  \node[etat, text width=1cm] (D8b) [right of=D8,node distance=2cm] {$\Delta_8$} ;
	 \path[fleche]
	    (D1) edge node [text width=0.5cm] {\tiny $0,0$ $0,1$} (D4)
		 edge node [text width=0.5cm] {\tiny $1,0$ $1,1$} (D5)
	    (D4) edge node [above] {\tiny $0,0$} (D8)
	    (D5) edge node [above] {\tiny $0,0$} (D6)
	    (D6) edge node [above] {\tiny $0,0$} (D7)
	    (D8) edge node [above] {\tiny $0,0$} (D8b)
	    (D7) edge [loop right] node [above] {\tiny $0,0$} (D7)
	    (D8b) edge [loop right] node [above] {\tiny $0,0$} (D8b);
	\end{tikzpicture}
\end{center}	
\caption{The final structure $\mathfrak{F}$ for the formula $\vartheta = \diams{1}(p \until q \vee \Box q) \wedge \crochet{2}(\Diamond p \wedge \Box \neg q)$}
\label{fig:FinalStructure}
\end{figure}

By keeping only the propositional variables in the state labels we obtain the following concurrent game model satisfying $\vartheta$ in Figure \ref{fig:model}.

\begin{figure}
 \begin{center}
	\begin{tikzpicture}[node distance=3cm]
	  \tikzstyle{etat}=[circle,draw,text badly centered]
	  \tikzstyle{lalabel}=[node distance=0.7cm]
	  \tikzstyle{fleche}=[->,>=latex]
	  \node[etat] (D1) {$S_1$};  \node [lalabel] (l1) [left of=D1] {$\{p\}$};
	  \node[etat] (D4) [above right of=D1,yshift=-1cm] {$S_2$} ;  \node [lalabel] (l2) [above of=D4] {$\{q\}$};
	  \node[etat] (D5) [below right of=D1,yshift=1cm] {$S_3$} ; \node [lalabel] (l3) [above of=D5] {$\emptyset$};
	  \node[etat] (D6) [right of=D5,node distance=2cm] {$S_4$} ; \node [lalabel] (l4) [above of=D6] {$\{p\}$};
	  \node[etat] (D8) [right of=D4,node distance=2cm] {$S_5$} ; \node [lalabel] (l5) [above of=D8] {$\{p,q\}$};
	  \node[etat] (D7) [right of=D6,node distance=2cm] {$S_6$} ; \node [lalabel] (l6) [above of=D7] {$\emptyset$};
	  \node[etat] (D8b) [right of=D8,node distance=2cm] {$S_7$} ; \node [lalabel] (l7) [above of=D8b] {$\{p,q\}$};
	 \path[fleche]
	    (D1) edge node [text width=0.5cm] {\tiny $0,0$ $0,1$} (D4)
		 edge node [text width=0.5cm] {\tiny $1,0$ $1,1$} (D5)
	    (D4) edge node [above] {\tiny $0,0$} (D8)
	    (D5) edge node [above] {\tiny $0,0$} (D6)
	    (D6) edge node [above] {\tiny $0,0$} (D7)
	    (D8) edge node [above] {\tiny $0,0$} (D8b)
	    (D7) edge [loop right] node [above] {\tiny $0,0$} (D7)
	    (D8b) edge [loop right] node [above] {\tiny $0,0$} (D8b);
	\end{tikzpicture}
	\end{center}
\caption{Concurrent game model satisfying $\vartheta$}
\label{fig:model}
\end{figure}
\end{example}

\begin{lemma}
\label{lem:H5}
 Let $\mathcal{T}$ be a $\tabb{}$-tree rooted at $\Delta=c(w)$. Then, the following holds:

\begin{enumerate}
 \item If $\diaAn\varphi \in \Delta$, then there exists an $A$-action $\sigma_A \in \vectorsCoalatstate{A}{\Delta}$ such that $\varphi \in c(w')=\Delta'$ where $l(w\rightsquigarrow w') \ni \sigma$ for every $\sigma \sqsupseteq \sigma_A$.
 \item If $\crochetAn\varphi \in \Delta$ then there exists a co-$A$-action $\sigma_A^c \in \ensCoVect{A}{\Delta}$ such that $\varphi \in c(w')=\Delta'$ where $l(w\rightsquigarrow w') \ni \sigma$ for every $\sigma \sqsupseteq \sigma_A^c(\sigma_A)$.
\end{enumerate}
\end{lemma}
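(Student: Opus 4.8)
The plan is to read off the required $A$-action (resp.\ $A$-co-action) from the notation fixed just before the lemma and to unwind the definitions of \rnext, \sr, the prestate-elimination step, and of $\tabb{}$-trees. Throughout we take $w$ to be an interior node of $\mathcal{T}$, which is the case relevant to the construction of $\mathfrak{F}$ (where, after Step~4, no node is a dead-end) and the only case in which the statement has content; then the last condition in the definition of a $\tabb{}$-tree applies at $w$.

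\emph{Item 1.} Since $\diaAn\varphi\in\Delta$ and $\Delta\in\settab{}$ is a tableau state, $\diaAn\varphi$ is a primitive successor formula of $\Delta$, hence it occurs in the list $\mathbb{L}$ produced by \rnext applied to $\Delta$; say it is the $p$-th enforceable successor formula, so $\diaAn\varphi=\diamsn{A_p}\varphi_p$ with $A_p=A$ and $\varphi_p=\varphi$. Put $\sigma_A:=\sigma_{A_p}[\diamsn{A_p}\varphi_p]\in\vectorsCoalatstate{A}{\Delta}$, the unique $A$-action assigning $p$ to every agent of $A$. The key intermediate claim is that for every action profile $\sigma\sqsupseteq\sigma_A$, every tableau state $\Delta'$ with $\Delta\stackrel{\sigma}{\longrightarrow}\Delta'$ contains $\varphi$: in the pretableau $\Delta\stackrel{\sigma}{\longrightarrow}\Gamma_\sigma$ for the unique prestate $\Gamma_\sigma$ created by \rnext, and since $\sigma_a=p$ for every $a\in A_p$, the definition of $\Gamma_\sigma$ gives $\varphi=\varphi_p\in\Gamma_\sigma$; applying \sr to $\Gamma_\sigma$ and then removing prestates, any state $\Delta'$ with $\Delta\stackrel{\sigma}{\longrightarrow}\Delta'$ is a full expansion of $\Gamma_\sigma$, so $\Gamma_\sigma\subseteq\Delta'$ and $\varphi\in\Delta'$ (state elimination does not alter labels, so this persists in $\tabb{}$). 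Now fix any $\sigma\sqsupseteq\sigma_A$ and let $w'$ be a child of $w$ with $\sigma\in l(w\rightsquigarrow w')$; by the last condition in the definition of a $\tabb{}$-tree, $l(w\rightsquigarrow w')=\{\tau\mid\Delta\stackrel{\tau}{\longrightarrow}c(w')\}$, so $\Delta\stackrel{\sigma}{\longrightarrow}c(w')$, i.e.\ $c(w')$ is a $\sigma$-successor of $\Delta$, whence $\varphi\in c(w')$ by the intermediate claim. This is exactly the assertion of item~1.

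\emph{Item 2.} This is entirely parallel, working now in the unavoidable part of $\mathbb{L}$: $\crochetAn\varphi$ occurs in $\mathbb{L}$ as the $q$-th unavoidable successor formula $\crochetn{A'_q}\psi_q$ with $A'_q=A\subsetneq\agents$ and $\psi_q=\varphi$, and we take $\sigma^c_A:=\covect{A'_q}[\crochetn{A'_q}\psi_q]\in\ensCoVect{A}{\Delta}$, the co-action whose existence and uniqueness were fixed in the notation preceding the lemma. For every action profile $\sigma$ consistent with $\sigma^c_A$, that is, $\sigma=(\sigma_A,\sigma^c_A(\sigma_A))$ with $\sigma_A:=\sigma|_A$, the defining property of $\covect{A'_q}[\crochetn{A'_q}\psi_q]$ yields $\co(\sigma)=q$ and $\agents-A'_q\subseteq N(\sigma)$, whence $\varphi=\psi_q\in\Gamma_\sigma$ by the definition of $\Gamma_\sigma$ in \rnext. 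From here the passage through \sr, through prestate elimination, and then through the $\tabb{}$-tree condition at $w$ is word for word as in item~1.

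The only genuinely delicate point will be the bookkeeping in item~2: one has to verify directly, from the formulas for $N(\sigma)$ and $\co(\sigma)$ together with the selection rule defining $\covect{A'_q}[\crochetn{A'_q}\psi_q]$, that $\psi_q$ is inserted into $\Gamma_\sigma$ for \emph{precisely} those profiles $\sigma$ that are consistent with this co-action. A secondary point to keep in mind is that the tableau is a \emph{non-deterministic} CGS, so a single profile $\sigma$ may lead from $\Delta$ to several distinct successor states; the argument above is deliberately uniform over all $\sigma$-successors of $\Delta$, which is why we prove ``every $\sigma$-successor contains $\varphi$'' rather than selecting one.
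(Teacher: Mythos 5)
Your item~1 and the $A\subsetneq\agents$ part of item~2 match the paper's argument: take $\sigma_A[\diaAn\varphi]$ (resp.\ the co-action fixed in the notation before the lemma), observe via \rnext that $\varphi$ lands in the successor prestate $\Gamma_\sigma$ for every profile $\sigma$ extending it, and propagate through \sr and prestate elimination into every $\sigma$-successor state, hence into the colour of the corresponding child in the $\tabb{}$-tree. That part is fine, and your remark that the argument must be uniform over all $\sigma$-successors because the tableau is a non-deterministic CGS is exactly the right care to take.

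The genuine gap is that you silently assume $A\subsetneq\agents$ in item~2 (``$\crochetAn\varphi$ occurs in $\mathbb{L}$ as \dots\ with $A'_q=A\subsetneq\agents$''), whereas the lemma is stated for arbitrary $A$ and is applied (via condition H5 of Hintikka structures) with no such restriction. When $A=\agents$ your argument does not even get started: by the definition of primitive successor formulae, $\crochetn{\agents}\varphi$ is \emph{not} a primitive successor formula and does not occur in $\mathbb{L}$ at all, so there is no index $q$ and no $\covect{A'_q}[\cdot]$ to invoke. The paper handles this as a separate case: $\crochetn{\agents}\varphi$ is an $\alpha$-formula whose $\alpha$-components are both $\diamsn{\emptyset}\varphi$, so full expansion puts $\diamsn{\emptyset}\varphi$ into $\Delta$; then the $\emptyset$-action argument of item~1 shows $\varphi$ belongs to \emph{every} successor of $\Delta$, and the (unique, trivial) co-$\agents$-action does the job. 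You should add this case. A secondary, lesser point: you lean on the notation's assertion that a co-action with $\co(\sigma)=q$ and $\agents-A'_q\subseteq N(\sigma)$ exists for every $\sigma_{A'_q}$; the paper does not take this for granted but exhibits the completion explicitly (choose one $b\in\agents-A$, set $\sigma'_b=((q-\co(\sigma_A))\bmod l)+m$ and $\sigma'_{a'}=m$ for the remaining agents outside $A$), and that small arithmetic is what actually discharges the ``bookkeeping'' you flag at the end.
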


\begin{proof}
 We recall that all successor formulae of $\Delta \in \settab{}$ are ordered at the application of the \rnext\ to $\Delta$.

(1) Suppose that $\diaAn\varphi \in \Delta$. Then the required $A$-action is $\sigma_A[\diaAn\varphi]$. Indeed, it immediately follows from the \rnext\ that for every $\sigma\sqsupseteq\sigma_A$ in the pretableau $\pretab$, if $\Delta\stackrel{\sigma}{\longrightarrow}\Gamma$, then $\varphi\in\Gamma$ and $\varphi \in \Delta'$ since $\Delta'$ is a full expansion of $\Gamma$. The statement (1) of the lemma follows.

(2) Suppose that $\crochetAn\varphi \in \Delta$. There are two cases to consider:

case 1: $A\neq\agents$. We consider an arbitrary $\sigma_A \in \vectorsCoalatstate{A}{\Delta}$. Then $\sigma_A$ can be extended to a action vector $\sigma'\sqsupseteq\sigma$. Let $N(\sigma_A)$ be the set $\{i\mid\sigma_A(i) \geq m\}$, where $m$ is the number of enforceable successor formulae in $\Delta$, and let $\co(\sigma_A) = \left(\sum_{i\in N(\sigma_A)}(\sigma_A(i)-m))\right)\bmod{l}$, where $l$ is the number of unavoidable successor formulae in $\Delta$. Now, we consider $\sigma'\sqsupseteq\sigma_A$ defined as follows: $\sigma'_b = ((q - \co(\sigma_A))\bmod{l}+m$ and $\sigma'_{a'}=m$ for any $a' \in \agents - (A \cup \{b\})$, where $b \in \agents - A$. Thus, we have $\agents - A \subseteq N(\sigma)$ and also $\co(\sigma') = (\co(\sigma_A) + (q - \co(\sigma_A)))\bmod l) + m = q$. Therefore, for this arbitrarily chosen $\sigma_A$ there exists at least one state, say $\Delta'$, such that $\Delta \stackrel{\sigma}{\longrightarrow}\Delta'$ and $\varphi\in\Delta'$. 

Case 2: $A=\agents$. Then, by virtue of (H2), $\diamsn{\emptyset}\neg\varphi\in\Delta$ and thus, by the \rnext, $\neg\varphi\in\Gamma$ for every successor $Gamma$ of $\Delta$. Then, $\neg\varphi\in\Delta'$ for every $\Delta'$ that is a successor of $\Delta$ in $\tabb{}$ and hence the colouring set of every leaf of $\mathcal{T}$. Then, the (unique) co-$\agents$-actions, which is an identity function, has the required properties.

The statement (2) of the lemma follows. \qed
\end{proof}

\begin{theorem}
 If $\tabb{}$ is open, then $\initf$ is satisfiable.
\end{theorem}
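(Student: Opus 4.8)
The plan is to show that the final structure $\mathfrak{F}$ produced by Steps~1--4 of the construction, read as a state-labelled deterministic CGS in the obvious way (inheriting agents, actions, availability maps and transitions from the tableau $\tabb{}$, with state-labelling $H(w):=c(w)$), is a concurrent game Hintikka structure for $\initf$; Theorem~\ref{thm:FromCgsToModel} then immediately converts it into a CGM satisfying $\initf$, which is exactly what we need.

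First I would fix the starting point: since $\tabb{}$ is open, some tableau state $\Delta\in\settab{}$ contains $\initf$, and Step~3 builds $\mathfrak{F}$ with root $r$ coloured by such a $\Delta$, so $\initf\in H(r)$. I would then check that $\mathfrak{F}$ is a well-defined \emph{deterministic} CGS: by the definition of a $\tabb{}$-tree every node $w$ with $c(w)=\Delta$ has, for each successor state $\Delta'$ of $\Delta$ in $\tabb{}$, exactly one outgoing edge labelled by the set $\{\sigma\mid\Delta\stackrel{\sigma}{\longrightarrow}\Delta'\}$, and these sets partition $\vectorsatstate{\Delta}$; hence every action profile $\sigma$ at $w$ determines a unique child. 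Moreover Step~4 removes all dead-ends, so the transition relation is total and functional, i.e.\ $\mathfrak{F}$ is deterministic.

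Next I would verify the Hintikka conditions H1--H6 of Definition~\ref{def:hintikka} at every node of $\mathfrak{F}$. Conditions H1--H4 are local and follow because each colouring state is a full expansion of some prestate: it is not patently inconsistent (H1) and is closed under taking $\alpha$-, $\beta$- and $\gamma$-components (H2, H3, H4 respectively). Condition H5 is exactly Lemma~\ref{lem:H5}, applied to the $\tabb{}$-tree component of $\mathfrak{F}$ rooted at the node in question. The only substantive point, and the main obstacle, is H6: for every node $w$ of $\mathfrak{F}$ with colour $\Delta$ and every potential eventuality $\xi\in\Delta$ there must be a realization witness tree for $\xi$ rooted at $w$ inside $\mathfrak{F}$. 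Here the argument is that the construction is finite --- there are finitely many grid cells $\grid(i,\Delta_j)$, each a finite tree (a simple one by Lemma~\ref{lem:simpletree}, a realizing one by Lemma~\ref{lem:H6}, the latter available since $\tabb{}$ is open so every eventuality is realized in it), the queue $\mathcal{Q}$ is a single cyclic pass through all $m$ eventualities, and Step~4 folds remaining dead-ends back onto already-present copies of the appropriate components. Then one shows that whenever a node $w$ coloured $\Delta$ with $\xi\in\Delta$ occurs in $\mathfrak{F}$, travelling downward through the successively grafted components one reaches, within at most one full cycle of $\mathcal{Q}$, a node to which the realizing $\tabb{}$-tree $\grid(\xi,\cdot)$ for $\xi$ has been attached; the Step~4 priority rule is what guarantees that a dead-end folded back is identified with the root of a component serving the same active eventualities, so no eventuality is lost along any path. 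The realization witness subtree contained in that realizing $\tabb{}$-tree, planted at the appropriate descendant of $w$, assembles into a realization witness tree for $\xi$ rooted at $w$. The delicate bookkeeping --- tracking which eventualities are ``active'' at each grafted component and confirming that the cyclic queue together with the Step~4 loop-backs serves all of them at every reachable node, while keeping the resulting structure finite --- is where the real work lies.

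Finally, since $\mathfrak{F}$ satisfies H1--H6 and $\initf\in H(r)$, $\mathfrak{F}$ is a CGHS for $\initf$; by Theorem~\ref{thm:FromCgsToModel} the CGM $\m$ obtained from $\mathfrak{F}$ by restricting labels to $\prop$ satisfies $\initf$ at the state $r$, so $\initf$ is satisfiable.
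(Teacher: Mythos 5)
Your proposal is correct and follows essentially the same route as the paper: build $\mathfrak{F}$ from the grid of simple and realizing $\tabb{}$-trees, check H1--H4 from full expansion, H5 from Lemma~\ref{lem:H5}, H6 from the cyclic queue plus the Step~4 fold-back, and conclude via Theorem~\ref{thm:FromCgsToModel}. The ``delicate bookkeeping'' for H6 that you flag is handled in the paper at the same level of detail (an eventuality or its descendant persists in every dead-end until its turn in the queue), so there is no substantive divergence.
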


\begin{proof}
The structure $\mathfrak{F}$ constructed from $\tabb{}$ is a Hintikka structure. Indeed, H1-H4 of Definition \ref{def:hintikka} are satisfied since the nodes of $\mathfrak{F}$ are nodes of $\tabb{}$. H5 of the same definition essentially follows from Lemma \ref{lem:H5}. Whenever a node $w$ of $\mathfrak{F}$ contains a potential eventuality $\xi$, this means that this eventuality will stay in the queue (see construction of $\mathfrak{F}$ above) until realized. Moreover, if the $\tabb{}$-tree $\treeW$ chosen to complete $\mathfrak{F}$ from $w$ does not realize $\xi$, either $\xi$ or one of its descendants is present in each newly generated dead-end of $\mathfrak{F}$. So, when it is the turn to realize $\xi$ we add to each dead-end state the realizing $\tabb{}$-tree for $\xi$. This, together with Lemma \ref{lem:H5}, guarantees that there exists a realization witness tree for $\xi$ on $\mathfrak{F}$ at $w$. Thus, H6 of Definition \ref{def:hintikka} is satisfied, too. 

By construction, the structure $\mathfrak{F}$ is a concurrent game Hintikka structure
for $\initf$, thus  Theorem \ref{thm:FromCgsToModel} can be applied to obtain from it a model for $\initf$. Thus  $\initf$ is satisfiable.
 \qed
\end{proof}

\subsection{Complexity}
\label{sec:complexity}

\begin{theorem}
The tableau procedure for \ATLp runs in \twoexptime. 
\end{theorem}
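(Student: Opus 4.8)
The plan is to trace the three phases of the procedure and bound, in terms of $n:=|\initf|$, both the size of the objects built and the time spent on each phase. The starting point is Lemma~\ref{lem:closure}, which gives $\|cl(\initf)\| < 2^{n^{2}}$, so the closure set is of at most exponential size. Since every state and every prestate is identified with a subset of $cl(\initf)$, there are at most $2^{\|cl(\initf)\|} < 2^{2^{n^{2}}}$ possible labels of each kind; hence $\pretab$, $\tabb{0}$ and every intermediate $\tabb{n}$ have at most doubly exponentially many nodes in $n$.

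I would next bound the cost of the pretableau construction. Each application of \sr to a prestate $\Gamma$ produces the family $FE(\Gamma)$ of full expansions, of size at most $2^{\|cl(\initf)\|}$, each computed by the iterative saturation procedure in time polynomial in $2^{\|cl(\initf)\|}$. Each application of \rnext to a state $\Delta$ forms the list $\mathbb{L}$ of primitive successor formulae, of length $r_{\Delta} \le \|cl(\initf)\|$, and then the index set $D(\Delta)=\{0,\dots,r_{\Delta}-1\}^{|\agents|}$, which has at most $\|cl(\initf)\|^{|\agents|} \le (2^{n^{2}})^{n} = 2^{n^{3}}$ elements (recall $|\agents| \le n$ under tight satisfiability); each successor prestate is then read off in time polynomial in $2^{n^{2}}$. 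Thus every rule application runs in time $2^{O(n^{3})}$ and creates at most $2^{O(n^{3})}$ new nodes, and since no label is ever created twice the alternation of \sr and \rnext halts after at most $2^{2^{n^{2}}}$ applications. Hence $\pretab$ is built in time $2^{2^{O(n^{2})}}$ and has doubly exponential size; the prestate elimination phase, a mere redirection of edges, then yields $\tabb{0}$ in time polynomial in $|\pretab|$.

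The heart of the argument is the state elimination phase. Each stage removes exactly one state, so there are at most $\|\settab{0}\| \le 2^{2^{n^{2}}}$ stages, and at each stage I must decide whether \eun or \edeux applies to some surviving state. For \eun this amounts, for each $\Delta$ and each $\sigma \in D(\Delta)$, to scanning the current (doubly exponential) tableau for a surviving $\sigma$-successor, well within $2^{2^{O(n^{2})}}$ time. For \edeux one must decide, for each potential eventuality $\xi \in \Delta$, whether $\xi$ is realized at $\Delta$ in the current $\tabb{n}$ in the sense of Definition~\ref{def:realEvent}. This is where I expect the main obstacle, since that definition is recursive and is phrased through realization witness trees, which a priori could be unwieldy. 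The key observation to be made precise is that realization is a least fixpoint over the finite graph $\tabb{n}$: the descendant potential eventualities of any $\xi$ all lie among the (at most $\|cl(\initf)\|$ many) $\gamma$-formulae of $cl(\initf)$, so one iteratively marks a pair $(\Delta,\zeta)$ with $\zeta\in\Delta$ a $\gamma$-formula, say $\zeta = \diaA\Phi$ or $\zeta = \crochetA\Phi$, as soon as either $\fonctA{\Phi}{\Delta}=true$ or, for every relevant action profile at $\Delta$, some already-marked successor carries the successor eventuality $\zeta^{1}_{\Delta}$. This marking stabilises after at most $\|\settab{n}\| \cdot \|cl(\initf)\|$ rounds, each round a single scan of the tableau, hence in time $2^{2^{O(n^{2})}}$, and $\xi$ is realized at $\Delta$ precisely when $(\Delta,\xi)$ is finally marked. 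Consequently each elimination stage costs $2^{2^{O(n^{2})}}$, and with at most $2^{2^{n^{2}}}$ stages the whole elimination phase -- together with the trivial openness test and, if the model is wanted, the grid assembly of Section~\ref{subsec:completeness}, whose ingredients (the realizing $\tabb{}$-trees, of depth at most $\|cl(\initf)\|$ and branching at most $|D(\Delta)|$, and the grid) are themselves of at most doubly exponential size -- runs in time $2^{2^{O(n^{2})}}$, that is, in \twoexptime. Since by the lower bound inherited from \CTLp the satisfiability problem for \ATLp is already \twoexptime-hard, this bound is also worst-case optimal.
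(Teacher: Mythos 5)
Your proof is correct and follows essentially the same route as the paper's: the single exponential blow-up from $|\initf|$ to $\|cl(\initf)\|$ (Lemma~\ref{lem:closure}) composed with the exponential bound, in $\|cl(\initf)\|$, on the number of tableau states, together with the observation that the construction and elimination phases introduce no further exponential blow-up. The paper merely defers the phase-by-phase accounting to the \ATL tableau reference, whereas you carry it out explicitly (including the least-fixpoint computation of eventuality realization), so your write-up is a more self-contained rendering of the same argument.
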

\begin{proof}
The argument generally follows the calculations computing the complexity of the tableau method for \ATL in Section 4.7 of \cite{GorankoShkatov09ToCL}, with one essential difference: $\| cl(\initf)\|$ for any \ATL formula $\initf$ is linear in its length $|\initf |$,  
whereas $\| cl(\initf)\|$ 
for an \ATLp formula $\initf$ can be exponentially large in $|\initf |$, as shown after Lemma \ref{lem:closure}. This exponential blow-up, combined with the worst-case exponential in $\| cl(\initf)\|$ number of states in the tableau, accounts for the 
\twoexptime\ worst-case complexity of the tableau method for \ATLp, which is the expected optimal lower bound. It is also an upper bound for the tableau method, because no further exponential blow-ups occur in the prestate- and state-elimination phases. \qed
\end{proof}

There are various ways to restrict or parametrize the set of \ATLp formulae in order to avoid the exponential blow-up of their closure sets.  As suggested by the example after Lemma \ref{lem:closure}, the main cause for that blow-up of the number of $\gamma$-components of $\gamma$-formulae $\varphi = \diaA\Phi$ or $\varphi = \crochetA\Phi$ in \ATLp is the nesting of conjunctions and disjunctions in the path 
formula $\Phi$ which are not separated by 
temporal operators. Let us call the number of such nestings the \emph{superficial Boolean depth} of $\Phi$ and denote it by $\delta_{0}(\Phi)$. Then, let the \emph{nested  Boolean depth} of any \ATLp formula $\Psi$, denoted $\delta(\Psi)$, be the maximal superficial Boolean depth $\delta_{0}(\Phi)$ of a path sub-formula $\Phi$ of $\Psi$. For instance, 
$\delta(\ddiams{1}(p \until \lnot q)) = 0$, 
$\delta(\diams{1}(\Box p \lor ((q \land p)\until \lnot q)) = 1$, $\delta(\diams{1}(\Diamond q \land (\Box p \land (q \until \lnot q))) = 2$. 
Now, if this number for a formula $\initf $ is bounded above, the size of the closure 
$\|\initf \|$ becomes polynomially bounded in $|\initf |$ because the nesting of $\land$ and $\lor$ when they are separated by a temporal operator does not have a multiplicative effect on the number of $\gamma$-components. Consequently, the complexity of the tableau method is reduced to single exponential time, caused only by the maximal possible number of states in the tableau, just like in \ATL. Thus, we have the following. 

\begin{proposition}
The tableau procedure for \ATLp applied to a class of \ATLp formulae of bounded nested Boolean depth  runs in \exptime. 
\end{proposition}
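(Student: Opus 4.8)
The plan is to reduce the statement to a bound on the size of the closure and then to reuse the complexity bookkeeping already used for the \twoexptime\ bound. Fix a constant $d$ and assume $\delta(\initf) \le d$. The only place where the restricted input class differs from the general one is the size of $cl(\initf)$: once we establish $\| cl(\initf) \| = \mathrm{poly}(|\initf|)$, the number of states of the tableau is at most $2^{\| cl(\initf) \|} = 2^{\mathrm{poly}(|\initf|)}$, i.e.\ single exponential, and — exactly as in the proof of the \twoexptime\ bound and in Section 4.7 of \cite{GorankoShkatov09ToCL} — neither the pretableau construction nor the prestate- and state-elimination phases incur a further exponential blow-up, being polynomial in the number of states. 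Hence the whole procedure runs in \exptime, and all that remains is the closure bound.

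First I would bound $|\fg(\Phi)|$ for every path subformula $\Phi$ of $\initf$ by a function of $\delta_0(\Phi)$ alone. Consider the \emph{Boolean skeleton} of $\Phi$: the binary tree whose internal nodes are the top-level occurrences of $\wedge,\lor$ between path formulae and whose leaves are the maximal subformulae of $\Phi$ of the form $\varphi$, $\rond\varphi$, $\Box\varphi$ or $\varphi\until\psi$; its depth is $\delta_0(\Phi)$, so it has at most $2^{\delta_0(\Phi)}$ leaves. On the leaves $|\fg(\cdot)| \le 2$ by the base cases of the definition of $\fg$; the recursive clauses give $|\fg(\Phi_1 \wedge \Phi_2)| = |\fg(\Phi_1)| \cdot |\fg(\Phi_2)|$ and $|\fg(\Phi_1 \lor \Phi_2)| \le |\fg(\Phi_1)| \cdot |\fg(\Phi_2)| + |\fg(\Phi_1)| + |\fg(\Phi_2)| \le (|\fg(\Phi_1)|+1)(|\fg(\Phi_2)|+1)$. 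Hence, by induction on the skeleton, $|\fg(\Phi)| \le f(\delta_0(\Phi))$ where $f(0)=2$ and $f(i) = (f(i-1)+1)^2$. Since $d$ is constant, $f(d)$ is constant, so every $\gamma$-formula in $cl(\initf)$ whose path argument has superficial Boolean depth $\le d$ has at most $f(d) = O(1)$ $\gamma$-components.

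Next I would control the \emph{number of distinct formulae} in $cl(\initf)$ — it is not enough that each is short, since the crude count of Lemma \ref{lem:closure} is still exponential. The key structural observation is that $\gamma$-decomposition never enlarges the atomic ingredients of a $\gamma$-formula: for $\diaA\Phi$ (symmetrically $\crochetA\Phi$), every pair $\langle \psi, \Psi\rangle \in \fg(\Phi)$ has $\psi$ a conjunction of at most $2^{\delta_0(\Phi)}$ state subformulae of $\Phi$, and $\Psi$ a Boolean combination, whose skeleton is a pruning of that of $\Phi$ (hence of depth $\le \delta_0(\Phi)$), of ``pieces'' each of which is one of the temporal subformulae $\Box\varphi$, $\varphi\until\psi$ of $\Phi$, a state subformula of $\Phi$, or $\top$. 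These pieces recur to themselves under further $\fg$-decomposition of $\diaA\Psi$, so, letting $\mathcal{A}$ be the linearly-sized pool of all state subformulae and all such temporal pieces of path subformulae of $\initf$, together with $\top,\bot$, one shows by induction on the closure-generation process the shape invariant: every element of $cl(\initf)$ is a literal or $\top$/$\bot$; a state subformula of $\initf$; a conjunction of $\le 2^d$ members of $\mathcal{A}$; or one of $\diaA\Psi$, $\crochetA\Psi$, $\ddiaA\Psi$, $\dcrochetA\Psi$, $\psi \wedge \ddiaA\Psi$, $\psi \wedge \dcrochetA\Psi$ where $\Psi$ is a Boolean combination of members of $\mathcal{A}$ whose skeleton has depth $\le d$. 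For constant $d$, the number of instances of each shape is at most $|\mathcal{A}|^{O(2^d)}$ times a constant (binary trees of depth $\le d$, hence $\le 2^d$ leaves, labelled from $\mathcal{A}$), i.e.\ polynomial in $|\initf|$; summing over the finitely many shapes gives $\| cl(\initf)\| = \mathrm{poly}(|\initf|)$, which completes the argument.

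I expect the main obstacle to be precisely this last step: verifying that the shape invariant survives the interplay of $\gamma$-decomposition with the $\alpha$- and successor-component rules (notably that $\ddiaA\Psi$ contributes the successor component $\diaA\Psi$, whose $\gamma$-components are again of an allowed shape and, crucially, of non-increasing Boolean-skeleton depth and drawn from the same fixed pool $\mathcal{A}$), and extracting from it a counting bound that is genuinely polynomial rather than merely sub-exponential.
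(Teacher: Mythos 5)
Your proposal is correct and follows the same route as the paper: bounded nested Boolean depth gives a polynomially bounded closure, hence a single-exponential bound on the number of tableau states, and the remaining phases add no further blow-up. The paper merely asserts the polynomial bound on $\| cl(\initf)\|$ (arguing that Boolean nesting separated by temporal operators has no multiplicative effect on the number of $\gamma$-components), so your explicit skeleton-depth induction for $|\fg(\Phi)|$ and the shape invariant over the pool $\mathcal{A}$ supply details the paper leaves implicit, and they check out.
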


\begin{corollary}
The tableau procedure for \ATLp applied to \ATL formulae runs in \exptime. 
\end{corollary}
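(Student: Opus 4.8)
The plan is to obtain the corollary as an immediate instance of the preceding Proposition, once we observe that every \ATL formula, read as an \ATLp formula, has nested Boolean depth equal to $0$. Thus the only thing to check is that the syntactic restriction hypothesized in the Proposition is automatically satisfied on the class of \ATL inputs, and that running the tableau procedure of Section~\ref{sec:tab} on such an input is literally running the \ATLp procedure.

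First I would recall from the grammar in Section~\ref{subsec:ATL} that an \ATL-path formula is exactly one of $\rond\varphi$, $\Box\varphi$, $\varphi\until\psi$ with $\varphi,\psi$ state formulae; in particular, no \ATL-path formula contains $\land$ or $\lor$ as a connective joining path formulae. Hence, for any \ATL formula $\initf$ and any path sub-formula $\Phi$ of $\initf$, the superficial Boolean depth $\delta_{0}(\Phi)$ is $0$, and therefore $\delta(\initf)=0$. So the class of all \ATL formulae is a class of \ATLp formulae of bounded nested Boolean depth, and the Proposition applies verbatim, yielding the \exptime\ bound.

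The single point that might look like it needs checking is that the $\gamma$-decomposition of Section~\ref{sec:decomp} degenerates, on \ATL-path formulae, to the classical $\alpha/\beta$ analysis used for \ATL in \cite{GorankoShkatov09ToCL}: each of $\fg(\rond\varphi)$, $\fg(\Box\varphi)$, $\fg(\varphi\until\psi)$ already has at most two pairs and involves no $\otimes$ or $\oplus$ step, so a $\gamma$-formula $\diaA\Phi$ or $\crochetA\Phi$ built from an \ATL-path formula $\Phi$ has at most two $\gamma$-components, and consequently $\| cl(\initf)\|$ stays linear in $|\initf|$. This is precisely the situation already covered inside the proof of the Proposition, so I do not expect any real obstacle: the only mild care needed is to read the definition of $\delta_{0}$ so that it genuinely returns $0$ on path formulae with no Boolean connectives, which the \ATL grammar makes unambiguous. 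The corollary then follows. \qed
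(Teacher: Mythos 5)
Your proposal is correct and matches the paper's intended argument: the corollary is an immediate instance of the preceding Proposition, since the \ATL grammar admits no Boolean connectives between path formulae and hence every \ATL formula has nested Boolean depth $0$. The additional observation that the $\gamma$-decomposition degenerates to the \ATL-style $\alpha/\beta$ analysis is accurate but not needed beyond what the Proposition already provides.
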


\section{Concluding remarks}
\label{sec:concluding}

Here we have developed sound, complete and terminating tableau-based decision method for constructive satisfiability testing of \ATLp formulae and have argued for its practical usability and implementability. The method is amenable to further extension to the full \ATLs, but this is left to future work. 

Some comparison with the automata-based method for satisfiability testing in \ATLs, presented in \cite{Schewe08} are in order. The two methods appear to be quite different and, though eventually working in the same worst-case complexity, the double exponential blow-ups seem to occur in different ways, namely, in the automata-based method, one exponential blow-up occurs in converting the formula into an automaton,
 while the other is in the time complexity of checking non-emptiness of the resulting automaton. It would be instructive to  compare the practical implications and efficiency of both methods and we leave such systematic comparison to the future, when (hopefully) both methods are  implemented. For now, we only mention that the formula $\theta$ from our running example, the tableau for which is worked out
 explicitly and in detail in this paper, is translated with the method from \cite{Schewe08} into an automaton with $2^{12}$ alphabet symbols and over 100 states. Of course, this comparison cannot serve 
as an argument for general practical superiority in efficiency of the tableau-based method. Still, the technical details of both methods, illustrated in that example, indicate that, while the worst-case
 exponential blow-ups are bound to occur in both methods, they seem to be more controllable and avoidable in the tableau-based method, at the expense of its lesser automaticity and higher degree of user control. Thus, we would argue that both methods have generally incomparable pros and cons, and consequently are of independent interest, both theoretically and practically.      

\medskip
\textbf{Acknowledgements.} We thank the anonymous reviewers of  \cite{CDG-IJCAR2014} and of this paper for their helpful remarks and suggestions and for several corrections. 

\bibliographystyle{splncs03}
\bibliography{VG-ATL}

\begin{thebibliography}{10}
\providecommand{\url}[1]{\texttt{#1}}
\providecommand{\urlprefix}{URL }

\bibitem{AHK02}
Alur, R., Henzinger, T.A., Kupferman, O.: Alternating-time temporal logic.
  Journal of the ACM  49(5),  672--713 (2002)

\bibitem{BPM83}
Ben-Ari, M., Pnueli, A., Manna, Z.: The temporal logic of branching time. Acta
  Informatica  20,  207--226 (1983)

\bibitem{CDG-IJCAR2014}
Cerrito, S., David, A., Goranko, V.: Optimal tableaux-based decision procedure
  for testing satisfiability in the alternating-time temporal logic {ATL+}. In:
  Proc. of IJCAR'2014. LNCS, vol. 8562, pp. 277--291. Springer (2014)

\bibitem{DBLP:conf/tableaux/David13}
David, A.: {TATL}: Implementation of {ATL} tableau-based decision procedure.
  In: Proc. of TABLEAUX'2013, Springer LNCS 8123. pp. 97--103 (2013)

\bibitem{vanDrimmelen03}
van Drimmelen, G.: Satisfiability in alternating-time temporal logic. In:
  Proceedings of the Eighteenth Annual IEEE Symposium on Logic in Computer
  Science (LICS 2003). pp. 208--217. IEEE Computer Society Press (June 2003)

\bibitem{EmHal85}
Emerson, E., Halpern, J.: Decision procedures and expressiveness in the
  temporal logic of branching time. J. of Computation and System Sciences
  30(1),  1--24 (1985)

\bibitem{Emerson90}
Emerson, E.A.: Temporal and modal logics. In: van Leeuwen, J. (ed.) Handbook of
  Theoretical Computer Science, vol.~B, pp. 995--1072. MIT Press (1990)

\bibitem{GorDrim06}
Goranko, V., van Drimmelen, G.: Complete axiomatization and decidablity of
  {A}lternating-time temporal logic. Theor. Comp. Sci.  353,  93--117 (2006)

\bibitem{GorankoShkatov09ToCL}
Goranko, V., Shkatov, D.: Tableau-based decision procedures for logics of
  strategic ability in multiagent systems. ACM Trans. Comput. Log.  11(1),
  1--49 (2009)

\bibitem{Johannsen&Lange2003}
Johannsen, J., Lange, M.: {CTL}+ is {C}omplete for {D}ouble {E}xponential
  {T}ime. In: Proc. of ICALP'03. LNCS, vol. 2719, pp. 767--775. Springer (2003)

\bibitem{Pratt80}
Pratt, V.R.: A near optimal method for reasoning about action. Journal of
  Computer and System Sciences  20,  231--254 (1980)

\bibitem{Schewe08}
Schewe, S.: {ATL}* satisfiability is 2{EXPTIME}-complete. In: Proc. of ICALP
  (Part 2). LNCS, vol. 5126, pp. 373--385. Springer (2008)

\bibitem{WLWW06}
Walther, D., Lutz, C., Wolter, F., Wooldridge, M.: {A}{T}{L} satisfiability is
  indeed {E}xp{T}ime-complete. Journal of Logic and Computation  16(6),
  765--787 (2006)

\bibitem{Wolper85}
Wolper, P.: The tableau method for temporal logic: an overview. Logique et
  Analyse  28(110--111),  119--136 (1985)

\end{thebibliography}

\end{document}